\newcommand{\ignore}[1]{}
\newcommand{\kibitz}[2]{\ifnum\Comments=1\textcolor{#1}{#2}\fi}
\def\thm@space@setup{%
  \thm@preskip=\parskip \thm@postskip=2pt
}
\tikzset{cross/.style={cross out, draw=black, minimum size=2*(#1-\pgflinewidth), inner sep=0pt, outer sep=0pt},
cross/.default={2pt}}
\DeclareMathOperator{\sgn}{sgn}
\DeclareMathOperator*{\E}{\mathbb{E}}
\let\Pr\relax
\DeclareMathOperator*{\Pr}{\mathbb{P}}
\DeclarePairedDelimiter\abs{\lvert}{\rvert}
\DeclarePairedDelimiter\norm{\lVert}{\rVert}
\newcommand{\eps}{\varepsilon}
\newcommand{\bbR}{\mathbb{R}}
\newcommand{\bbN}{\mathbb{N}}
\newcommand{\vu}{\mathbf{u}}
\newcommand{\vx}{\mathbf{x}}
\newcommand{\vy}{\mathbf{y}}
\newcommand{\vz}{\mathbf{z}}
\newcommand{\tp}{\tilde{p}_\delta}
\newcommand{\Prin}{\mathbb{P}_t^{\texttt{\upshape in}}}
\newcommand{\dBMR}{\texttt{\upshape{$\delta$-BMR}}\xspace}
\newcommand{\alphast}{\alpha^{\star}}
\newcommand{\vr}{\mathbf{r}}
\newcommand{\dist}{\texttt{sdist}}
\newcommand{\hy}{\hat{y}}
\newcommand{\up}{\underline{p}_\delta}
\newcommand{\calX}{\mathcal{X}}
\newcommand{\calP}{\mathcal{P}}
\newcommand{\bcalP}{\overline{\mathcal{P}}}
\newcommand{\calA}{\mathcal{A}}
\newcommand{\calB}{\mathcal{B}}
\newcommand{\calD}{\mathcal{D}}
\newcommand{\calN}{\mathcal{N}}
\newcommand{\calR}{\mathcal{R}}
\newcommand{\calO}{\mathcal{O}}
\newcommand{\tPr}{\widetilde{\Pr}}
\newcommand{\hell}{\hat{\ell}}
\newcommand{\talpha}{\tilde{\alpha}}
\newcommand{\1}{\mathbbm{1}}
\newcommand{\Bern}{\mathtt{Bern}}
\newcommand{\f}{f_{t}}
\newcommand\numberthis{\addtocounter{equation}{1}\tag{\theequation}}
\newcommand{\grinder}{\textsc{Grinder}\xspace}
\newcommand{\vw}{\mathbf{w}}
\newcommand{\kl}{\texttt{KL}}
\newcommand{\squishlist}{
   \begin{list}{$\bullet$}
    { \setlength{\itemsep}{0pt}      \setlength{\parsep}{3pt}
      \setlength{\topsep}{3pt}       \setlength{\partopsep}{0pt}
      \setlength{\leftmargin}{1.5em} \setlength{\labelwidth}{1em}
      \setlength{\labelsep}{0.5em} } }
\newcommand{\squishend}{  \end{list}  }
\newtheorem{theorem}{Theorem}[section]
\newtheorem{lemma}[theorem]{Lemma}
\newtheorem{definition}[theorem]{Definition}
\newtheorem{proposition}[theorem]{Proposition}
\begin{document}
\title{Learning Strategy-Aware Linear Classifiers}

\author{Yiling Chen \\
        Harvard University \\
        \texttt{yiling@seas.harvard.edu} \\  
        \and Yang Liu \\ 
        UC Santa Cruz \\
        \texttt{yangliu@ucsc.edu} \\
        \and Chara Podimata \\ 
        Harvard University \\ 
        \texttt{podimata@g.harvard.edu}\\
        }

\maketitle

\begin{abstract}%
We address the question of repeatedly learning linear classifiers against agents who are \emph{strategically} trying to \emph{game} the deployed classifiers, and we use the \emph{Stackelberg regret} to measure the performance of our algorithms. First, we show that Stackelberg and external regret for the problem of strategic classification are \emph{strongly incompatible}: i.e., there exist worst-case scenarios, where \emph{any} sequence of actions providing \emph{sublinear} external regret might result in \emph{linear} Stackelberg regret and vice versa. Second, we present a strategy-aware algorithm for minimizing the Stackelberg regret for which we prove nearly matching upper and lower regret bounds. Finally, we provide simulations to complement our theoretical analysis. Our results advance the growing literature of learning from revealed preferences, which has so far focused on ``smoother'' assumptions from the perspective of the learner and the agents respectively. 

\end{abstract}

\section{Introduction}\label{sec:intro}

As Machine Learning (ML) algorithms become increasingly involved in real-life decision making, the agents that they interact with tend to be neither stochastic nor adversarial. Rather, they are \emph{strategic}. For example, consider a college that wishes to deploy an ML algorithm to make admissions decisions. Student candidates might try to manipulate their test scores in an effort to fool the classifier. Or think about email spammers who are trying to manipulate their emails in an effort to fool the ML classifier and land in the non-spam inboxes. Importantly, in both examples the agents (students and spammers respectively) do not want to sabotage the classification algorithm only for the sake of harming its performance. They merely want to game it for their own benefit. And this is precisely what differentiates them from being fully adversarial.

Motivated by the problem of classifying spam emails, we focus on the problem of learning an unknown \emph{linear} classifier, when the training data come in an \emph{online} fashion from \emph{strategic} agents, who can alter their feature vectors to \emph{game} the classifier. We model the interplay between the learner and the strategic agents\footnote{We refer to the learner as a female (she/her/hers) and to the agents as male (he/his/him).} as a repeated \emph{Stackelberg game} over $T$ timesteps. In a repeated Stackelberg game, the learner (``leader'') \emph{commits} to an action, and then, the agent (``follower'') best-responds to it, i.e., reports something that maximizes his underlying utility. The learner's goal is to minimize her \emph{Stackelberg regret}, which is the difference between her cumulative loss and the cumulative loss of her best-fixed action in hindsight, \emph{had she given the agent the opportunity to best-respond to it}.

{\bf Our Contributions.}
\squishlist
\item We study a general model of learning interaction in strategic classification settings where the agents' true datapoint remains hidden from the learner, the agents can misreport within a ball of radius $\delta$ of their true datapoint (termed \emph{$\delta$-bounded, myopically rational (\dBMR) agents}), and the learner measures her performance using the \emph{binary loss}. This model departs significantly from the smooth utility and loss functions used so far for strategic classification (Sec.~\ref{sec:model}).
\item We prove that in strategic classification settings against \dBMR agents \emph{simultaneously} achieving sublinear external and Stackelberg regret is in general impossible (\emph{strong incompatibility}) (i.e., application of standard no-external regret algorithms might be unhelpful (Sec.~\ref{sec:regret-notions})). 
\item Taking advantage of the structure of the responses of \dBMR agents while working in the \emph{dual} space of the learner, we propose an adaptive discretization algorithm (\grinder), which uses access to an oracle. \grinder's novelty is that it assumes no stochasticity for the adaptive discretization (Sec.~\ref{sec:grinding-algo}). 
\item We prove that the regret guarantees of \grinder remain \emph{unchanged} in order even when the learner is given access to a \emph{noisy} oracle, accommodating more settings in practice (Sec.~\ref{sec:grinding-algo}).
\item We prove nearly matching lower bounds for strategic classification against \dBMR agents (Sec.~\ref{sec:lb}).
\item We provide simulations implementing \grinder both for continuous and discrete action spaces, and using both an accurate and an approximation oracle (Sec.~\ref{sec:sims}). %
\squishend

{\bf Our Techniques.}
\squishlist
\item In order to prove the incompatibility results of the regret notions in strategic classification, we present a formal framework, which may be of independent interest.
\item To overcome the non-smooth utility and loss functions, we work on the \emph{dual} space, which provides information about various \emph{regions} of the learner's action space, \emph{despite never observing the agent's true datapoint}. These regions (polytopes) relate to the partitions that \grinder creates.
\item To deal with the learner's action space being \emph{continuous} (i.e., containing infinite actions), we use the fact that \emph{all actions} within a polytope share the same \emph{history} of estimated losses. So, passing information down to a recently partitioned polytope becomes a simple volume reweighting. 
\item To account for all the actions in the continuous action space, we present a formulation of the standard \texttt{EXP3} algorithm that takes advantage of the polytope partitioning process. 
\item For bounding the variance of our polytope-based loss estimator, we develop a polytope-based variant of a well-known graph-theoretic lemma (\citep[Lem.~5]{NCBDK15}), which has been crucial in the analysis of online learning settings with feedback graphs. Such a variant is mandatory, since direct application of \citep[Lem.~5]{NCBDK15} in settings with continuous action spaces yields vacuous\footnote{Due to the logarithmic dependence on the number of actions.} regret. 
\item We develop a generalization of standard techniques for proving regret lower bounds in \emph{strategic} settings, where the datapoints that the agents report \emph{change} in response to the learner's actions.
\squishend

{\bf Related Work.} Our work is primarily related to the literature on \emph{learning using data from strategic sources} (e.g., \citep{CIL15,CDP15,PP04,DFP10,CPPS18,BPT17,BPT18,WC09,BS11,MAMR11,MPR12}). %
Our work is also related to learning in Stackelberg Security Games (SSGs) (\citep{LCM09,MTS12,BHP14}) and especially, the work of \citet{BBHP15}, who study information theoretic sublinear Stackelberg regret\footnote{Even though the formal definition of Stackelberg regret was only later introduced by \citet{drsww18}.} algorithms for the learner. In SSGs, all utilities are linear, a property not present in strategic classification against \dBMR agents. Finally, our work is related to the literature in online learning with \emph{partial} (see \citep{BCB12,S19,LS19}) and \emph{graph-structured} feedback \citep{NCBDK15,CHK16}. Adaptive discretization algorithms were studied for stochastic Lipschitz settings in \citep{KSU08,BMSS11}, but in learning against \dBMR agents, the loss is neither stochastic nor Lipschitz. %

\section{Model and Preliminaries}\label{sec:model}

We use the spam email application as a running example to setup our model. Each agent has an email that is either a spam or a non-spam. Given a classifier, the agent can alter his email to a certain degree in order to bypass the email filter and have his email be classified as non-spam. Such manipulation is costly. Each agent chooses a manipulation to maximize his overall utility. %

\paragraph{Interaction Protocol.} Let $d \in \bbN$ denote the dimension of the problem %
and $\calA \subseteq [-1,+1]^{d+1}$ the learner's action space\footnote{This is wlog, as the normal vector of any hyperplane can be normalized to lie in $[-1,1]^{d+1}$.}. Actions $\alpha \in \calA$ correspond to hyperplanes%
, written in terms of their normal vectors, and we assume that the $(d+1)$-th coordinate encodes information about the intercept. Let $\calX \subseteq ([0,1]^d,1)$ the feature vector space%
, where by $([0,1]^d,1)$ we denote the set of all $(d+1)$-dimensional vectors with values in $[0,1]$ in the first $d$ dimensions and value $1$ in the $(d+1)$-th. Each feature vector has an associated label $y \in \mathcal{Y} = \{-1,+1\}$.  
Formally, the interaction protocol (which repeats for $t \in [T]$) is given in Protocol~\ref{protocol}, where by $\sigma_t$ we denote the tuple (feature vector, label). 

\begin{algorithm}[htbp]
\DontPrintSemicolon
\SetAlgorithmName{Protocol}{Protocol}
\SetAlgoLined
\caption{Learner-Agent Interaction at Round $t$}\label{protocol} 
Nature adversarially selects feature vector $\vx_t \in \calX \subseteq ([0,1]^d, 1)$. \hspace{-0.05in}\tcp*{agent's original email}
The learner chooses action $\alpha_t \in \calA$, and commits to it. \tcp*{learner's linear classifier}
Agent observes $\alpha_t$ and $\sigma_t = (\vx_t, y_t)$, where $y_t \in \mathcal{Y}$. ~\hspace{-0.1in}\tcp*{$y_t=+1$, if non-spam originally} 
Agent reports feature vector $\vr_t( \alpha_t, \sigma_t ) \in \calX$ (potentially, $\vr_t( \alpha_t, \sigma_t ) \neq \vx_t$). \label{step:protocol-4} \tcp*{altered email}
The learner observes $(\vr_t(\alpha_t, \sigma_t), \hat{y}_t)$, where $\hat{y}_t \in \mathcal{Y}$ is the label of $\vr_t(\alpha_t, \sigma_t)$, and incurs binary loss $\ell(\alpha_t, \vr_t( \alpha_t, \sigma_t ), \hy_t) = \1 \{ \sgn(\hat{y}_t \cdot \langle \alpha_t, \vr_t( \alpha_t, \sigma_t ) \rangle ) = -1 \}$. \tcp*{loss on {\bf altered} email}
\end{algorithm}%

{\bf Agents' Behavior: $\delta$-Bounded Myopically Rational.} Drawing intuition from the email spam example, we focus on agents who can alter their feature vector \emph{up to an extent} in order to make their email fool the classifier, and, if successful, they gain some value. The agent's reported feature vector $\vr_t(\alpha_t, \sigma_t)$ is the solution to the following constrained optimization problem\footnote{For simplicity, we denote $\vr_t(\alpha_t) = \vr_t(\alpha_t, \sigma_t)$ when clear from context.}: 
\begin{align*}
\vr_t(\alpha_t, \sigma_t) = \arg \max_{\|\vz - \vx_t\| \leq \delta} u_t(\alpha_t, \vz, \sigma_t)
\end{align*}
where $u_t(\cdot, \cdot, \cdot)$ is the agent's underlying utility function, \emph{which is unknown to the learner}. In words, in choosing what to report, the agents are \emph{myopic} (i.e., focus only on the current round $t$), \emph{rational} (i.e., maximize their utility), and \emph{bounded} (i.e., misreport in a ball of radius $\delta$ around $\vx_t$). In such settings (e.g., email spam example), the agents derive no value if, by altering their feature vector from $\vx_t$ to $\vr_t(\alpha_t)$, they also change $y_t$. Indeed, a spammer ($y_t = -1$) wishes to fool the learner's classifier, without actually having to change their email to be a non-spam one ($\hat{y}_t = +1$). Since the agents are \emph{rational} this means that the observed label by the learner is $\hy_t = y_t$ and we only use notation $y_t$ for the rest of the paper. We call such agents $\delta$-Bounded Myopically Rational (\dBMR)\footnote{Note that if the agents were \emph{adversarial}, they would report $\vr(\alpha_t) = \arg\max_{\|\vz - \vx_t \| \leq \delta} \ell(\alpha, \vz, y_t)$.}.

Note that \dBMR agents include (but are not limited to!) a broad class of rational agents for strategic classification, like for example agents whose utility is defined as:
\begin{equation}\label{eq:examp-dBMR}
u_t(\alpha_t, \vr_t(\alpha_t, \sigma_t), \sigma_t) = \delta' \cdot \1 \left\{\sgn(\langle \alpha_t, \vr_t(\alpha_t, \sigma_t) \rangle) = +1 \right\} - \|\vx_t - \vr_t(\alpha_t, \sigma_t)\|_2
\end{equation}
where $\sgn(x) = +1$ if $x \geq 0$ and $\sgn(x) = -1$ otherwise. According to the utility presented in Eq.~\eqref{eq:examp-dBMR}, the agents get a value of $\delta'$ if he gets labeled as $+1$ and incurs a cost (i.e., time/resources spent for altering the original $\vx_t$) that is a metric. For this case, we have that $\delta \leq \delta'$.

\paragraph{Model Comparison with Other Strategic Classification Works.} Learning in strategic classification settings was studied in an offline model by \citet{HMPW16}, and subsequently, by \citet{drsww18} in an online model. Similar to our model, in \citep{HMPW16, drsww18} the ground truth label $y_t$ remains \emph{unchanged} even \emph{after} the agent's manipulation. Moreover, the work of \citet{drsww18} is orthogonal to ours in one key aspect: they find the appropriate conditions which can guarantee that the best-response of an agent, written as a function of the learner's action, is concave. As a result, in their model the learner's loss function becomes \emph{convex} and well known online convex optimization algorithms could be applied (e.g., \citep{FKM05,BLE17}) in conjunction with the mixture feedback that the learner receives. The foundation of our work, however, is settings with less smooth utility and loss functions for the agents and the learner respectively, where incompatibility issues arise. There has also been recent interest in strategic classification settings where the agents by misreporting actually end up changing their label $y_t$ \citep{BLWZ20, SEA20, PZMH20}. These models are especially applicable in cases where in order to alter their feature vector $\vx_t$ (e.g., qualifications for getting in college) the agents have to \emph{improve} their ground truth label (e.g., actually \emph{try} to become a better candidate \citep{ustun2019actionable}). In contrast, in our work we think of the misreports as ``manipulations'' that aim at gaming the system without altering $y_t$.

\section{Stackelberg versus External Regret}\label{sec:regret-notions}

For what follows, let $\{ \alpha_t\}_{t=1}^T$ be the sequence of the learner's actions in a repeated Stackelberg game. The full proof of this section can be found in Appendix~\ref{app:regr2}, and Appendices~\ref{app:pure-sg} and~\ref{app:loss-func} include detailed discussions around external and Stackelberg regret for learning in Stackelberg games.

\begin{definition}[External]
$R(T) = \sum_{t \in [T]} \ell(\alpha_t, \vr_t(\alpha_t), y_t) - \min_{\alphast_E \in \calA}\sum_{t \in [T]} \ell(\alphast_E, \vr_t(\alpha_t), y_t)$.
\end{definition} 
The external regret compares the cumulative loss from $\{\alpha_t\}_{t \in [T]}$ to the cumulative loss incurred by the best-fixed action in hindsight, had learner \emph{not} given the opportunity to the agents to best respond. 
\begin{definition}[Stackelberg]\begin{small}
$\calR(T) = \sum_{t \in [T]} \ell(\alpha_t, \vr_t(\alpha_t),y_t) - \min_{\alphast \in \calA}\sum_{t \in [T]} \ell(\alphast, \vr_t(\alphast), y_t)$.\end{small}
\end{definition}
\begin{wrapfigure}{R}{0.35\textwidth}
    \centering
    \begin{tikzpicture}[line cap=round,line join=round,>=latex',x=3cm,y=3cm, scale=1]
        \draw[->,color=black] (-0.1,0.) -- (1.2,0.); %
        \foreach \x in {-0.1, 0.1, 0.2, 0.3, 0.4, 0.5, 0.6, 0.7, 0.8, 0.9, 1}
        \draw[shift={(\x,0)},color=black] (0pt,2pt) -- (0pt,-2pt) node[below] {};
        \draw[->,color=black] (0.,-0.1) -- (0.,1.2); %
        \foreach \y in {-0.1, 0.1, 0.2, 0.3, 0.4, 0.5, 0.6, 0.7, 0.8, 0.9, 1}
        \draw[shift={(0,\y)},color=black] (2pt,0pt) -- (-2pt,0pt) node[left] {};
        \draw[color=black, fill=black] (0.8,1) circle (1.5pt);
        \draw[color=black, fill=black] (0.65,0.3) circle (1.5pt);
        \draw[color=black, fill=black] (0.6,0.6) circle (1.5pt);
        \draw[color=black, fill=black] (0.4,0.5) circle (1.5pt);
        \draw[dotted, color=black] (0.8,1)    circle (0.3cm); %
        \draw[dotted, color=black] (0.65,0.3) circle (0.3cm); %
        \draw[dotted, color=black] (0.6,0.6)  circle (0.3cm); %
        \draw[dotted, color=black] (0.4,0.5)  circle (0.3cm); %
        \draw [color=red, semithick]   (-0.1, 1.1) -- (1.1, -0.1); 
        \draw [color=blue, semithick]           (-0.1, 0.2) -- (1.1, 0.8); 
        \draw [dotted, color=red]   (0.65,0.3)  -- (0.72, 0.38);
        \draw [dotted, color=blue]   (0.6,0.6)  -- (0.69, 0.5);
        \draw[dotted, color=red] (0.4, 0.5)   -- (0.43, 0.6);
        \draw[color=red, fill=red] (0.69,0.37) -- (0.73, 0.37) -- (0.71, 0.40);
        \draw[color=blue, fill=blue] (0.65,0.50) rectangle (0.69,0.54);
        \draw[color=red, fill=red] (0.41, 0.59) -- (0.45, 0.59) -- (0.43, 0.62) ;
        \draw[color=blue, fill=blue] (0.37, 0.38) rectangle (0.41, 0.42);
        \draw[->, color=red] (0.9, 0.1) -- (1, 0.2);
        \draw[->, color=blue] (1, 0.75) -- (1.07, 0.6);
        \begin{footnotesize}
        \draw[color=red]  (1, -0.1) node {$h$};
        \draw[color=black]  (1.2, -0.05) node {$x_1$};
        \draw[color=black]  (-0.06, 1.2) node {$x_2$};
        \draw[color=blue]  (1, 0.85) node {$h'$};
        \draw[color=red]   (1, 0.1) node {$+1$};
        \draw[color=blue]  (1.15, 0.7) node {$+1$};
        \draw[color=black]  (0.63, 0.23) node {$\vx^4$};
        \draw[color=black]  (0.75, 1.05) node {$\vx^3$};
        \draw[color=red]     (0.3,  1.02) node {$\vr_t=$};
        \draw[color=blue]    (0.55,  1.02) node {$\vr_t=$};
        \draw[color=blue]    (0.45,  0.21) node {$\vr_t=$};
        \draw[color=red]     (0.75,  0.59) node {$=\vr_t$};
        \draw[color=red]     (0.8,  0.4)  node  {$\vr_t$};
        \draw[color=red]     (0.4,  0.7)  node  {$\vr_t$};
        \draw[color=blue]    (0.33,  0.35)  node  {$\vr_t$};
        \draw[color=blue]     (0.8,  0.5)  node  {$\vr_t$};
        \draw[color=black]  (0.6, 0.67) node {$\vx^2$};
        \draw[color=black]  (0.35, 0.55) node {$\vx^1$};
        \end{footnotesize}
        \end{tikzpicture}
    \caption{Black dots denote true feature vectors. Axes $x_1, x_2$ correspond to the two features. Dotted circles correspond to the $\delta$-bounded interval inside which agents can misreport. Blue squares correspond to misreports against action $h'$ and red triangles to misreports against action $h$. \vspace{-20pt}}\label{fig:incomp-ext-stack}
\end{wrapfigure}
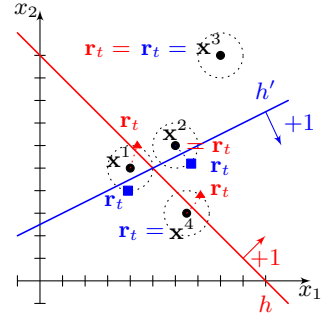%
Stackelberg regret \citep{BBHP15,drsww18} %
compares the loss from $\{\alpha_t\}_{t \in [T]}$ to the loss from the best-fixed action in hindsight, \emph{had learner given the opportunity to the agents to best respond}. 

\begin{theorem}\label{thm:incomp-sr-ext}
There exists a repeated strategic classification setting against a \dBMR agent, where \emph{every} action sequence from the learner with \emph{sublinear} external regret incurs \emph{linear} Stackelberg regret, and \emph{every} action sequence for the learner with \emph{sublinear} Stackelberg regret incurs \emph{linear} external regret.
\end{theorem}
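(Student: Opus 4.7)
My plan is to build a concrete two-dimensional strategic classification instance in which a $\delta$-BMR agent can fool any hyperplane the learner commits to, but only in a way that exposes a specific competitor hyperplane, and then directly inspect both regret quantities on an arbitrary learner sequence. Concretely, I would fix a single datapoint $(\vx, y=-1)$ (a spam example) together with two hyperplanes $h, h' \in \calA$ that both classify $\vx$ correctly as $-1$, but whose decision boundaries cross near $\vx$ at an angle such that the $\delta$-ball around $\vx$ meets the $+1$ half-space of $h$ only inside the $-1$ half-space of $h'$, and symmetrically with the roles of $h$ and $h'$ swapped. By definition of $\delta$-BMR, the agent's best response then satisfies $\ell(h,\vr_t(h))=1$ together with $\ell(h',\vr_t(h))=0$, and symmetrically whenever the learner commits to $h'$. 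This is exactly the geometric picture in the figure above the statement.

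On any learner sequence $\{\alpha_t\}_{t \in [T]}$ taking values in $\{h,h'\}$ the learner's cumulative loss equals $T$, because the agent fools whichever of $h$ or $h'$ is played. The external benchmark, evaluated on the \emph{realized} reports, satisfies $\sum_t \ell(h,\vr_t(\alpha_t)) = |\{t : \alpha_t = h\}|$ and $\sum_t \ell(h',\vr_t(\alpha_t)) = |\{t : \alpha_t = h'\}|$, so the better of the two is at most $T/2$ and the external regret is at least $T/2 = \Omega(T)$. The Stackelberg benchmark, on the other hand, uses \emph{counterfactual} reports, and since each of $h,h'$ is itself fooled whenever committed to, the benchmark equals $T$ and Stackelberg regret is exactly $0$. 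Thus in this instance every sequence has linear external and sublinear (in fact zero) Stackelberg regret, which directly verifies ``sublinear Stackelberg $\Rightarrow$ linear external'' and vacuously verifies ``sublinear external $\Rightarrow$ linear Stackelberg.''

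To make the latter implication non-vacuous (matching the ``and vice versa'' in the statement), I would enrich the instance with additional adversarially chosen datapoint types so that (i) no hyperplane in $\calA$ can be correctly classifying on every type while simultaneously keeping the spam type's $\delta$-ball on its $-1$ side, which keeps the Stackelberg benchmark $\Omega(T)$ for every $\alpha \in \calA$ and rules out escape to trivial ``far'' hyperplanes; and (ii) there exists at least one adaptive sequence whose realized reports are separable by some $\alphast_E \in \calA$, yielding sublinear external regret while the learner's own loss remains $\Omega(T)$ so that Stackelberg regret becomes positively $\Omega(T)$ rather than negative.

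I expect the main obstacle to be ruling out escape actions in the continuous action space $\calA$: the clean two-action argument above relies on restricting the learner to $\{h,h'\}$, and with the full $\calA \subseteq [-1,+1]^{d+1}$ the learner could otherwise commit to a hyperplane whose decision boundary is more than $\delta$ away from $\vx$, making the agent unable to fool it and trivially achieving zero regret in both senses. Closing this escape reduces to a short Cauchy--Schwarz-type calculation showing that if two points of opposite label sit within distance less than $\delta$ of each other, then every $\alpha \in \calA$ must incur $\Omega(1)$ per-round Stackelberg loss on one of them. Executing that bookkeeping on a multi-point witness, in the spirit of the four black dots in the figure, is where most of the remaining work goes.
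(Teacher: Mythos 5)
Your core idea --- exploiting the gap between realized and counterfactual reports so that the cross-terms $\ell(h,\vr_t(h'))$ and $\ell(h',\vr_t(h))$ differ sharply from the diagonal terms --- is the same mechanism the paper uses, and your first single-datapoint instance correctly yields Stackelberg regret $0$ together with external regret $\geq T/2$. But as you yourself note, that instance proves the implication ``sublinear external $\Rightarrow$ linear Stackelberg'' only \emph{vacuously}, since no sequence achieves sublinear external regret there. That is a genuine gap: the theorem's content (and the use made of it --- that running a no-external-regret algorithm can be actively harmful for Stackelberg regret) requires a single instance in which \emph{each} regret notion is individually achievable at a sublinear rate, yet never both simultaneously. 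Your plan to ``enrich the instance with additional datapoint types'' satisfying conditions (i) and (ii) is exactly the missing construction, and you leave it unexecuted; the whole difficulty of the theorem lives there. The paper does this concretely: $\calA=\{h,h'\}$ with $h=(1,1,-1)$, $h'=(0.5,-1,0.25)$, $\delta=0.1$, and four feature vectors with labels $(-1,-1,+1,+1)$ drawn with probabilities $(0.05,0.15,0.05,0.75)$, giving $\E[\ell(h,\vr_t(h))]=0.2$, $\E[\ell(h',\vr_t(h'))]=0.25$, $\E[\ell(h',\vr_t(h))]=0.05$, $\E[\ell(h,\vr_t(h'))]=0.9$. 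If $h$ is played $n_h$ times, external regret works out to $0.15\,n_h$ while Stackelberg regret is $0.05\,(T-n_h)$, so each is separately driveable to $o(T)$ but their sum is always $\Omega(T)$.

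A secondary point: your worry about ``escape actions'' in the continuous action space is a red herring. The theorem asserts the existence of a \emph{setting}, and the setting includes the choice of $\calA\subseteq[-1,+1]^{d+1}$; the paper simply fixes the two-element action set $\{h,h'\}$, so no Cauchy--Schwarz argument ruling out far-away hyperplanes is needed. If you do want the construction to survive with a richer $\calA$, that is extra work beyond what the statement requires.
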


\vspace{-7pt}
\begin{proof}[Proof Sketch]
We construct the following instance of an online strategic classification setting against \dBMR agents (pictorially shown in Figure~\ref{fig:incomp-ext-stack}). Let the action space be $\calA = \{h, h'\}$ such that $h = (1,1,-1)$ and $h' = (0.5, -1, 0.25)$, and let $\delta = 0.1$. Nature draws feature vectors $\vx^1 = (0.4, 0.5, 1), \vx^2 = (0.6, 0.6, 1), \vx^3 = (0.8, 0.9, 1), \vx^4 = (0.65,0.3, 1)$ with probabilities $p^1=0.05, p^2=0.15, p^3=0.05, p^4=0.75$, and with labels $y^1=-1, y^2=-1, y^3=+1, y^4=+1$. Note that these original feature vectors are even \emph{separable} by a \emph{margin}! The expected loss for each action $\alpha \in \calA$ corresponds to the number of mistakes that the learner makes against $\vr_t(\alpha)$, which in turn depends on the probability with which nature drew each of the feature vectors $\vx^1, \vx^2, \vx^3, \vx^4$, e.g., $\E[ \ell (h, \vr_t( h),y_t) ] = 0.2$ because classifier $h$ makes a mistake for points $\vx^1$ and $\vx^2$. Analogously, $\E[ \ell (h', \vr_t( h' ),y_t)] = 0.25$, $\E[ \ell (h, \vr_t( h'),y_t) ] = 0.9$ and $\E[ \ell (h', \vr_t( h ),y_t) ] = 0.05$.

Every action sequence that yields a sublinear Stackelberg regret in this instance, must include action $h$ \emph{at least} $T - o(T)$ times (because $\E[ \ell (h, \vr_t( h ),y_t)] < \E[ \ell (h', \vr_t( h' ),y_t)]$), thus incurring cumulative \emph{loss} $0.2(T - o(T)) + 0.9 o(T)$. For such sequences, because responses $\vr_t(h)$ appear at least $T - o(T)$ times, the best-fixed action in hindsight for the external regret is action $h'$, with cumulative loss: $0.05(T - o(T))$. This means that the external regret is at least $0.15T$, i.e., \emph{linear}. For the next part of the proof, we show that if action $h'$ is played $T - o(T)$ times, then, the external regret is \emph{sublinear}. This is enough to prove our theorem, since in this case the Stackelberg regret is $0.05T$, i.e., \emph{linear}. If $h'$ is played $T - o(T)$ times, then the cumulative loss incurred is $0.05o(T) + 0.25(T-o(T))$ and the best fixed action in hindsight for the external regret is also action $h'$ with a cumulative loss $0.25(T - o(T)) + 0.05o(T)$. In other words, the external regret in this case is \emph{sublinear}. 
\end{proof}

\section{The $\grinder$ Algorithm}\label{sec:grinding-algo}

In this section, we present $\grinder$, an algorithm that learns to adaptively partition the learner's action space according to the agent's responses. To assist with the dense notation, we include notation tables in Appendix~\ref{app:notation-tables}. Formally, we prove the following\footnote{Our actual bound is \emph{tigher}, but harder to interpret without analyzing the algorithm.} \emph{data-dependent} performance guarantee.

\begin{theorem}\label{thm:regr-grind}
Given a finite horizon $T$ the Stackelberg regret incurred by $\grinder$ (Algo.~\ref{algo:grinding}) is: 
\begin{equation*}
\calR(T) \leq \calO \left( \sqrt{T \cdot \log \left(T\cdot\frac{\lambda\left( \calA \right)}{\lambda\left(\up\right)}\right) \cdot \log \left(\frac{\lambda\left( \calA \right)}{\lambda\left(\up\right)}\right)}\right)
\end{equation*} 
where by $\lambda(A)$ we denote the Lebesgue measure of any measurable space $A$, and by $\up$ we denote the polytope with the smallest Lebesgue measure that is induced by $\grinder$ after $T$ rounds' partition. 
\end{theorem}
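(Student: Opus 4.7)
The plan is to run an \texttt{EXP3}-style analysis, but lifted to the polytope partition that \grinder maintains, with the key idea being that the binary loss is constant on each polytope of the current partition (all actions in a polytope classify the observed $\vr_t(\alpha_t)$ identically), so each round yields ``feedback'' on an entire region. First, I would formalize the distribution \grinder samples from as a density over $\calA$ that is piecewise constant on the current partition $\calP_t$, with mass on polytope $p$ proportional to $\lambda(p)\exp(-\eta \tL_{t-1}(p))$ for cumulative estimated loss $\tL_{t-1}(p)$. I would then define a polytope-level importance-weighted loss estimator $\tell_t(p) = \ell_t(p)\cdot \1\{p \subseteq p_t^{\texttt{in}}\}/\Prin_t$, where $p_t^{\texttt{in}}$ is the polytope containing all actions that would have received the \emph{same} observation as $\alpha_t$, and $\Prin_t$ is the probability that \grinder samples from $p_t^{\texttt{in}}$. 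Standard checks show this estimator is unbiased for the true loss restricted to $p_t^{\texttt{in}}$, and because actions in the same polytope share their entire history of estimated losses, the volume-reweighting step after a partition preserves both the density and unbiasedness.

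Next, I would invoke the standard \texttt{EXP3}/Hedge analysis on the induced \emph{measure-valued} weights, yielding a decomposition
\begin{equation*}
\calR(T) \;\le\; \frac{\log\bigl(\lambda(\calA)/\lambda(\up)\bigr)}{\eta} \;+\; \eta \sum_{t=1}^T \E\!\left[\int_{\calA} \tell_t(\alpha)^2 \,\pi_t(\alpha)\,d\lambda(\alpha)\right],
\end{equation*}
where $\pi_t$ is \grinder's sampling density at round $t$. The $\log$ term comes from a KL-style potential with the uniform prior on the final partition, which contains at most $\lambda(\calA)/\lambda(\up)$ polytopes since every polytope ever created has volume at least $\lambda(\up)$. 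Note that this is exactly where the outer $\log\bigl(\lambda(\calA)/\lambda(\up)\bigr)$ factor of the theorem enters.

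The hard part, and the main novelty, is bounding the per-round variance $\E[\int \tell_t^2 \pi_t\, d\lambda]$. A naive feedback-graph argument using \citep[Lem.~5]{NCBDK15} would produce a factor logarithmic in the \emph{number} of polytopes, which in a continuous setting is unusable. I would instead prove a polytope-valued variant: if each playing polytope reveals the loss of every action in its ``in-neighborhood'' polytope, then
\begin{equation*}
\int_{\calA} \frac{\pi_t(\alpha)}{\Pr_t^{\texttt{in}}(\alpha)} \, d\lambda(\alpha) \;\le\; C\cdot \log\!\Bigl(T\cdot \tfrac{\lambda(\calA)}{\lambda(\up)}\Bigr),
\end{equation*}
by mimicking the greedy dominating-set argument of \citep{NCBDK15} but weighting each ``node'' by its Lebesgue measure rather than counting it, and using that the smallest in-neighborhood has volume at least $\lambda(\up)$ to terminate the recursion (this is precisely where the $T$ inside the logarithm appears, via a standard truncation at $1/T$ of exponentially small masses). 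Establishing this polytope-weighted dominating-set bound — and checking that the \dBMR\ structure of $\vr_t$ actually guarantees the required ``observation region'' property on which it relies — is where I expect the bulk of the work.

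Finally, I would combine the two pieces, obtaining
\begin{equation*}
\calR(T) \;\le\; \frac{\log(\lambda(\calA)/\lambda(\up))}{\eta} \;+\; \eta\, C\, T\, \log\!\bigl(T\lambda(\calA)/\lambda(\up)\bigr),
\end{equation*}
and optimize $\eta$ to balance the two terms, which yields the advertised
$\calO\bigl(\sqrt{T\log(T\lambda(\calA)/\lambda(\up))\log(\lambda(\calA)/\lambda(\up))}\bigr)$ bound. A small additional step will handle the noisy-oracle extension by absorbing oracle error into the bias of $\tell_t$ and showing it contributes only a lower-order term, so the rate stated in the theorem is unchanged.
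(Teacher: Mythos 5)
Your proposal follows the same overall architecture as the paper's proof: an unbiased polytope-level importance-weighted estimator, a Hedge/\texttt{EXP3} potential argument over Lebesgue-scaled weights yielding the $\frac{1}{\eta}\log(\lambda(\calA)/\lambda(\up))$ term (the paper's Lemma~\ref{lem:sec-order}), a variance bound carried out over polytopes rather than individual actions to avoid the infinite-action blowup, and a final tuning of $\eta$. The one place where you take a genuinely different route is the variance bound itself. You propose to prove a new, Lebesgue-weighted variant of the greedy dominating-set argument of \citep[Lem.~5]{NCBDK15}. The paper avoids this extra work: it observes that, restricted to the $\sigma_t$-induced upper and lower polytope sets, every revealed polytope reveals every other (the agent could not have manipulated across any of these hyperplanes), so the induced feedback graph is \emph{complete} with independence number $1$, and the \emph{original, unweighted} lemma applies verbatim to the finite set of polytopes with node weights $\pi_t(p)$ lower-bounded by $\eps = \gamma\lambda(\up(t))/\lambda(\calA)$; the middle region is handled separately and contributes only an additive $\lambda(\calP_{t,\sigma_t}^m)\le\lambda(\calA)$. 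The $T$ inside the logarithm then enters simply because $1/\gamma=1/\eta\le T$ at the tuned value, rather than through a truncation of exponentially small masses. Your weighted-lemma route would also work, but it front-loads effort into re-proving a graph lemma that the complete-graph observation renders unnecessary; the paper's route buys a shorter proof, yours would buy a statement that might generalize to settings where the revealed region is not mutually revealing.

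One imprecision worth fixing: you define the estimator as $\ell_t(p)\cdot\1\{p\subseteq p_t^{\texttt{in}}\}/\Prin_t$ with a single round-level normalizer ``the probability that \grinder samples from $p_t^{\texttt{in}}$.'' That normalizer must depend on $p$: the correct quantity is, for each polytope $p$, the probability over the draw of $\alpha_t$ that the observation region of $\alpha_t$ covers $p$ (the paper's $\Prin[p]$ of Definition~\ref{def:in-oracle}). With a single global normalizer the estimator is biased, since $\E[\1\{p \text{ revealed}\}]$ varies across polytopes. Your later variance expression $\int_{\calA}\pi_t(\alpha)/\Pr_t^{\texttt{in}}(\alpha)\,d\lambda(\alpha)$ already treats the in-probability as a function of $\alpha$, so this is an inconsistency of notation rather than of intent, but the unbiasedness check only goes through with the per-polytope normalizer.
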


\begin{wrapfigure}{R}{0.3\textwidth}
    \centering
    \begin{tikzpicture}[line cap=round,line join=round,>=latex',x=20.0cm,y=20.0cm, scale=0.3]
        \draw [color=white, xstep=2.0cm,ystep=2.0cm] (-0.1,-0.1) grid (0.5,0.4);
        \draw[->,color=black] (-0.1,0.) -- (0.5,0.); %
        \foreach \x in {-0.1, 0.1, 0.2, 0.3, 0.4}
        \draw[shift={(\x,0)},color=black] (0pt,2pt) -- (0pt,-2pt) node[below] {};
        \draw[->,color=black] (0.,-0.1) -- (0.,0.4); %
        \foreach \y in {-0.1, 0.1, 0.2, 0.3}
        \draw[shift={(0,\y)},color=black] (2pt,0pt) -- (-2pt,0pt) node[left] {};
        \draw[color=black] (0pt,-10pt) node[right] {};
        \draw [line width=1pt,color=red] (-0.05,0.35) -- (0.35,-0.1);
        \draw [line width=1pt, dashed]      (-0.05, 0.33) -- (0.44, 0.33); 
        \draw [line width=1pt, dashed]      (-0.07, -0.1) -- (0.05, 0.37); 
        \draw [line width=1pt, color=blue, dotted] (0.2,0.05) circle (2.cm);
        \draw [line width=1pt,color=green!50!black] (0.24,0.09) circle (4.cm);
        \draw [->,color=green!50!black] (0.24,0.085) -- (0.435,0.13);
        \draw [->,color=blue] (0.2,0.05) -- (0.14,-0.028);
        \begin{footnotesize}
        \draw[color=red] (-0.04,0.3) node {$\alpha$};
        \draw[color=black] (0.47, 0.33) node {$\beta$};
        \draw[color=black] (0.07, 0.37) node {$\gamma$};
        \draw[color=black]  (0.5, -0.03) node {$x_1$};
        \draw[color=black]  (-0.04, 0.4) node {$x_2$};
        \draw [color=blue,fill=blue] (0.2,0.05) circle (3pt);
        \draw[color=blue] (0.22,0.02) node {$\vx_t$};
        \draw[color=blue] (0.15,0.11) node {};
        \draw [color=green!50!black,fill=green!50!black] (0.24,0.085) circle (3pt);
        \draw[color=green!50!black] (0.225,0.12) node {$\vr_t(\alpha)$};
        \draw[color=green!50!black] (0.14,0.24) node {};
        \draw[color=green!50!black] (0.35,0.08) node {$2\delta$};
        \draw[color=blue] (0.16,0.03) node {$\delta$};
        \end{footnotesize}
        \end{tikzpicture}
    \caption{Agent's action space with axes $x_1, x_2$ corresponding to features. $\vx_t$ is the agent's true feature vector and $\vr_t(\alpha)$ his misreport against $\alpha$. Actions $\alpha, \beta, \gamma$ comprise the learner's action set. \vspace{-15pt}}\label{fig:2delta-ball}
\end{wrapfigure}
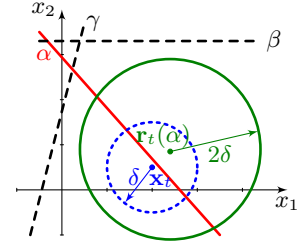

\vspace{-5pt}
\paragraph{Inferring $\ell(\alpha,\vr_t(\alpha),y_t)$ without Observing $\vx_t$.}We think of the learner's and the agent's spaces as dual ones (Fig.~\ref{fig:2delta-ball}), and focus on the agent's action space first. Since agents are \dBMR, then, for feature vector $\vx_t$ the agent can only misreport within the ball of radius $\delta$ centered around $\vx_t$, denoted by $\calB_\delta(\vx_t)$ (e.g., purple dotted circle in Fig.~\ref{fig:2delta-ball}). Since the learner observes $\vr_t(\alpha)$ and knows that the agent misreports in a ball of radius $\delta$ around $\vx_t$ (which remains unknown to her), she knows that in the worst case the agent's $\vx_t$ is found within the ball of radius $\delta$ centered at $\vr_t(\alpha)$. This means that the set of all of the agent's possible misreports against any committed action $\alpha'$ from the learner $\vr_t(\alpha' )$ is the \emph{augmented} $2\delta$ ball (e.g., green solid circle). Since $y_t$ is also observed by the learner, she can thus infer her loss $\ell(\alpha', \vr_t(\alpha'),y_t)$ for \emph{any} action $\alpha'$ that has $\calB_{2\delta}(\vr_t(\alpha))$ \emph{fully} in one of its halfspaces (e.g., actions $\beta, \gamma$ in Fig.~\ref{fig:2delta-ball}). 

In the learner's action space, actions $\alpha, \beta, \gamma$ are multidimensional \emph{points}, and this has a nice mathematical translation. An action $\gamma$ has $\calB_{2\delta}(\vr_t(\alpha))$ fully in one of its halfspaces, if its distance from $\vr_t(\alpha)$ is \emph{more than} $2\delta$. Alternatively for actions $h \in \calA$ such that: \[\frac{\abs*{\langle h, \vr_t(\alpha) \rangle}}{\|h\|_2} \leq 2\delta \Leftrightarrow \abs*{\langle h, \vr_t(\alpha) \rangle} \leq 4\sqrt{d} \delta\]where the last inequality comes from the fact that $\calA \subseteq [-1,1]^{d+1}$, the learner cannot infer $\ell(h, \vr_t(h))$. 
But for all other actions $\gamma$ in $\calA$, the learner can compute her loss $\ell(h, \vr_t(h))$ \emph{precisely}!
From that, we derive that the learner can partition her action space into the following \emph{polytopes}: upper polytopes $\calP_t^u$, containing actions $\vw \in \calA$ such that $\langle \vw, \vr_t(\alpha) \rangle \geq 4\sqrt{d} \delta$ and lower polytopes $\calP_t^l$, containing actions $\vw' \in \calA$ such that $\langle \vw', \vr_t(\alpha) \rangle \leq -4 \sqrt{d} \delta$. The distinction into the two sets is helpful as one of them always assigns label $+1$ to the agent's best-response, and the other always assigns label $-1$. The sizes of $\calP_t^u$ and $\calP_t^l$ depend on $\delta$ and $\{\vx_t\}_{t=1}^T$, but we omit these for the ease of notation.

\vspace{-7pt}
\begin{wrapfigure}{R}{0.3\textwidth}
    \centering
    \begin{tikzpicture}[line cap=round,line join=round,>=latex',x = 1cm, y = 1cm, scale=0.35]
        \draw [color=white,, xstep=1cm,ystep=1cm] (-5,5) grid (5,5);
        \draw[->,color=black] (-5,0.) -- (5,0.); %
        \foreach \x in {-4.,-3.,-2.,-1.,1.,2.,3.,4}
        \draw[shift={(\x,0)},color=black] (0pt,2pt) -- (0pt,-2pt) node[below] {};
        \draw[->,color=black] (0.,-5) -- (0.,5); %
        \foreach \y in {-4.,-3.,-2.,-1.,1.,2.,3.,4}
        \draw[shift={(0,\y)},color=black] (1pt,0pt) -- (-1pt,0pt) node[left] {};
        \draw[color=black] (0pt,-10pt) node[right] {};
        \clip(-5,-5) rectangle (5,5);
        \draw [line width=0.5pt,domain=-8:8] plot(\x,{(--3.-1.*\x)/1.});
        \draw [line width=0.5pt,domain=-8:8] plot(\x,{(-3.-1.*\x)/1.});

        \draw[line width=0.5pt,color=green!50!black,fill=green!50!black,pattern=crosshatch,pattern color=green!50!black!25](-3,6)--(8.32,-5.32)--(8.22,5.94);
        \draw[line width=0.5pt,color=green!50!black,fill=green!50!black,fill opacity=0.25](-7.76,4.76)--(-8,-6.2)--(3.2,-6.2)--(-7.76,4.76);
        
        \begin{footnotesize}
        \draw [fill=red, color=red] (-1.,-1.) circle    (3pt);
        \draw [fill=blue, color=blue] (-3,0) circle     (3pt);
        \draw [fill=blue, color=blue] (0,-3) circle     (3pt);
        \draw [fill=blue, color=blue] (3,0) circle      (3pt);
        \draw [fill=blue, color=blue] (0,3) circle      (3pt);
        \draw[color=black] (-3,-3) node {$\calP_t^l(\alpha)$};
        \draw[color=black] (3,3) node {$\calP_t^u(\alpha)$};

        \draw[color=red] (-0.8,-1.4) node {$\alpha$};
        \draw[color=green!50!black] (-2.5,4)    node {$\beta_t^u(\alpha)$};
        \draw[color=green!50!black] (2.5,-4)    node {$\beta_t^l(\alpha)$};
        \draw[color=blue] (-3,1)        node {$-\frac{2\delta}{\vr_{t,1}}$};
        \draw[color=blue] (0.5,-2.5)    node {$ -\frac{2\delta}{\vr_{t,2}}$};
        \draw[color=blue] (2.9,-1.2)    node {$\frac{2\delta}{\vr_{t,1}}$};
        \draw[color=blue] (-0.75,2.3) node {$\frac{2\delta}{\vr_{t,2}}$};
    \end{footnotesize}
    \end{tikzpicture}
    \caption{Polytope partitioning for $d=2$. $\vr_{t,1}, \vr_{t,2}$ correspond to the $x_1$ and $x_2$ coordinates of $\vr_t(\alpha)$.\vspace{-25pt}}\label{fig:grinding_2d}
\end{wrapfigure}
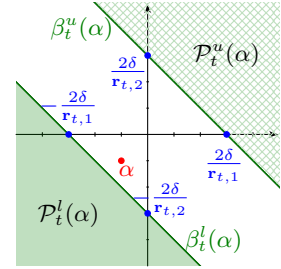

\paragraph{Algorithm Overview.} At each round $t$, \grinder (Algo.~\ref{algo:grinding}) maintains a sequence of nested \emph{polytopes} $\calP_t$, with $\calP_1 = \{\calA\}$ and decides which action $\alpha_t$ to play according to a two-stage sampling process. We denote the resulting distribution by $\calD_t$, and by $\Pr_t$ and $\f(\alpha)$ the associated probability and probability density function.

After the learner observes $\vr_t(\alpha_t)$, she computes two hyperplanes with the same normal vector $(\vr_t(\alpha_t))$ and symmetric intercepts ($\pm 4\sqrt{d}\delta$). These \emph{boundary} hyperplanes are defined as:  
\begin{align*}
\beta_t^u(\alpha_t)&:\forall \vw \in \calA, \langle \vw, \vr_t(\alpha_t) \rangle = 4 \sqrt{d} \delta\\
\beta_t^l(\alpha_t)&: \forall \vw \in \calA, \langle \vw, \vr_t(\alpha_t) \rangle = -4 \sqrt{d} \delta
\end{align*}
and they split the learner's action space into three \emph{regions}; one for which $\forall \mathbf{w}: \langle \mathbf{w}, \vr_t(\alpha_t) \rangle \geq 4\sqrt{d}\delta$, one for which $\langle \mathbf{w}, \vr_t(\alpha_t) \rangle \leq -4\sqrt{d}\delta$ and one for which $|\langle \mathbf{w}, \vr_t(\alpha_t) \rangle |\leq 4\sqrt{d}\delta$ (see Fig.~\ref{fig:grinding_2d}). %

Let {$H^{+}(\beta), H^{-}(\beta)$} denote the closed positive and negative halfspaces defined by hyperplane $\beta$ for intercept $4\sqrt{d}\delta$ and $-4\sqrt{d}\delta$ respectively\footnote{i.e., $\alpha \in H^+(\beta) \text{ if } \alpha \in \calA, \langle \beta, \alpha \rangle \geq 4\sqrt{d}\delta$ and similarly, $\alpha \in H^-(\beta) \text{ if } \alpha \in \calA, \langle \beta, \alpha \rangle \leq -4\sqrt{d}\delta$}. Slightly abusing notation, we say that polytope $p \subseteq H^+(\beta)$ if for all actions $\alpha$ contained in $p$ it holds that $\alpha \in H^+(\beta)$.

We define action $\alpha_t$'s \emph{upper} and \emph{lower} polytopes sets to be the sets of polytopes such that {$\calP_t^u(\alpha_t) = \{p \subseteq \calP_t, p \subseteq H^{+}(\beta_t^u(\alpha_t))\}$} and {$\calP_t^l(\alpha_t) = \{p \subseteq \calP_t, p \subseteq H^{-}(\beta_t^l(\alpha_t)) \}$} respectively. Defining these sets is useful since they represent the subsets of the learner's action space for which she can infer $\forall h: \ell(h, \vr_t(h),y_t)$ \emph{despite never observing $\vx_t$}! To be more precise for $h \in \calP_t^u(\alpha_t) \bigcup \calP_t^l(\alpha_t)$: 
\[\ell(h,\vr_t(h),y_t) = \1\{y_t=-1\} \cdot \1\left\{h\in \calP_t^u(\alpha_t) \right\} + \1\left\{y_t=+1 \right\} \cdot \1\left\{h \in \calP_t^l(\alpha_t) \right\}\]

The definition of the polytopes establishes that at each round the estimated loss within each polytope is \emph{constant}. If a polytope has \emph{not} been further ``grinded'' by the algorithm, then the estimated loss that was used to update the polytope has been the same within the actions of the polytope for each time step! This observation explains the way the weights of the polytopes are updated by scaling with the Lebesgue measure of each polytope. Due to the fact that the loss of all the points within a polytope is the same, we slightly abuse notation and we use $\ell(p, \vr_t(p), y_t)$ to denote the loss for any action $\alpha \in p$ for round $t$, if the agent best-responded to it. 

We next define the lower, upper, and middle $\sigma_t$-induced\footnote{These can only be computed if one has access to the agent's true datapoint $\sigma_t = (\vx_t,y_t)$. However, we only use them in our analysis, and $\grinder$ does not require access to them.} polytope sets as: $\calP_{t, \sigma_t}^l = \{ \alpha \in p, p \in \calP_t : \dist(\alpha, \vx_t) \leq -2\delta \}$, $\calP_{t, \sigma_t}^u = \{ \alpha \in p, p \in \calP_t : \dist(\alpha, \vx_t) \geq 2\delta\}$, and $\calP_{t, \sigma_t}^m = \{ \alpha \in p, p \in \calP_t : |\dist(\alpha, \vx_t)| < 2\delta \}$, where $\dist(\alpha,\vx_t) = \langle \alpha, \vx_t \rangle/\|\alpha\|_2$.

\grinder uses access to what we call an \emph{in-oracle} (Def.~\ref{def:in-oracle}). Our main regret theorem is stated for an accurate oracle, but we show that our regret guarantees still hold for approximation oracles (Lem.~\ref{lem:apx-oracle}). Such oracles can be constructed in practice, as we show in Sec.~\ref{sec:sims}.  

\begin{definition}[In-Oracle]\label{def:in-oracle}
We define the \emph{In-Oracle} as a black-box algorithm, which takes as input a polytope (resp. action) and returns the total \emph{in}-probability for this polytope (resp. action): 
\[\Prin\left[p\right] = \int_{\calA} \mathbb{P}_t \left[\left\{p \subseteq H^+\left(\beta_t^u\left(\alpha'\right)\right) \right\} \bigcup \left\{ p \subseteq H^-\left(\beta_t^l\left(\alpha'\right)\right) \right\} \right] d\alpha'
\]
\end{definition}
\begin{algorithm}[t!]
\caption{\grinder Algorithm for Strategic Classification}\label{algo:grinding}
\DontPrintSemicolon
\SetAlgoLined
Initialize polytopes' set: $\calP_0 = \{ \calA \}$. \;
Initialize polytope weights  $w_1(p) = \lambda(p), p \in \calP_0$. \;
Tune learning and exploration rates $\eta = \gamma \leq 1/2$, as specified in the analysis. \;
\For{$t \gets 1$ \KwTo $T$}{
Compute $\forall p \in \calP_t:\pi_t(p) = (1-\gamma)q_t(p) + \gamma \frac{\lambda(p)}{\lambda(\calA)}$. \tcp*{distribution over polytopes}%
\tcc{Two-stage sampling: first, polytope, second, draw action from within.}
Select polytope $p_t \sim \pi_t$ from which you draw action $\alpha_t \sim \texttt{Unif}(p_t)$ and commit to $\alpha_t$. \label{step:2stage-sample}\;
Observe the agent's response $\left(\vr_t(\alpha_t),y_t\right)$ to committed $\alpha_t$.\; 
\tcc{Space partitioning into smaller polytopes. $\calP_t:$ current polytopes set.}
Define a new set of polytopes $\calP_{t+1} = \calP_{t+1}^u(\alpha_t) \bigcup \calP_{t+1}^m(\alpha_t) \bigcup \calP_{t+1}^l(\alpha_t)$, where:\; 
\For{each polytope $p \in \calP_t$}{
    Add in $\calP_{t+1}^u(\alpha_t)$ the non-empty intersection $p \bigcap H^+(\beta_t^u(\alpha_t))$ \hspace{-0.1in}\tcp*{upper polytopes set}
    Add in $\calP_{t+1}^l(\alpha_t)$ the non-empty intersection $p \bigcap H^-(\beta_t^l(\alpha_t))$ \hspace{-0.7in}\tcp*{lower polytopes set}
    Add in $\calP_{t+1}^m(\alpha_t)$ the non-empty remainder of $p$. \tcp*{middle polytopes set}
}
    Compute $\hell(\alpha_t,\vr_t(\alpha_t),y_t) = \frac{\ell\left(\alpha_t,\vr_t(\alpha_t),y_t\right)}{\Prin[\alpha_t]}$.\tcp*{loss estimator for chosen action}
\For{each polytope $p \in \calP_{t+1}$}{
    \tcc{upper and lower polytopes get full information}
    Compute $\hell(p, \vr_t(p),y_t) = \frac{\ell(p, \vr_t(p),y_t) \cdot \1 \{p \subseteq \calP_{t+1}^u(\alpha_t) \bigcup \calP_{t+1}^l(\alpha_t) \}}{\Prin[p]}$. \;
    \tcc{weight scaling with the Lebesgue measure of the polytope}
    \label{step:mwu-update}Update $w_{t+1}(p) = \lambda(p) \exp \left(-\eta \sum_{\tau = 1}^{t} \hell(p, \vr_t(p), y_t)\right)$, $q_{t+1}(p) = \frac{w_{t+1}(p)}{\sum_{p' \in \calP_{t+1}}w_{t+1}(p')}$.\;
}
}
\end{algorithm}
We provide below the proof sketch for Thm~\ref{thm:regr-grind}. The full proof %
can be found in Appendix~\ref{app:grinding-algo}. We also note that the algorithm can be turned into one that does not assume knowledge of $T$ or $\lambda \left(\up\right)$ by using the standard \emph{doubling trick} \cite{auer1995gambling}.

\begin{proof}[Proof Sketch of Thm~\ref{thm:regr-grind}]%

Using properties of the pdf, we first prove that $\hell(\cdot, \cdot, \cdot)$ is an unbiased estimator, and its variance is inversely dependent on quantity $\Prin[\alpha]$. Next, using the $\sigma_t$-induced polytopes sets, we can bound the variance of our estimator (Lem.~\ref{lem:var}) by making a novel connection with a graph theoretic lemma from the literature in online learning with feedback graphs (\cite[Lem.~5]{NCBDK15}, also stated in Lem.~\ref{lem:graph-th}): \[\E_{\alpha_t \sim \calD_t} \left[\frac{1}{\Prin[\alpha_t]} \right] \leq 4\log\left(\frac{4\lambda\left(\calA\right)\cdot\abs*{\calP_{t, \sigma_t}^u \cup \calP_{t, \sigma_t}^l}}{\gamma \lambda \left(\up \right)}\right) + \lambda \left(\calP_{t, \sigma_t}^m \right)\]
To do so, we first expand the term $\E_{\alpha_t \sim \calD_t} \left[1/{\Prin[\alpha_t]} \right]$ as:   
\begin{align*}
\E_{\alpha_t \sim \calD_t} \left[\frac{1}{\Prin[\alpha_t]} \right] = \int_{\calA} \frac{\f(\alpha)}{\Prin[\alpha]} d\alpha = \underbrace{\int_{\bigcup \left(\calP_{t, \sigma_t}^u \cup \calP_{t, \sigma_t}^l\right)} \frac{\f(\alpha)}{\Prin[\alpha]} d\alpha}_{Q_1} + \underbrace{\int_{\bigcup\calP_{t, \sigma_t}^m} \frac{\f(\alpha)}{\Prin[\alpha]} d\alpha}_{Q_2} \numberthis{\label{eq:var}}
\end{align*}
Due to the fact that $\grinder$ uses \emph{conservative} estimates of the \emph{true} action space with $\dist(\alpha, \vx_t) \leq \delta$, term $Q_2$ can be upper bounded by $\lambda(\calP_{t,\sigma_t}^m)$. Upper bounding $Q_1$ is significantly more involved (Lem.~\ref{lem:var}). First, observe that each of the actions in $\calP_{t,\sigma_t}^u, \calP_{t,\sigma_t}^l$ gets updated with probability $1$ by any other action in the sets $\calP_{t,\sigma_t}^u, \calP_{t,\sigma_t}^l$. This is because for any of the actions in $\calP_{t,\sigma_t}^u, \calP_{t,\sigma_t}^l$, the agent could not have possibly misreported%
. So, for all actions $\alpha \in \calP_{t,\sigma_t}^u \cup \calP_{t,\sigma_t}^l$ we have that: $\Prin[\alpha] \geq \sum_{p \in \calP_{t,\sigma_t}^u \cup \calP_{t,\sigma_t}^l}\pi_t(p)$. As a result, we can instead think about the set of \emph{polytopes} that belong in $\calP_{t,\sigma_t}^u$ and $\calP_{t,\sigma_t}^l$ as forming a fully connected feedback graph. The latter, coupled with the fact that our exploration term makes sure that each polytope $p$ is chosen with probability at least ${\lambda(p)}/{\lambda(\calA)}$ gives that: $Q_1 \leq 4\log\left(\frac{4\lambda(\calA)\cdot\abs*{\calP_{t,\sigma_t}^u \cup \calP_{t,\sigma_t}^l}}{\lambda (\up) \cdot \gamma}\right)$. 

Expressing everything in terms of polytopes rather than individual actions is \emph{critical} in the previous step, because applying \cite[Lem.~5]{NCBDK15} on $\calA$, rather than $\calP_t$, gives vacuous regret upper bounds, due to the logarithmic dependence in the number of nodes of the feedback graph, which is infinite for the case of $\calA$. The penultimate step of the proof (Lem.~\ref{lem:sec-order}) is a second order regret bound for $\grinder$ on the estimated losses $\hell(\cdot, \cdot)$, which should be viewed as the continuous variant of the standard discrete-action second order regret bound for \texttt{EXP3}-type algorithms.     
In order to derive the bound stated in Thm~\ref{thm:regr-grind} we upper bound the total number of $\sigma_t$-induced polytopes with $\lambda(\calA)/\lambda(\up)$.
\end{proof}
The regret guarantee of \grinder is \emph{preserved} if instead of an \emph{accurate} in-oracle it is provided an \emph{$\eps$-approximate} one, where $\eps \leq 1/\sqrt{T}$ (Lemma~\ref{lem:apx-oracle}). As we also validate in Sec.~\ref{sec:sims}, in settings where few points violate the margin between the $+1$ and the $-1$ labeled points such approximation oracles do exist and are relatively easy to construct. 

Computing the volume of polytopes is a $\#$-P hard problem, so \grinder should be viewed as an information-theoretic result. However, if \grinder is provided access to an efficient black-box algorithm for computing the volume of a polytope, its runtime complexity is $\calO(T^d)$ (Lem.~\ref{lem:runtime}).

\subsection{Simulations}\label{sec:sims}

In this subsection we present our simulation results\footnote{Our code is publicly available here: \url{https://github.com/charapod/learn-strat-class}}. We build simulation datasets since in order to \emph{evaluate} the performance of our algorithms one needs to know the original datapoints $\vx_t$. The results of our simulations are presented in Fig.~\ref{fig:main-body}. 

\begin{figure}[t!]
\centering
\begin{subfigure}{0.33\textwidth}
    \includegraphics[scale=0.07]{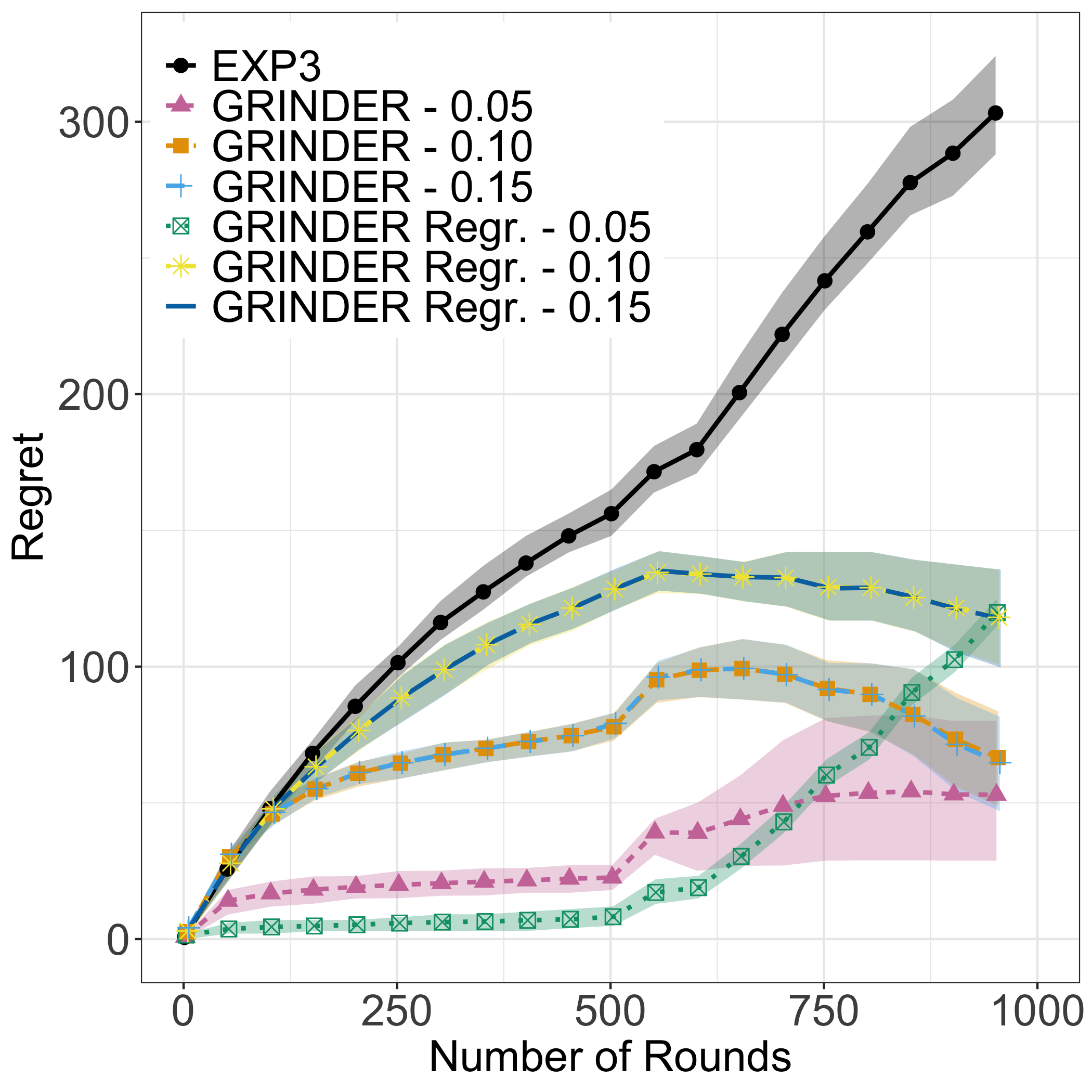}
\end{subfigure}%
\begin{subfigure}{0.33\textwidth}
    \centering
    \includegraphics[scale=0.07]{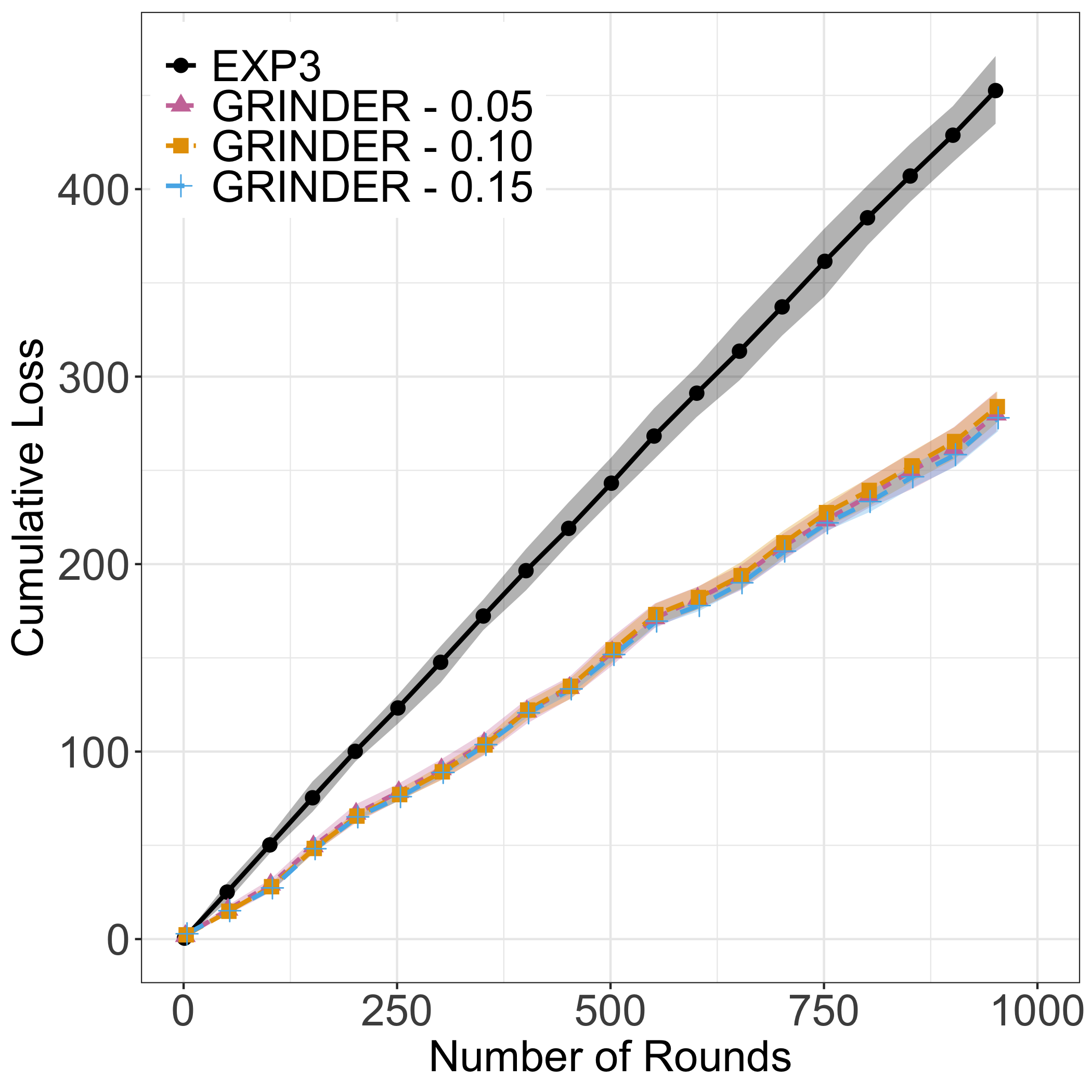}
\end{subfigure}%
\begin{subfigure}{0.33\textwidth}
    \centering
    \includegraphics[scale=0.07]{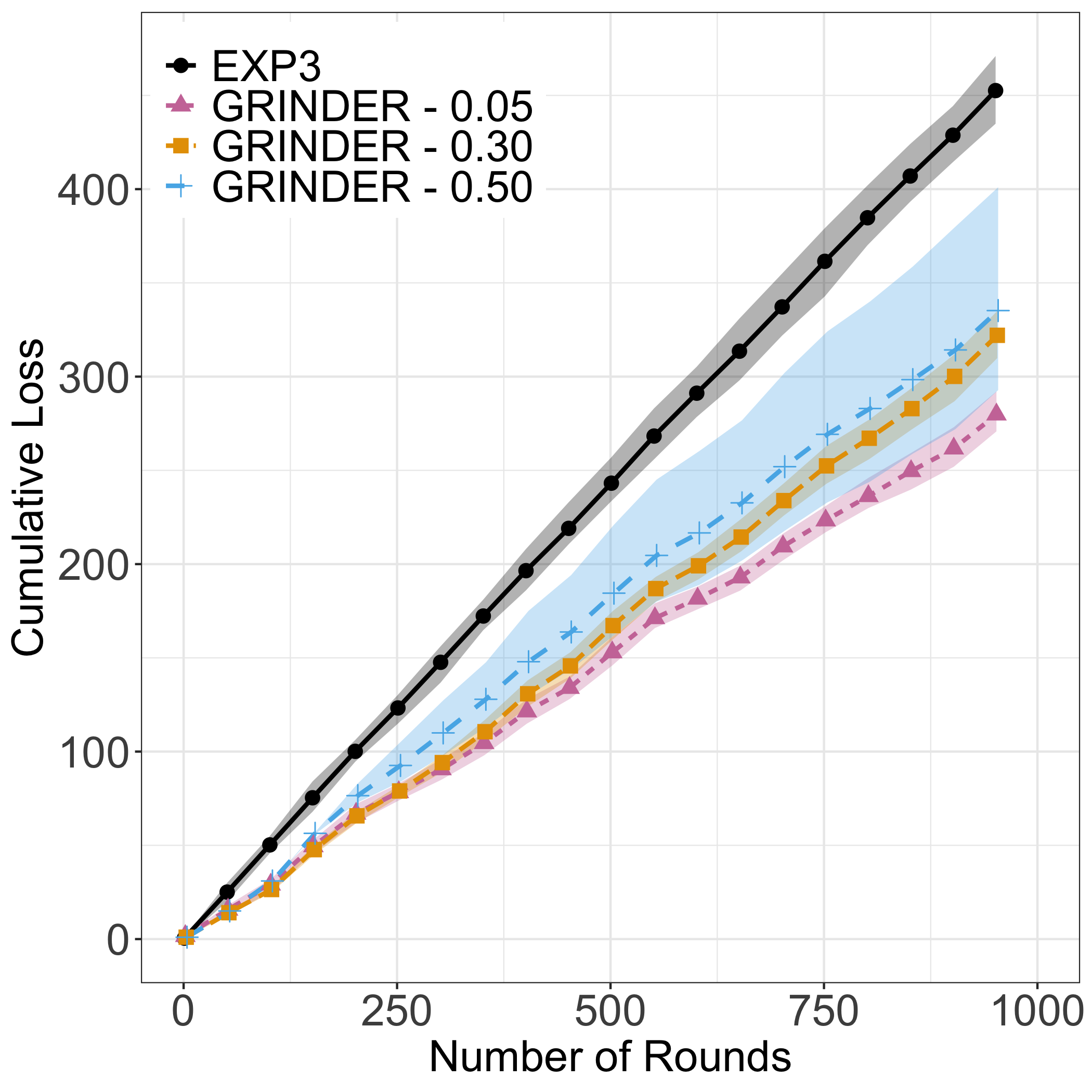}
\end{subfigure}%
\caption{Performance of \grinder vs. \texttt{EXP3}. In all cases, \grinder outperforms \texttt{EXP3}. Solid lines correspond to average regret/loss, and opaque bands correspond to $10$th and $90$th percentile. {\bf Left:} discrete action sets, accurate and regression oracle. {\bf Middle:} Continuous action set for \grinder with $\delta = 0.05, 0.10, 0.15$. {\bf Right:} Continuous action set for \grinder with $\delta = 0.05, 0.3, 0.5$. }\label{fig:main-body}
\end{figure}

For the simulation, we run \grinder against \texttt{EXP3} for a horizon $T=1000$, where each round was repeated for $30$ repetitions. The $\dBMR$ agents that we used are best-responding according to the utility function of Eq.~\eqref{eq:examp-dBMR}, and we studied $5$ different values for $\delta$: $0.05, 0.1, 0.15, 0.3, 0.5$. The $+1$ labeled points are drawn from Gaussian distribution as $\vx_t \sim (\calN(0.7,0.3), \calN(0.7,0.3))$ and the $-1$ labeled points are drawn from $\vx_t \sim (\calN(0.4, 0.3), \calN(0.4,0.3))$. Thus we establish that for the majority of the points there is a clear ``margin'' but there are few points that violate it (i.e., there exists no perfect linear classifier).

\texttt{EXP3} is always run with a fixed set of actions and always suffers a dependence on the different actions (i.e., not $\delta$). We then run \grinder in the same fixed set of actions and with a continuous action set. For the discrete action set, we include the results for both the accurate and the regression-based approximate oracle. We remark that if the action set is discrete, then $\grinder$ becomes similar to standard online learning with feedback graph algorithms (see e.g., \citep{NCBDK15}), but the feedback graph is built according to \dBMR agents. In this case, the regret scales as $\calO(a(G) \log T )$, where $a(G)$ is the independence number of graph $G$.

For the continuous action set it is not possible to identify the best-fixed action in hindsight. As a result, we report the cumulative loss. In Appendix~\ref{app:simulations}, we include additional simulations for a different family of $\dBMR$ agents\footnote{Namely, their utility function is: $u_t(\alpha_t, \vr_t(\alpha_t), \sigma_t) = \delta' \cdot \langle \alpha_t, \vr_t(\alpha_t) \rangle - \| \vx_t - \vr_t(\alpha_t) \|_2$.}, and different distributions of labels. 

In order to build the approximation oracle we used past data and we trained a logistic regression model for each polytope, learning the probability that it is updated. Our model has ``recency bias'' and gives more weight to more recent datapoints. We expect that for more accurate oracles, our results are strengthened, as proved by our theoretical bounds.

Validating our theoretical results, \grinder outperforms the benchmark, \emph{despite the fact that we use an approximation oracle}. We also see that in the discrete action set, where an accurate oracle can be constructed, \grinder performs much better than the regression oracle. As expected, \grinder's performance becomes worse as the power of the adversary increases (i.e., as $\delta$ grows larger).   

\begin{wrapfigure}[15]{R}{0.3\textwidth}
\centering
\includegraphics[scale=0.07]{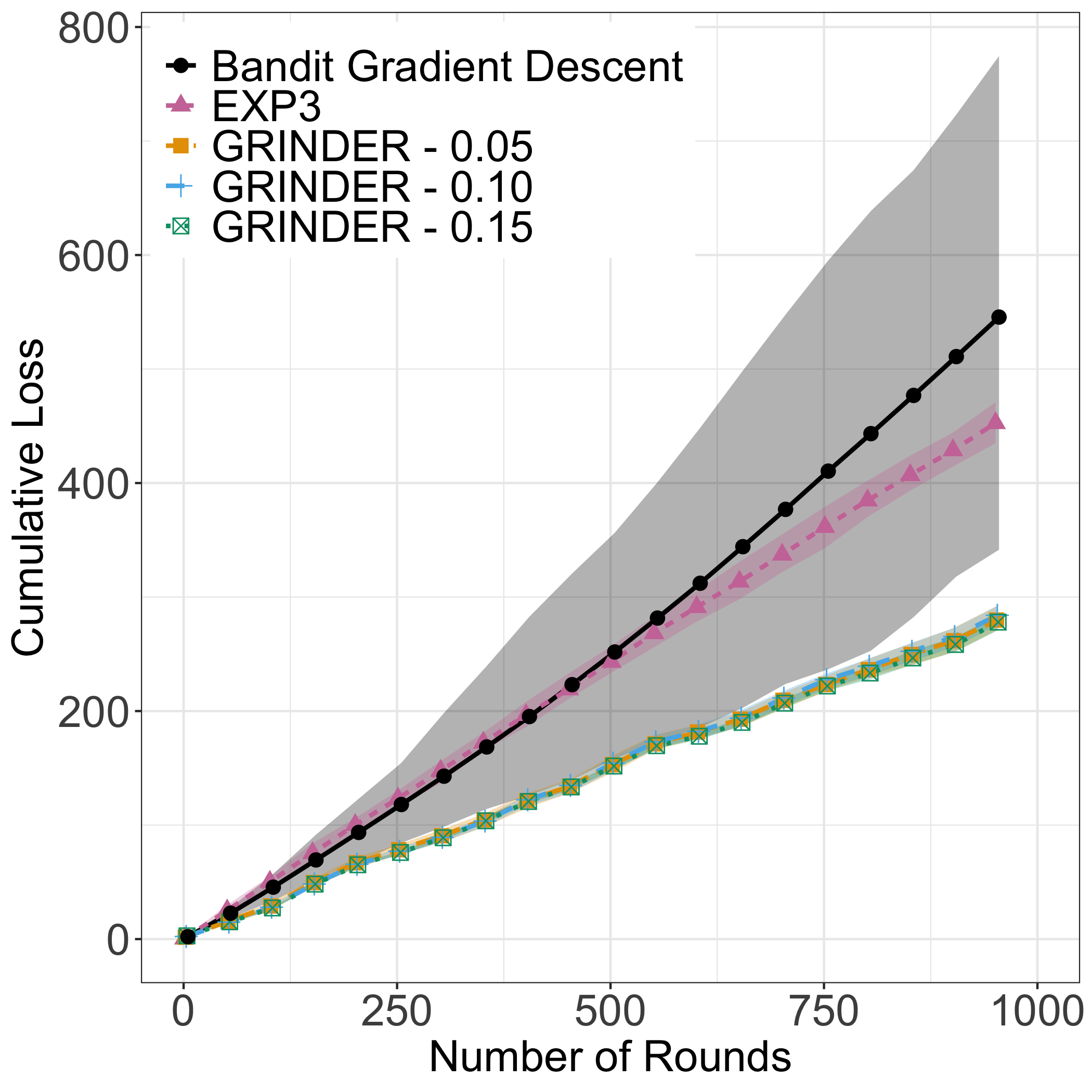}
\caption{Performance of \grinder vs. BGD.}
\label{fig:gd}
\end{wrapfigure}
\paragraph{Why not compare \grinder with a convex surrogate?} In this paragraph, we explain our decision to only compare \grinder with \texttt{EXP3}. In fact, \emph{no standard convex surrogate can be used} for learning linear classifiers against \dBMR, since the learner does not know precisely the agent's response \emph{function} $\vr_t(\alpha)$. As a result, the learner cannot guarantee that $\ell(\alpha, \vr_t(\alpha))$ is \emph{convex} in $\alpha$, even if $\ell(\alpha, \vz)$ ($\vz$ being independent of $\alpha$) is convex in $\alpha$! Concretely, think about the following counterexample: let $h = (1,1,-1), h' = (0.5, -1, 0.25)$ be two hyperplanes, a point $\vx = (0.55, 0.4), y = +1$, $\delta = 0.1$, and let $\ell(h, \vr(h)) = \max\{0, 1 - y \cdot \langle h, \vr(h) \rangle\}$ (i.e., hinge loss, which is convex). We show that when $(\vx, y)$ is a \dBMR agent, $\ell(\alpha, \vr(\alpha))$ is \emph{no longer convex} in $\alpha$. Take $b = 0.5$ and construct $h_b = 0.5h + 0.5h' = (0.75,0,-0.375) = (1, 0, -0.5)$. $(\vx,y)$ only misreports to (say) $(0.61, 0.4)$ when presented with $h$ (as $h_b$ and $h'$ classify $\vx$ as $+1$). Computing the loss: $\ell(h_b, \vr(h_b)) = 0.95, \ell(h, \vr(h)) = 0.99$ and $\ell(h', \vr(h')) = 0.875$, so, $\ell(h_b, \vr(h_b)) > b \ell(h, \vr(h)) + (1 - b) \ell(h', \vr(h'))$. Since in general $\ell(\alpha, \vr(\alpha))$ is not convex, it is may seem unfair to compare Bandit Gradient Descent (BGD) with \grinder but we include comparison results in Fig.~\ref{fig:gd}, where \grinder greatly outperforms BGD, for completeness. Identifying surrogate losses that are convex against \dBMR agents remains a very interesting open question.    

\section{Lower Bound}\label{sec:lb}

In this section we prove nearly matching lower bounds for learning a linear classifier against $\dBMR$ agents. To do so, we use the geometry of the sequence of datapoints $\sigma_t$ interpreted in the dual space. The proofs can be found in Appendix~\ref{app:lb}. %

\begin{theorem}\label{thm:lb}
For any strategy and any $\delta$, there exists a sequence of $\{\sigma_t\}_{t = 1}^T$ such that: 
\begin{equation}
\E \left[ \sum_{t \in [T]} \ell(\alpha_t, \vr_t(\alpha_t),y_t) \right] - \min_{\alphast \in \calA} \E \left[ \sum_{t \in [T]} \ell(\alphast, \vr_t(\alphast),y_t)\right] \geq \frac{1}{9\sqrt{2}} \sqrt{T \log \left( \frac{\lambda(\calA)}{\lambda(\tp)}\right)}
\end{equation}
where $\tp$ is the \emph{smallest $\sigma_t$-induced polytope from the sequence of $\{\sigma_t\}_{t=1}^T$}.
\end{theorem}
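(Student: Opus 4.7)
The plan is to apply Yao's minimax principle together with an information-theoretic (KL-based) argument, reducing the continuous action space to $N \eqdef \lambda(\calA)/\lambda(\tp)$ ``effective arms'' (one per $\sigma_t$-induced polytope) and then executing a standard bandit-style lower bound on this finite surrogate problem.

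First, I construct a sequence $\{\vx_t\}_{t\in[T]}$ whose associated boundary hyperplanes $\beta_t^u(\cdot), \beta_t^l(\cdot)$ in the dual space $\calA$ partition $\calA$ into $N$ polytopes of roughly equal Lebesgue measure. This is feasible because each datapoint contributes two parallel hyperplanes (at intercepts $\pm 4\sqrt{d}\delta$); by placing the $\vx_t$'s carefully, the resulting arrangement produces the desired partition, and the smallest cell has volume $\lambda(\tp)\approx \lambda(\calA)/N$, so $\log(\lambda(\calA)/\lambda(\tp))\approx \log N$. The key structural property I exploit is that, on each polytope $P$, the loss $\ell(\alpha,\vr_t(\alpha),y_t)$ is \emph{constant} in $\alpha\in P$, so the learner's effective problem is a decision problem over $N$ arms.

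Second, I construct a family of hard instances $\{\mathcal{E}_j\}_{j\in[N]}$: the datapoint sequence $\{\vx_t\}$ is held fixed, but the labels $y_t$ are randomized so that, under $\mathcal{E}_j$, actions lying in polytope $P_j$ suffer per-round expected loss $\tfrac12-\eps$ while actions in every other polytope suffer expected loss $\tfrac12$. Concretely, once $\vx_t$ is fixed, each $P_j$ sits entirely in either $\calP_{t,\sigma_t}^u$ or $\calP_{t,\sigma_t}^l$ (I ignore the measure-zero middle regions by slight perturbation), so I can bias $y_t$ by $\eps$ toward the label that $P_j$ would predict. Under $\mathcal{E}_j$, the Stackelberg-optimal fixed action lies in $P_j$ with expected cumulative loss $(\tfrac12-\eps)T$, so the expected regret is at least $\eps\bigl(T-\E_j[N_j]\bigr)$, where $N_j$ is the number of rounds the learner's action lies in $P_j$.

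Third, I apply Pinsker's inequality to a per-instance KL calculation: $\kl(\mathbb{P}_0\Vert\mathbb{P}_j)\leq C\eps^2\,\E_0[Z_j]$, where $\mathbb{P}_0$ is the law under the uniform (unbiased) environment and $Z_j$ counts the rounds in which the algorithm's observation is informative about $j$ (i.e., rounds in which $P_j\subseteq\calP_t^u(\alpha_t)\cup\calP_t^l(\alpha_t)$). The essential point is that the strategic report $\vr_t(\alpha_t)$ is a deterministic function of $(\vx_t,\alpha_t)$ and does \emph{not} depend on the hidden index $j$, so all information about $j$ flows only through $y_t$. Summing over $j$ and using $\sum_j \mathbb{P}_0[\alpha_t\in P_j]=1$ gives a total KL budget of $O(\eps^2 T)$, which translates via Pinsker into $\sum_j \E_j[N_j]\leq T/N + O\bigl(\eps T\sqrt{T/N}\bigr)$. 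Averaging the regret lower bound $\eps(T-\E_j[N_j])$ over $j$ and optimizing $\eps = c\sqrt{\log N/T}$ yields $\calR(T)\geq \tfrac{1}{9\sqrt 2}\sqrt{T\log(\lambda(\calA)/\lambda(\tp))}$, where the explicit constant comes from carefully tracking the two Pinsker steps.

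The principal obstacle is the third step. Standard bandit lower bounds assume that the set of observed arms is either fixed or depends on the action in a simple way (as in observing only the chosen arm). Here the informative set $\calP_t^u(\alpha_t)\cup\calP_t^l(\alpha_t)$ is itself a function of $\alpha_t$ through the strategic report $\vr_t(\alpha_t)$, and this action-dependent feedback graph must be handled without inflating the KL contribution beyond $O(\eps^2)$ per round. Making this precise — verifying that the reports carry no incremental information about the hidden $j$ beyond what the labels provide — is the ``generalization of standard lower-bound techniques for strategic settings'' advertised in the introduction, and is where most of the technical effort will go.
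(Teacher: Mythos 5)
There is a genuine gap in your third step, and it is not the one you flag as the ``principal obstacle.'' In this model the learner observes the label $y_t$ \emph{every} round (the feedback is $(\vr_t(\alpha_t),y_t)$), and in your construction the environment $\mathcal{E}_j$ biases $y_t$ at \emph{every} round (toward whichever label $P_j$ predicts given $\vx_t$). Consequently the KL divergence between $\mathbb{P}_0$ and $\mathbb{P}_j$ accrues $\Theta(\eps^2)$ per round unconditionally: it equals $\Theta(\eps^2 T)$ for \emph{each} $j$ separately, not $O(\eps^2\,\E_0[Z_j])$ with a budget of $O(\eps^2 T)$ shared across the $N$ indices. The bandit-style localization you invoke ($\sum_j \mathbb{P}_0[\alpha_t\in P_j]=1$, hence an averaged Pinsker bound) requires that the observation distribution under $\mathbb{P}_j$ differ from $\mathbb{P}_0$ only on rounds where the learner ``visits'' $j$; that is false here. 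Running Pinsker per-instance forces $\eps=O(1/\sqrt{T})$ and yields only $\Omega(\sqrt{T})$, losing exactly the $\sqrt{\log N}$ factor the theorem claims. A secondary problem: since all polytopes' losses at round $t$ are deterministic functions of the single bit $y_t$ and the side of $\beta_t^{u/l}(\cdot)$ they lie on, you cannot make $P_j$ have expected loss $\tfrac12-\eps$ while \emph{all} other polytopes have exactly $\tfrac12$; the gap between $P_j$ and $P_i$ under $\mathcal{E}_j$ is $2\eps(f_{ij}-\tfrac12)$ where $f_{ij}$ is the fraction of rounds on which $P_i$ and $P_j$ fall on the same side, so you would additionally need the geometric construction to guarantee that distinct polytopes disagree on a constant fraction of rounds.

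The paper avoids both issues with an \emph{adaptive multi-phase} construction rather than a static family of $N$ environments. It first proves a two-environment $\Omega(\sqrt{T})$ bound (Lemma~\ref{lem:sqrt}) with a single fixed feature vector and environments $U,L$ biasing $y_t$ by $\pm\eps$, tuned at $\eps=1/(3\sqrt{2T})$. It then splits the horizon into $\Phi=\log(\lambda(\calA)/\lambda(\tp))$ phases of length $T/\Phi$; in each phase it replays the two-environment argument with a fresh feature vector $\vx_\phi=\bigl(\tfrac12,\tfrac14(1+\kappa_\phi 2^\phi)\bigr)$, chosen \emph{as a function of which environment won the previous phase} so that some single action remains optimal for all phases simultaneously. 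The per-phase regrets $\Omega(\sqrt{T/\Phi})$ then add up to $\Omega(\sqrt{T\Phi})$. If you want to keep a non-adaptive, single-shot argument in the spirit of your proposal, you would need to replace the KL/Pinsker step with an anti-concentration argument (expected maximum of $N$ correlated random walks under unbiased labels), and you would still need to engineer the $\vx_t$ sequence so that the polytopes' ``prediction sequences'' are sufficiently decorrelated; as written, the proposal does not establish the $\sqrt{\log(\lambda(\calA)/\lambda(\tp))}$ factor.
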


We remark here that that the $\sigma_t$-induced polytopes as defined in the previous section depend \emph{only} on the sequence of agents that we face, and not on the properties of any algorithm. We proceed with the proof sketch of our lower bound.

\vspace{-7pt}
\begin{proof}[Proof Sketch]
Fix a $\delta > 0$, and assume that the agents are \emph{truthful}\footnote{Truthful agents are \dBMR agents, so the lower bound holds for the whole family of \dBMR agents.} (i.e., $\vr_t(\alpha) = \vx_t, \forall t \in [T], \forall \alpha \in \calA$). Faithful to our model, however, the learner can only observe $\vr_t(\alpha)$, without knowing its equivalence to $\vx_t$. We prove the theorem in two steps.

In the first step (Lem.~\ref{lem:sqrt}) we show a more relaxed lower bound of order $\Omega(\sqrt{T})$. To prove this, we fix a particular feature vector $\vx$ for the agent, and two different adversarial environments (call them $U$ and $L$) choosing the label of $\vx$ according to different Bernoulli probability distributions; one of them favors label $y_t = +1$, while the other favors label $y_t = -1$. The $\Omega(\sqrt{T})$ lower bound corresponds to the regret accrued by the learner in order to distinguish between $U$ and $L$. 

For the second step, we separate the horizon into $\Phi = \log (\lambda(\calA)/\lambda(\tp))$ phases, each comprised by $T/\Phi$ consecutive rounds. In all rounds of a phase, the agent has the same $\vx$ and the labels are constructed by adversarial environments $U$ and $L$. At the end of each phase, either $U$ or $L$ must have caused regret at least $\Omega(\sqrt{T/\Phi})$. According to which one it was, nature selects the feature vector for the next phase in a way that guarantees that \emph{one} of the best-fixed actions for all previous phases is \emph{still part of the optimal actions} at this phase. The general pattern that we follow for the feature vectors of each phase is $\vx_\phi = \left(\frac{1}{2}, \frac{1}{4}\left(1+ \kappa_\phi \cdot 2^\phi\right) \right)$ where $\kappa_\phi$ is a phase-specific constant for which $\kappa_0 = 1$ and $\kappa_{\phi+1} = 2\kappa_\phi + 1$ if the environment causing regret $\Omega(\sqrt{T})$ was $U$ and $\kappa_{\phi+1} = 2\kappa_\phi - 1$ otherwise. This pattern establishes that the feature vectors are spaced in a way that every algorithm would be penalized enough, in order to be able to discern their labels.  
\end{proof}

\section*{Acknowledgments and Funding Disclosures}

The authors are grateful to Kira Goldner, Akshay Krishnamurthy, Thodoris Lykouris, David Parkes, and Vasilis Syrgkanis for helpful discussions at different stages of this work, and to the anonymous reviewers for their comments and suggestions.

This work was partially supported by the National Science Foundation under grants CCF-1718549 and IIS-2007951 and the Harvard Data Science Initiative.

\bibliographystyle{alpha}
\bibliography{refs,refs2}

\begin{thebibliography}{33}
\providecommand{\natexlab}[1]{#1}
\providecommand{\url}[1]{\texttt{#1}}
\expandafter\ifx\csname urlstyle\endcsname\relax
  \providecommand{\doi}[1]{doi: #1}\else
  \providecommand{\doi}{doi: \begingroup \urlstyle{rm}\Url}\fi

\bibitem[Alon et~al.(2015)Alon, Cesa-Bianchi, Dekel, and Koren]{NCBDK15}
Noga Alon, Nicolo Cesa-Bianchi, Ofer Dekel, and Tomer Koren.
\newblock Online learning with feedback graphs: Beyond bandits.
\newblock In \emph{JMLR WORKSHOP AND CONFERENCE PROCEEDINGS}, volume~40.
  Microtome Publishing, 2015.

\bibitem[Auer et~al.(1995)Auer, Cesa-Bianchi, Freund, and
  Schapire]{auer1995gambling}
Peter Auer, Nicolo Cesa-Bianchi, Yoav Freund, and Robert~E Schapire.
\newblock Gambling in a rigged casino: The adversarial multi-armed bandit
  problem.
\newblock In \emph{Proceedings of IEEE 36th Annual Foundations of Computer
  Science}, pages 322--331. IEEE, 1995.

\bibitem[Balcan et~al.(2015)Balcan, Blum, Haghtalab, and Procaccia]{BBHP15}
Maria-Florina Balcan, Avrim Blum, Nika Haghtalab, and Ariel~D Procaccia.
\newblock Commitment without regrets: Online learning in stackelberg security
  games.
\newblock In \emph{Proceedings of the sixteenth ACM conference on economics and
  computation}, pages 61--78. ACM, 2015.

\bibitem[Bechavod et~al.(2020)Bechavod, Ligett, Wu, and Ziani]{BLWZ20}
Yahav Bechavod, Katrina Ligett, Zhiwei~Steven Wu, and Juba Ziani.
\newblock Causal feature discovery through strategic modification.
\newblock \emph{arXiv preprint arXiv:2002.07024}, 2020.

\bibitem[Ben-Porat and Tennenholtz(2017)]{BPT17}
Omer Ben-Porat and Moshe Tennenholtz.
\newblock Best response regression.
\newblock In \emph{Advances in Neural Information Processing Systems}, pages
  1499--1508, 2017.

\bibitem[Ben-Porat and Tennenholtz(2018)]{BPT18}
Omer Ben-Porat and Moshe Tennenholtz.
\newblock Competing prediction algorithms.
\newblock \emph{arXiv preprint arXiv:1806.01703}, 2018.

\bibitem[Blum et~al.(2014)Blum, Haghtalab, and Procaccia]{BHP14}
Avrim Blum, Nika Haghtalab, and Ariel~D Procaccia.
\newblock Learning optimal commitment to overcome insecurity.
\newblock In \emph{Advances in Neural Information Processing Systems}, pages
  1826--1834, 2014.

\bibitem[Br{\"u}ckner and Scheffer(2011)]{BS11}
Michael Br{\"u}ckner and Tobias Scheffer.
\newblock Stackelberg games for adversarial prediction problems.
\newblock In \emph{Proceedings of the 17th ACM SIGKDD international conference
  on Knowledge discovery and data mining}, pages 547--555. ACM, 2011.

\bibitem[Bubeck et~al.(2011)Bubeck, Munos, Stoltz, and Szepesv{\'a}ri]{BMSS11}
S{\'e}bastien Bubeck, R{\'e}mi Munos, Gilles Stoltz, and Csaba Szepesv{\'a}ri.
\newblock X-armed bandits.
\newblock \emph{Journal of Machine Learning Research}, 12\penalty0
  (May):\penalty0 1655--1695, 2011.

\bibitem[Bubeck et~al.(2012)Bubeck, Cesa-Bianchi, et~al.]{BCB12}
S{\'e}bastien Bubeck, Nicolo Cesa-Bianchi, et~al.
\newblock Regret analysis of stochastic and nonstochastic multi-armed bandit
  problems.
\newblock \emph{Foundations and Trends{\textregistered} in Machine Learning},
  5\penalty0 (1):\penalty0 1--122, 2012.

\bibitem[Bubeck et~al.(2017)Bubeck, Lee, and Eldan]{BLE17}
S{\'e}bastien Bubeck, Yin~Tat Lee, and Ronen Eldan.
\newblock Kernel-based methods for bandit convex optimization.
\newblock In \emph{Proceedings of the 49th Annual ACM SIGACT Symposium on
  Theory of Computing}, pages 72--85. ACM, 2017.

\bibitem[Cai et~al.(2015)Cai, Daskalakis, and Papadimitriou]{CDP15}
Yang Cai, Constantinos Daskalakis, and Christos Papadimitriou.
\newblock Optimum statistical estimation with strategic data sources.
\newblock In \emph{Conference on Learning Theory}, pages 280--296, 2015.

\bibitem[Chen et~al.(2018)Chen, Podimata, Procaccia, and Shah]{CPPS18}
Yiling Chen, Chara Podimata, Ariel~D Procaccia, and Nisarg Shah.
\newblock Strategyproof linear regression in high dimensions.
\newblock In \emph{Proceedings of the 2018 ACM Conference on Economics and
  Computation}, pages 9--26. ACM, 2018.

\bibitem[Cohen et~al.(2016)Cohen, Hazan, and Koren]{CHK16}
Alon Cohen, Tamir Hazan, and Tomer Koren.
\newblock Online learning with feedback graphs without the graphs.
\newblock In \emph{International Conference on Machine Learning}, pages
  811--819, 2016.

\bibitem[Cummings et~al.(2015)Cummings, Ioannidis, and Ligett]{CIL15}
Rachel Cummings, Stratis Ioannidis, and Katrina Ligett.
\newblock Truthful linear regression.
\newblock In \emph{Conference on Learning Theory}, pages 448--483, 2015.

\bibitem[Dekel et~al.(2010)Dekel, Fischer, and Procaccia]{DFP10}
Ofer Dekel, Felix Fischer, and Ariel~D Procaccia.
\newblock Incentive compatible regression learning.
\newblock \emph{Journal of Computer and System Sciences}, 76\penalty0
  (8):\penalty0 759--777, 2010.

\bibitem[Dong et~al.(2018)Dong, Roth, Schutzman, Waggoner, and Wu]{drsww18}
Jinshuo Dong, Aaron Roth, Zachary Schutzman, Bo~Waggoner, and Zhiwei~Steven Wu.
\newblock Strategic classification from revealed preferences.
\newblock In \emph{Proceedings of the 2018 ACM Conference on Economics and
  Computation}, pages 55--70. ACM, 2018.

\bibitem[Flaxman et~al.(2005)Flaxman, Kalai, and McMahan]{FKM05}
Abraham~D Flaxman, Adam~Tauman Kalai, and H~Brendan McMahan.
\newblock Online convex optimization in the bandit setting: gradient descent
  without a gradient.
\newblock In \emph{Proceedings of the sixteenth annual ACM-SIAM symposium on
  Discrete algorithms}, pages 385--394. Society for Industrial and Applied
  Mathematics, 2005.

\bibitem[Hardt et~al.(2016)Hardt, Megiddo, Papadimitriou, and Wootters]{HMPW16}
Moritz Hardt, Nimrod Megiddo, Christos Papadimitriou, and Mary Wootters.
\newblock Strategic classification.
\newblock In \emph{Proceedings of the 2016 ACM conference on innovations in
  theoretical computer science}, pages 111--122. ACM, 2016.

\bibitem[Kleinberg et~al.(2008)Kleinberg, Slivkins, and Upfal]{KSU08}
Robert Kleinberg, Aleksandrs Slivkins, and Eli Upfal.
\newblock Multi-armed bandits in metric spaces.
\newblock In \emph{Proceedings of the fortieth annual ACM symposium on Theory
  of computing}, pages 681--690. ACM, 2008.

\bibitem[Lattimore and Szepesv{\'a}ri(2019)]{LS19}
Tor Lattimore and Csaba Szepesv{\'a}ri.
\newblock Bandit algorithms.
\newblock \url{https://tor-lattimore.com/downloads/book/book.pdf}, 2019.

\bibitem[Letchford et~al.(2009)Letchford, Conitzer, and Munagala]{LCM09}
Joshua Letchford, Vincent Conitzer, and Kamesh Munagala.
\newblock Learning and approximating the optimal strategy to commit to.
\newblock In \emph{International Symposium on Algorithmic Game Theory}, pages
  250--262. Springer, 2009.

\bibitem[Liu and Chawla(2009)]{WC09}
Wei Liu and Sanjay Chawla.
\newblock A game theoretical model for adversarial learning.
\newblock In \emph{2009 IEEE International Conference on Data Mining
  Workshops}, pages 25--30. IEEE, 2009.

\bibitem[Marecki et~al.(2012)Marecki, Tesauro, and Segal]{MTS12}
Janusz Marecki, Gerry Tesauro, and Richard Segal.
\newblock Playing repeated stackelberg games with unknown opponents.
\newblock In \emph{Proceedings of the 11th International Conference on
  Autonomous Agents and Multiagent Systems-Volume 2}, pages 821--828.
  International Foundation for Autonomous Agents and Multiagent Systems, 2012.

\bibitem[Meir et~al.(2011)Meir, Almagor, Michaely, and Rosenschein]{MAMR11}
Reshef Meir, Shaull Almagor, Assaf Michaely, and Jeffrey~S Rosenschein.
\newblock Tight bounds for strategyproof classification.
\newblock In \emph{The 10th International Conference on Autonomous Agents and
  Multiagent Systems-Volume 1}, pages 319--326. International Foundation for
  Autonomous Agents and Multiagent Systems, 2011.

\bibitem[Meir et~al.(2012)Meir, Procaccia, and Rosenschein]{MPR12}
Reshef Meir, Ariel~D Procaccia, and Jeffrey~S Rosenschein.
\newblock Algorithms for strategyproof classification.
\newblock \emph{Artificial Intelligence}, 186:\penalty0 123--156, 2012.

\bibitem[Perdomo et~al.(2020)Perdomo, Zrnic, Mendler-D\"{u}nner, and
  Hardt]{PZMH20}
Juan Perdomo, Tijana Zrnic, Celestine Mendler-D\"{u}nner, and Moritz Hardt.
\newblock Performative prediction.
\newblock In \emph{Proceedings of 37th International Conference on Machine
  Learning}, 2020.

\bibitem[Perote and Perote-Pena(2004)]{PP04}
Javier Perote and Juan Perote-Pena.
\newblock Strategy-proof estimators for simple regression.
\newblock \emph{Mathematical Social Sciences}, 47\penalty0 (2):\penalty0
  153--176, 2004.

\bibitem[Shavit et~al.(2020)Shavit, Edelman, and Axelrod]{SEA20}
Yonadav Shavit, Benjamin~L Edelman, and Brian Axelrod.
\newblock Causal strategic linear regression.
\newblock In \emph{Proceedings of the 37th International Conference on Machine
  Learning}, 2020.

\bibitem[Slivkins et~al.(2019)]{S19}
Aleksandrs Slivkins et~al.
\newblock Introduction to multi-armed bandits.
\newblock \emph{Foundations and Trends{\textregistered} in Machine Learning},
  12\penalty0 (1-2):\penalty0 1--286, 2019.

\bibitem[Stanley et~al.(2004)]{S04}
Richard~P Stanley et~al.
\newblock An introduction to hyperplane arrangements.
\newblock \emph{Geometric combinatorics}, 13:\penalty0 389--496, 2004.

\bibitem[Ustun et~al.(2019)Ustun, Spangher, and Liu]{ustun2019actionable}
Berk Ustun, Alexander Spangher, and Yang Liu.
\newblock Actionable recourse in linear classification.
\newblock In \emph{Proceedings of the Conference on Fairness, Accountability,
  and Transparency}, pages 10--19, 2019.

\bibitem[Zaslavsky(1975)]{Z75}
Thomas Zaslavsky.
\newblock Counting the faces of cut-up spaces.
\newblock \emph{Bulletin of the American Mathematical Society}, 81\penalty0
  (5):\penalty0 916--918, 1975.

\end{thebibliography}

\newpage
\appendix
{\Large {\bf Supplementary Material for Paper ID: 4406}}

\section{Appendix for Section~\ref{sec:regret-notions}}\label{app:prelims}

\subsection{Missing Proofs from Section~\ref{sec:regret-notions}}\label{app:regr2}

\begin{proof}[Proof of Theorem~\ref{thm:incomp-sr-ext}]
For completeness, we outline again the instance described earlier. Let an action space $\calA = \left\{h, h'\right\}$ such that $h = (1,1,-1)$ and $h' = (0.5, -1, 0.25)$, and let $\delta = 0.1$. The environment draws feature vectors $\vx^1 = (0.4, 0.5), \vx^2 = (0.6, 0.6), \vx^3 = (0.8, 0.9), \vx^4 = (0.65,0.3)$ with probabilities $p^1 = 0.05, p^2 = 0.15, p^3 = 0.05, p^4 = 0.75$ respectively, and with labels $y^1 = -1, y^2 = -1, y^3 = +1, y^4 = +1$. For clarity, Figure~\ref{fig:incomp-ext-stack} provides a pictorial depiction of the example, along with the best responses of the agents for each action. We first explain the values that the loss function takes according to the best-responses of the agents and the feature vectors drawn by nature.

\begin{itemize}
\item $\left(\E\left[ \ell \left(h, \vr_t\left( h \right), y_t\right) \right]\right)$ When the learner plays $h$ against agent's responses $\vr_t\left(h\right)$, she makes a mistake in her prediction every time that the environment drew $\vx_1$ or $\vx_2$ for the agent. This is because for $\vx^1$ the agent can misreport and fool the hyperplane. For $\vx^2$ the agent does not need to misreport; hyperplane $h$ classifies it as $+1$ erroneously already. For $\vx^4$ the agent can misreport and get correctly classified and for $\vx^3$ the hyperplane is correct all by itself. Hence: 
\[\E\left[ \ell \left(h, \vr_t\left( h \right) , y_t\right)\right] = \Pr\left[\text{nature draws } \vx^1 \text{ or } \vx^2\right]=p^1 + p^2 = 0.2\]
\item $\left(\E\left[ \ell \left(h', \vr_t\left( h' \right), y_t \right) \right]\right)$ When the learner plays $h'$ against agent's responses $\vr_t\left(h'\right)$, she makes a mistake in her prediction every time that the environment drew $\vx_1$ or $\vx_2$ or $\vx_3$ for the agent. This is because for both $\vx^1$ and $\vx^2$ the agent could misreport and fool the hyperplane and for $\vx^3$ the hyperplane classifies it incorrectly, but there is nothing that the learner can do to change it (due to $\delta$-boundedness). For $\vx^4$ the hyperplane classifies the point correctly, without the need of misreport from the agent. Hence: 
\[\E\left[ \ell \left(h', \vr_t\left( h' \right), y_t \right)\right] = \Pr\left[\text{nature draws } \vx^1 \text{ or } \vx^2 \text{ or } \vx^3\right] = p^1 + p^2 + p^3 = 0.25\]
\item $\left(\E\left[ \ell \left(h, \vr_t\left( h' \right), y_t\right) \right]\right)$ When the learner plays $h$ against agent's responses $\vr_t\left(h'\right)$, she makes a mistake in her prediction every time that the environment drew $\vx_2$ or $\vx_4$ for the agent, i.e., \[\E\left[ \ell \left(h, \vr_t\left( h' \right), y_t \right)\right]= \Pr\left[\text{nature draws } \vx^2 \text{ or } \vx^4\right] = p^2 + p^4 = 0.9\]
\item $\left(\E\left[ \ell \left(h', \vr_t\left( h \right), y_t \right) \right]\right)$ When the learner plays $h'$ against agent's responses $\vr_t\left(h\right)$, she makes a mistake in her prediction every time that the environment drew $\vx_3$ for the agent, i.e., \[\E\left[ \ell \left(h', \vr_t\left( h \right), y_t \right)\right]= \Pr\left[\text{nature draws } \vx^3 \right] = p^3 = 0.05\]
\end{itemize}

We now prove that \emph{any} sequence with sublinear Stackelberg regret will have linear external regret. Observe that for the Stackelberg regret, the best fixed action in hindsight is action $h$, with cumulative loss $0.2T$. Therefore, any action sequence that yields sublinear Stackelberg regret must have cumulative loss $0.2T + o(T)$, meaning that action $h'$ is played at most $o(T)$ times, while action $h$ is played at least $T - o(T)$ times. Given this, we proceed by identifying the best fixed action for the \emph{external} regret in any action such sequence $\left\{\alpha_t\right\}_{t=1}^T$. For that, we compute the loss that any of the actions in $\calA$ would incur, had they been the fixed action for sequence $\{\alpha_t\}_{t=1}^T$. 

Assume that action $h$ was the fixed action in hindsight for the sequence $\left\{\alpha_t\right\}_{t=1}^T$. Then, the cumulative loss incurred by playing $h$ constantly for $T$ rounds, denoted by $\sum_{t=1}^T \ell\left(h, \vr_t(\alpha_t), y_t\right)$ is:
\begin{align*}
\underbrace{0.2(T - o(T))}_{\substack{\text{loss incurred when playing} \\ \text{$h$ against $\vr_t(h)$}}}&+ \underbrace{0.9 o(T)}_{\substack{\text{loss incurred when playing} \\ \text{$h$ against $\vr_t(h')$}}} 
\end{align*}

Assume that action $h'$ was the fixed action in hindsight for the aforementioned action sequence. Then, the cumulative loss incurred by playing $h'$, denoted by $\sum_{t=1}^T \ell(h', \vr_t(\alpha_t), y_t)$ is equal to 
\begin{align*}
\underbrace{0.05(T - o(T))}_{\substack{\text{loss incurred when playing} \\ \text{$h'$ against $\vr_t(h)$}}}&+ \underbrace{0.25o(T)}_{\substack{\text{loss incurred when playing} \\ \text{$h'$ against $\vr_t(h')$}}} 
\end{align*}
Hence, we have that the best fixed action in hindsight for the external regret for the sequence $\left\{ \alpha_t \right\}_{t=1}^T$ is action $h'$. This means, however, that for the sequence $\left\{\alpha_t \right\}_{t=1}^T$, which guaranteed sublinear Stackelberg regret, the external regret is \emph{linear} in $T$: 
\begin{equation*}
R(T) \geq 0.2 T - 0.05 T \geq 0.15 T
\end{equation*}
Moving forward, we prove that \emph{any} action sequence with sublinear external regret will have linear Stackelberg regret. Since we previously proved that any action sequence $\left\{\alpha_t \right\}_{t=1}^T$ with sublinear Stackelberg regret plays at least $T - o(T)$ times action $h$ and this resulted in having linear external regret, we only need to consider sequences where action $h'$ is played $T - o(T)$ times, while action $h$ is played for $o(T)$ times. For any such action sequence, it suffices to show that the external regret will be sublinear, since for any such sequence the Stackelberg regret will be linear: 
\begin{equation*}
\calR(T) = 0.2o(T) + 0.25\cdot (T - o(T)) - 0.2 T \geq 0.05T
\end{equation*}
Similarly to the analysis above, we distinguish the following cases. Assume that action $h$ was the fixed action in hindsight for $\{\alpha_t\}_{t=1}^T$. Then, the cumulative loss incurred by playing $h$ is $\sum_{t=1}^T \ell\left(h, \vr_t(\alpha_t), y_t\right) = 0.2o(T) + 0.9(T - o(T))$. Assume that action $h'$ was the fixed action in hindsight for the aforementioned action sequence. Then, the cumulative loss incurred by playing $h'$ is $\sum_{t=1}^T \ell\left(h', \vr_t(\alpha_t), y_t\right) = 0.05o(T) + 0.25(T - o(T))$. As a result, the best fixed action in hindsight for the Stackelberg regret would be action $h'$, yielding external regret $o(T)$, i.e., sublinear. This concludes our proof. 
\end{proof}

\subsection{Purely Adversarial and Cooperative Stackelberg Games}\label{app:pure-sg}

Despite the worst-case incompatibility results that we have shown for the notions of external and Stackelberg regret, there are families of repeated games for which there is a clear \emph{hierarchy} between the two. In this subsection, we study two of the most important ones; the family of \emph{Purely Adversarial}, and the family of \emph{Purely Cooperative Stackelberg Games.}

\begin{definition}[Purely Adversarial Stackelberg Game (PASGs)]
We call a Stackelberg Game \emph{Purely Adversarial}, if for all actions $\alpha' \in \calA$ for the loss of the learner it holds that: $\ell(\alpha, \vr(\alpha), y_t) \geq \ell (\alpha, \vr(\alpha'), y_t)$, i.e., the agent inflicts the \emph{highest} loss to the learner, when best-responding to the action to which she committed.
\end{definition}

\begin{definition}[Purely Cooperative Stackelberg Game (PCSGs)]
We call a Stackelberg Game \emph{Purely Cooperative} if for all actions $\alpha' \in \calA$ for the loss of the learner it holds that: $\ell(\alpha, \vr(\alpha), y_t) \leq \ell (\alpha, \vr(\alpha'), y_t)$, i.e., the agent inflicts the \emph{lowest} loss to the learner, when best-responding to the action to which she committed.
\end{definition}

\begin{wraptable}{r}{0.3\textwidth}
    \centering
    \begin{tabular}{c|cc}
        {} &  $\vr_t(\alpha)$  & $\vr_t(\alpha')$ \\ \midrule
        {$\alpha$}  & $(7,-1)$ & $(6,-3)$ \\
        $\alpha'$ & $(6,-3)$ & $(7,-1)$ \\
   \end{tabular}
   \caption{Example of a PASG that is not zero-sum.}\label{ex:psg-not-zs}
\end{wraptable}
We remark here that despite their similarities, PASGs and PCSGs are \emph{not} equivalent to zero-sum games; in fact, it is easy to see that every zero-sum game is either a PASG or a PCSG, but the converse is \emph{not} true (see e.g., the example loss matrix given in Table~\ref{ex:psg-not-zs} where the first coordinate of tuple $(i,j)$ corresponds to the loss of the learner, and the second to the loss of the agent). Next, we outline the hierarchy between external and Stackelberg regret in repeated PASGs and PCSGs.
\begin{lemma}\label{lem:pasg3}
In repeated PASGs, Stackelberg regret is upper bounded by external regret, i.e., $\calR(T) \leq R(T)$. In other words, any no-Stackelberg regret sequence of actions is also a no-external regret one. 
\end{lemma}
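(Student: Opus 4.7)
The plan is to observe that the two regret notions share the same cumulative loss term $\sum_t \ell(\alpha_t, \vr_t(\alpha_t), y_t)$ for the learner's played sequence, so it is enough to compare their benchmarks. Concretely, I would show that in a PASG the Stackelberg benchmark $\min_{\alphast} \sum_t \ell(\alphast, \vr_t(\alphast), y_t)$ is at least as large as the external benchmark $\min_{\alphast_E} \sum_t \ell(\alphast_E, \vr_t(\alpha_t), y_t)$. Subtracting, the inequality $\calR(T) \leq R(T)$ falls out immediately.

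The key step is applying the PASG condition to an arbitrary fixed action $\alpha \in \calA$ against the actually played sequence $\{\alpha_t\}_{t \in [T]}$. By the definition of PASG, $\ell(\alpha, \vr_t(\alpha_t), y_t) \leq \ell(\alpha, \vr_t(\alpha), y_t)$ pointwise in $t$, since the adversary's best response to $\alpha$ inflicts the worst loss on the learner when the learner plays $\alpha$. Summing over $t$ gives $\sum_t \ell(\alpha, \vr_t(\alpha_t), y_t) \leq \sum_t \ell(\alpha, \vr_t(\alpha), y_t)$ for every $\alpha$, and in particular the left-hand side is at least the minimum over $\alphast_E$. Taking the infimum over $\alpha$ on the right-hand side then yields
\begin{equation*}
\min_{\alphast_E \in \calA} \sum_{t \in [T]} \ell(\alphast_E, \vr_t(\alpha_t), y_t) \;\leq\; \min_{\alphast \in \calA} \sum_{t \in [T]} \ell(\alphast, \vr_t(\alphast), y_t).
\end{equation*}

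Finally, I would subtract both sides of this inequality from $\sum_t \ell(\alpha_t, \vr_t(\alpha_t), y_t)$; the direction flips under negation and delivers $\calR(T) \leq R(T)$, which is exactly the claim. The second sentence of the lemma (that every no-Stackelberg-regret sequence is also no-external-regret) is then an immediate corollary: contrapositively, if $R(T) = o(T)$ fails then $\calR(T) = o(T)$ also fails, but the stated implication is that if $\calR(T)$ is sublinear then so is $R(T)$, which is the wrong direction; I would re-read the lemma to confirm — in fact, $\calR(T) \leq R(T)$ implies that sublinear \emph{external} regret gives sublinear Stackelberg regret, so the second sentence is the direct consequence in that reading.

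There is no real obstacle here: the entire argument is a one-line use of the PASG inequality followed by a min-over-$\alpha$ comparison. The only subtlety worth flagging in the write-up is that the PASG hypothesis must be applied with the \emph{fixed} comparator $\alpha$ in both loss slots and the played $\alpha_t$ governing the adversary's response — i.e., one uses $\ell(\alpha, \vr_t(\alpha_t), y_t) \leq \ell(\alpha, \vr_t(\alpha), y_t)$ rather than swapping the roles — which is exactly what the definition supplies.
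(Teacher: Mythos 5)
Your proof is correct and follows essentially the same route as the paper's: the paper instantiates the PASG inequality $\ell(\alphast, \vr_t(\alpha_t), y_t) \leq \ell(\alphast, \vr_t(\alphast), y_t)$ at the Stackelberg minimizer and chains it with the optimality of the external benchmark, which is exactly your benchmark comparison specialized to $\alpha = \alphast$. You are also right to flag the second sentence: $\calR(T) \leq R(T)$ yields ``no-external-regret implies no-Stackelberg-regret,'' so the lemma's informal restatement has the implication reversed (it appears to have been swapped with the corresponding sentence of the PCSG lemma).
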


\begin{proof}
Let $\talpha = \arg\min_{\alpha \in \calA} \sum_{t=1}^T \ell(\alpha, \vr_t(\alpha_t),y_t)$ and $\alphast = \arg\min_{\alpha \in \calA} \sum_{t=1}^T \ell(\alpha, \vr_t(\alpha), y_t)$. Then: 
\begin{align*}
R(T)  &= \sum_{t=1}^T \ell(\alpha_t, \vr_t(\alpha_t),y_t) - \sum_{t=1}^T \ell(\talpha, \vr_t(\alpha_t),y_t) \tag{definition of external regret}\\
            &\geq \sum_{t=1}^T \ell(\alpha_t, \vr_t(\alpha_t),y_t) - \sum_{t=1}^T \ell(\alphast, \vr_t(\alpha_t),y_t) \tag{definition of $\talpha$}\\
            &\geq \sum_{t=1}^T \ell(\alpha_t, \vr_t(\alpha_t),y_t) - \sum_{t=1}^T \ell(\alphast, \vr_t(\alphast),y_t) \tag{$\ell(\alphast, \vr_t(\alpha_t), y_t) \leq \ell(\alphast, \vr_t(\alphast),y_t)$} \\ 
            &= \calR(T)
\end{align*}
\end{proof}
On the other hand, for PCSGs it holds that: 
\begin{lemma}\label{lem:pcsg3}
In repeated PCSGs, Stackelberg regret is lower bounded by external regret, i.e., $\calR(T) \geq R(T)$. In other words, any no-external regret sequence of actions is also a no-Stackelberg regret one. 
\end{lemma}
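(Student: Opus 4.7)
The plan is to mirror the argument of Lemma~\ref{lem:pasg3} but swap which quantity gets ``inflated'' by the cooperative property. Let $\talpha = \arg\min_{\alpha \in \calA} \sum_{t=1}^T \ell(\alpha, \vr_t(\alpha_t), y_t)$ be the best-fixed action in hindsight for external regret, and $\alphast = \arg\min_{\alpha \in \calA} \sum_{t=1}^T \ell(\alpha, \vr_t(\alpha), y_t)$ the Stackelberg benchmark. Since $\calR(T)$ and $R(T)$ share the learner's realized cumulative loss $\sum_t \ell(\alpha_t, \vr_t(\alpha_t), y_t)$, the inequality $\calR(T) \geq R(T)$ is equivalent to the comparison $\sum_t \ell(\alphast, \vr_t(\alphast), y_t) \leq \sum_t \ell(\talpha, \vr_t(\alpha_t), y_t)$ between the two benchmarks.

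I would establish this by routing through the intermediate quantity $\sum_t \ell(\talpha, \vr_t(\talpha), y_t)$. First, by the optimality of $\alphast$ for the Stackelberg objective, $\sum_t \ell(\alphast, \vr_t(\alphast), y_t) \leq \sum_t \ell(\talpha, \vr_t(\talpha), y_t)$, since $\talpha$ is just one candidate action. Second, the defining PCSG inequality $\ell(\talpha, \vr_t(\talpha), y_t) \leq \ell(\talpha, \vr_t(\alpha'), y_t)$ applied pointwise with $\alpha' = \alpha_t$ gives $\sum_t \ell(\talpha, \vr_t(\talpha), y_t) \leq \sum_t \ell(\talpha, \vr_t(\alpha_t), y_t)$. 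Chaining these two inequalities yields the desired bound, and subtracting the learner's realized loss from both sides converts this into $\calR(T) \geq R(T)$.

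There is no real obstacle here; the lemma is a direct dualization of Lemma~\ref{lem:pasg3} that simply relies on (i) the definition of the Stackelberg benchmark and (ii) the single structural hypothesis of a PCSG. The only subtlety worth flagging in the write-up is that the cooperative inequality must be applied with $\talpha$ in the first slot (not $\alphast$), because the agent's best response to $\alpha_t$ appears only in the external-regret benchmark, and we need to relate that term to a Stackelberg-style quantity for the same fixed action $\talpha$.
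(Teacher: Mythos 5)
Your proposal is correct and uses exactly the same two inequalities as the paper's proof --- the optimality of $\alphast$ for the Stackelberg benchmark and the PCSG property applied with the fixed action $\talpha$ in the first slot against the played actions $\alpha_t$ --- merely presented as a comparison of the two benchmarks rather than as a chain starting from $R(T)$. No gaps.
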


\begin{proof}
Let $\talpha = \arg\min_{\alpha \in \calA} \sum_{t=1}^T \ell(\alpha, \vr_t(\alpha_t),y_t)$ and $\alphast = \arg\min_{\alpha \in \calA} \sum_{t=1}^T \ell(\alpha, \vr_t(\alpha),y_t)$. Then: 
\begin{align*}
R(T)        &= \sum_{t=1}^T \ell(\alpha_t, \vr_t(\alpha_t),y_t) - \sum_{t=1}^T \ell(\talpha, \vr_t(\alpha_t),y_t) \tag{definition of external regret}\\
            &\leq \sum_{t=1}^T \ell(\alpha_t, \vr_t(\alpha_t),y_t) - \sum_{t=1}^T \ell(\talpha, \vr_t(\talpha),y_t) &\tag{definition of PCSGs} \\
            &\leq \sum_{t=1}^T \ell(\alpha_t, \vr_t(\alpha_t),y_t) - \sum_{t=1}^T \ell(\alphast, \vr_t(\alphast),y_t) &\tag{definition of $\alphast$}\\
            &= \calR(T) 
\end{align*}
\end{proof}

\subsection{The Function $\ell(\alpha, \vr_t(\alpha),y_t)$}\label{app:loss-func}

As we mentioned in the main body, the learner's loss function $\ell(\alpha, \vr_t(\alpha),y_t)$ is generally not Lipschitz in her chosen action $\alpha$. For that, we study below the quantity $\left|\ell(\alpha, \vr_t(\alpha),y_t) - \ell(\alpha',\vr_t(\alpha'),y_t) \right|$.

\begin{lemma}
Let $\ell(x,y,z)$ denote the learner's loss function in a Stackelberg game, such that $\ell$ is $L_1$-Lipschitz with respect to the first argument, and $L_2$-Lipschitz with respect to the second. Then, for the learner's loss between any two actions $\alpha, \alpha' \in \calA$ it holds that: $$\abs*{\ell(\alpha,\vr_t(\alpha),y_t) - \ell(\alpha', \vr_t(\alpha'),y_t)} \leq \max \left\{L_1\cdot \norm*{\alpha' - \alpha}, L_2 \cdot \norm*{\vr_t(\alpha) - \vr_t(\alpha')}\right\}$$
\end{lemma}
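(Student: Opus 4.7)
The plan is to introduce an intermediate term that shares one argument with each of the two expressions being compared, apply the triangle inequality, and then invoke the two Lipschitz assumptions on the resulting pieces separately.

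Concretely, first I would add and subtract the mixed evaluation $\ell(\alpha', \vr_t(\alpha), y_t)$ to obtain
\begin{align*}
\abs*{\ell(\alpha, \vr_t(\alpha), y_t) - \ell(\alpha', \vr_t(\alpha'), y_t)} &\leq \abs*{\ell(\alpha, \vr_t(\alpha), y_t) - \ell(\alpha', \vr_t(\alpha), y_t)} \\
&\quad + \abs*{\ell(\alpha', \vr_t(\alpha), y_t) - \ell(\alpha', \vr_t(\alpha'), y_t)}.
\end{align*}
In the first absolute value only the first argument changes (the second is fixed at $\vr_t(\alpha)$ and the third at $y_t$), so the $L_1$-Lipschitz hypothesis bounds it by $L_1 \norm*{\alpha - \alpha'}$. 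In the second, only the second argument changes (from $\vr_t(\alpha)$ to $\vr_t(\alpha')$), so the $L_2$-Lipschitz hypothesis bounds it by $L_2 \norm*{\vr_t(\alpha) - \vr_t(\alpha')}$. Combining the two estimates yields the sum-of-two-terms bound $L_1 \norm*{\alpha - \alpha'} + L_2 \norm*{\vr_t(\alpha) - \vr_t(\alpha')}$. One could equally insert $\ell(\alpha, \vr_t(\alpha'), y_t)$ as the pivot instead; both choices produce the same bound by symmetry.

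The main obstacle is tightening this natural sum bound to the max form appearing in the lemma's statement. Since $a + b \leq 2\max\{a,b\}$ for nonnegative $a,b$, the triangle-inequality route above automatically recovers the stated max bound only up to a factor of two. Removing that factor would seem to require either exploiting sign cancellations between the two intermediate differences, via a case analysis that dispenses with the slack built into the triangle inequality, or leveraging additional structure coming from $\vr_t$ being a best response (which was not invoked above). I expect this sharpening step to be the delicate part of the argument; an alternative reading is that the statement is meant with a sum on the right-hand side, or that the constants implicitly absorb a factor of two. In either case, the triangle-inequality decomposition above is the engine that drives the proof.
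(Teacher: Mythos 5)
Your triangle-inequality decomposition is sound as far as it goes, but it proves a strictly weaker statement than the lemma: it yields the sum $L_1\norm*{\alpha-\alpha'} + L_2\norm*{\vr_t(\alpha)-\vr_t(\alpha')}$, and the factor of two separating this from the claimed maximum is not absorbed anywhere in the paper --- the lemma is stated and used with the max. So, as you yourself flag, the proposal does not establish the result; the sharpening you defer is not a refinement of your argument but a different argument. The structural reason your route cannot be pushed through is that inserting a single pivot and bounding both resulting differences in absolute value discards exactly the sign information that the max bound depends on: once you take absolute values of both pieces, adding them is forced.

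What the paper actually does is the case analysis you anticipated but did not carry out. It splits into the four sign combinations of the two ``cross'' differences $\ell(\alpha,\vr_t(\alpha),y_t)-\ell(\alpha,\vr_t(\alpha'),y_t)$ and $\ell(\alpha',\vr_t(\alpha'),y_t)-\ell(\alpha',\vr_t(\alpha),y_t)$, i.e., on whether best-responding raises or lowers the learner's loss at each of the two actions. The key device, which your write-up misses when it says the two pivot choices are interchangeable ``by symmetry,'' is to use \emph{both} pivots --- one for the upper bound and one for the lower bound of the \emph{same} signed difference --- each time dropping a term whose sign is known from the case hypothesis and then applying a single Lipschitz condition. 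For example, when best-responding weakly increases the loss at both actions,
\begin{align*}
\ell(\alpha',\vr_t(\alpha'),y_t)-\ell(\alpha,\vr_t(\alpha),y_t) &\leq \ell(\alpha',\vr_t(\alpha'),y_t)-\ell(\alpha,\vr_t(\alpha'),y_t) \leq L_1\norm*{\alpha'-\alpha}, \\
\ell(\alpha',\vr_t(\alpha'),y_t)-\ell(\alpha,\vr_t(\alpha),y_t) &\geq \ell(\alpha',\vr_t(\alpha),y_t)-\ell(\alpha,\vr_t(\alpha),y_t) \geq -L_1\norm*{\alpha'-\alpha},
\end{align*}
so only $L_1$ ever appears and nothing is summed; the mixed sign patterns are where $L_2$ enters, controlling the one direction that $L_1$ cannot. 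Your closing parenthetical is also on target: the sign hypotheses encode genuine structure of the interaction (they are exactly the purely adversarial / purely cooperative / mixed regimes the paper discusses), and without them the sum bound is what one can prove --- indeed, even in the paper the mixed subcase is the delicate step of the argument. So the verdict is a genuine gap: the missing ingredient is the four-way sandwich argument, not a tightening of the triangle inequality.
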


\begin{proof}
We split the set of actions $\calA$ into pairs $(\alpha, \alpha')$ satisfying the following properties: 
\begin{enumerate}
\item\label{pair:1} For pair $(\alpha, \alpha')$ it holds that: $\ell(\alpha, \vr_t(\alpha),y_t) \geq \ell(\alpha, \vr_t(\alpha'),y_t)$ and $\ell(\alpha', \vr_t(\alpha'),y_t) \geq \ell(\alpha', \vr_t(\alpha),y_t)$. In other words, by best-responding the agent causes the biggest loss to the learner. Observe that, given that $\ell$ is $L_1$-Lipschitz in its first argument, we have that: 
\begin{equation*}
\ell(\alpha',\vr_t(\alpha'),y_t) - \ell(\alpha,\vr_t(\alpha),y_t) \geq \ell(\alpha', \vr_t(\alpha),y_t) - \ell(\alpha, \vr_t(\alpha),y_t) \geq -L_1 \norm*{\alpha' - \alpha}
\end{equation*}
and 
\begin{equation*}
\ell(\alpha',\vr_t(\alpha'),y_t) - \ell(\alpha, \vr_t(\alpha),y_t) \leq \ell(\alpha', \vr_t(\alpha'),y_t) - \ell(\alpha, \vr_t(\alpha'),y_t) \leq L_1 \norm*{\alpha'-\alpha}
\end{equation*}
Therefore, for such pairs of actions function $\ell(\alpha, \vr_t(\alpha),y_t)$ is $L_1$-Lipschitz with respect to $\alpha$. 
\item\label{pair:2} For pair $(\alpha, \alpha')$ it holds that: $\ell(\alpha, \vr_t(\alpha),y_t) \leq \ell(\alpha, \vr_t(\alpha'),y_t)$ and $\ell(\alpha', \vr_t(\alpha'),y_t) \leq \ell(\alpha', \vr_t(\alpha), y_t)$. In other words, by best-responding the agent causes the smallest loss to the learner. Similarly to Case~\ref{pair:1}, it is easy to see that on these pairs of actions, function $\ell(\alpha, \vr_t(\alpha),y_t)$ is again $L_1$-Lipschitz with respect to $\alpha$. 
\item \label{eq:not-lip-case3} For pair $(\alpha, \alpha')$ it holds that 
\begin{equation}\label{eq:lip1}
\ell(\alpha, \vr_t(\alpha),y_t) \geq \ell(\alpha, \vr_t(\alpha'),y_t)
\end{equation} 
and 
\begin{equation}\label{eq:lip2}
\ell(\alpha', \vr_t(\alpha'),y_t) \leq \ell(\alpha', \vr_t(\alpha), y_t)
\end{equation}
From Equations~\eqref{eq:lip1} and~\eqref{eq:lip2} we have that 
\begin{equation}\label{eq:lip-rhs-1}
\ell(\alpha', \vr_t(\alpha'),y_t) - \ell(\alpha, \vr_t(\alpha),y_t) \leq L_1 \norm*{\alpha' - \alpha}
\end{equation} We further distinguish the following cases: 
\begin{enumerate}
\item $\ell(\alpha, \vr_t(\alpha), y_t) = \ell(\alpha',\vr_t(\alpha'),y_t)$. Clearly, $\abs*{\ell(\alpha, \vr_t(\alpha),y_t) - \ell(\alpha', \vr_t(\alpha'),y_t)} \leq L_1 \cdot \norm*{\alpha' - \alpha}$ holds.
\item $\ell(\alpha, \vr_t(\alpha),y_t) \leq \ell(\alpha', \vr_t(\alpha'),y_t)$. From Equation~\eqref{eq:lip-rhs-1}, we get: $\abs*{\ell(\alpha, \vr_t(\alpha), y_t)- \ell(\alpha', \vr_t(\alpha'),y_t)} \leq L_1 \cdot \norm*{\alpha' - \alpha}$.
\item $\ell(\alpha, \vr_t(\alpha),y_t) \geq \ell(\alpha', \vr_t(\alpha'),y_t)$
Observe now that if $\ell(\alpha, \vr_t(\alpha),y_t) \geq \ell(\alpha', \vr_t(\alpha),y_t)$, then from Equation~\eqref{eq:lip2} the latter is lower bounded by $\ell(\alpha',\vr_t(\alpha'),y_t)$, which leads to a contradiction. Hence, it has to be the case that $\ell(\alpha, \vr_t(\alpha),y_t) \leq \ell(\alpha', \vr_t(\alpha),y_t)$. The latter, combined with the assumption that $\ell$ is $L_2$ - Lipschitz with respect to its second argument, implies that $\ell(\alpha', \vr_t(\alpha'),y_t) - \ell(\alpha, \vr_t(\alpha),y_t) \geq -L_2 \cdot \norm*{\vr_t(\alpha') - \vr_t(\alpha)}$. 
\end{enumerate}
\item For the pair $(\alpha, \alpha')$ it holds that $\ell(\alpha,\vr_t(\alpha),y_t) \leq \ell(\alpha,\vr_t(\alpha'),y_t)$ and $\ell(\alpha',\vr_t(\alpha'),y_t) \geq \ell(\alpha',\vr_t(\alpha),y_t)$. The case is analogous to Case~\ref{eq:not-lip-case3}.
\end{enumerate}
\end{proof}
To summarize, in PASGs (Case~\ref{pair:1} from aforementioned proof) and PCSGs (Case~\ref{pair:2} of aforementioned proof) the loss function written in terms of the action of the agent is \emph{Lipschitz}, i.e., $\abs*{\ell(\alpha, \vr_t(\alpha),y_t) - \ell(\alpha', \vr_t(\alpha'),y_t)} \leq L_1 \cdot \norm*{\alpha' - \alpha}$. However, in General Stackelberg Games one can only guarantee that 
\begin{equation}\label{eq:ub-loss}
\abs*{\ell(\alpha, \vr_t(\alpha),y_t) - \ell(\alpha', \vr_t(\alpha'),y_t)} \leq \max \left\{L_1 \cdot \norm*{\alpha' - \alpha}, L_2 \cdot \norm*{\vr_t(\alpha') - \vr_t(\alpha)} \right\}
\end{equation} 

Using Equation~\eqref{eq:ub-loss}, we show that there are some meaningful Stackelberg settings where $\norm*{\vr_t(\alpha') - \vr_t(\alpha)}$ can be upper bounded by $\norm*{\alpha' - \alpha}$ multiplied by a constant. For example, from well known results in convex optimization (for completeness see Lemma~\ref{lem:closeness-maxima}), we can see that this is exactly the case in settings where the agent's utility function, $u_t(\alpha, r)$ is \emph{strongly} concave in $r$, and quasilinear\footnote{Quasilinearity in $\alpha$ establishes that $L_{f,g}$ which is used by Lemma~\ref{lem:closeness-maxima} will be linear in $\norm*{\alpha' - \alpha}$.} in $\alpha$.

\begin{lemma}[Closeness of Maxima of Strongly Concave Functions (folklore)]\label{lem:closeness-maxima}
Let functions $f: \calX \mapsto \bbR, g: \calX \mapsto \bbR$ be two multidimensional, $1/\eta_c$-strongly concave functions with respect to some norm $|| \cdot ||$. Let $h(\vx) = f(\vx) - g(\vx), \vx \in \calX$ be $L_{f,g}$-Lipschitz\footnote{We use the subscript $f,g$ in the Lipschitzness constant to denote the fact that it depends on the two functions $f$ and $g$.} with respect to the same norm $||\cdot ||$. Then, for the maxima of the two functions: $\mu_f = \arg\max_{\vx \in \calX}f(\vx)$ and $\mu_g = \arg\max_{\vx \in \calX}g(\vx)$ it holds that: 
\begin{equation}
||\mu_f - \mu_g|| \leq L_{f,g}\cdot\eta_c
\end{equation}
\end{lemma}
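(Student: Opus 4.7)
The plan is to exploit the quadratic curvature that strong concavity provides around a maximum, combined with the Lipschitz control on the difference $h = f - g$. The starting observation is that if $f$ is $1/\eta_c$-strongly concave and $\mu_f$ is its maximizer (so first-order optimality conditions hold there, whether via a vanishing gradient in the smooth case or via the subgradient characterization in general), then for every $\vx \in \calX$ one has the quadratic upper bound $f(\vx) \leq f(\mu_f) - \tfrac{1}{2\eta_c}\|\vx - \mu_f\|^2$, and analogously $g(\vx) \leq g(\mu_g) - \tfrac{1}{2\eta_c}\|\vx - \mu_g\|^2$.

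From here the key step is to instantiate these two inequalities at the \emph{other} function's maximizer: plug $\vx = \mu_g$ into the first bound and $\vx = \mu_f$ into the second, then add them. The quadratic terms combine to give
\begin{equation*}
f(\mu_g) + g(\mu_f) \;\leq\; f(\mu_f) + g(\mu_g) \;-\; \frac{1}{\eta_c}\|\mu_f - \mu_g\|^2.
\end{equation*}
Rearranging so that the $f$-terms and $g$-terms group into differences of $h$ gives
\begin{equation*}
\frac{1}{\eta_c}\|\mu_f - \mu_g\|^2 \;\leq\; \bigl(f(\mu_f) - g(\mu_f)\bigr) - \bigl(f(\mu_g) - g(\mu_g)\bigr) \;=\; h(\mu_f) - h(\mu_g).
\end{equation*}

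The final step is to upper bound the right-hand side using the $L_{f,g}$-Lipschitzness of $h$: $h(\mu_f) - h(\mu_g) \leq L_{f,g}\,\|\mu_f - \mu_g\|$. Dividing through by $\|\mu_f - \mu_g\|$ (the bound is trivial when the two maximizers coincide) yields $\|\mu_f - \mu_g\| \leq L_{f,g}\,\eta_c$, which is exactly the claimed inequality. I do not anticipate a substantial obstacle in this argument; the only care required is in the first step, making sure the quadratic upper bound at the maximizer is stated cleanly in terms of the generic norm $\|\cdot\|$ rather than an inner-product norm (which is immediate from the standard definition of strong concavity, namely $f(\lambda \vx + (1-\lambda)\vy) \geq \lambda f(\vx) + (1-\lambda)f(\vy) + \tfrac{1}{2\eta_c}\lambda(1-\lambda)\|\vx-\vy\|^2$, applied with $\vy = \mu_f$ and taking $\lambda \to 0$).
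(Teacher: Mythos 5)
Your proof is correct and follows essentially the same route as the paper's: quadratic upper bounds at each maximizer from strong concavity, cross-evaluation at the other maximizer, summing, and then the Lipschitz bound on $h$ followed by division by $\|\mu_f - \mu_g\|$. In fact your sign bookkeeping is cleaner than the paper's (which writes the chain in terms of $h(\mu_g)-h(\mu_f)$, a nonpositive quantity, where $h(\mu_f)-h(\mu_g)$ is what is actually bounded below by $\tfrac{1}{\eta_c}\|\mu_f-\mu_g\|^2$), so no changes are needed.
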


\begin{proof}
First, we take the Taylor expansion of $f$ around its maximum, $\mu_f$ and use the strong concavity condition:
\begin{align}
f(\vx)  &\leq f(\mu_f) + \langle \nabla f(\mu_f), \vx - \mu_f \rangle - \frac{1}{2\eta}||\mu_f - \vx||^2 &\tag{strong concavity} \nonumber\\
        &= f(\mu_f) - \frac{1}{2\eta}||\mu_f - \vx||^2 &\tag{$\nabla f(\mu_f)=0$, since $\mu_f$ is the maximum} \label{eq:f-min} 
\end{align}
Similarly, by taking the Taylor expansion of $g$ around its maximum and using the strong concavity condition:
\begin{equation}\label{eq:g-min}
g(\vx) \leq g(\mu_g) - \frac{1}{2\eta}||\mu_g - \vx||^2 
\end{equation}
Using the $L_{f,g}$-Lipschitzness of $h(\vx)$ we get: 
\begin{align*}
L_{f,g} \cdot || \mu_g - \mu_f ||   &\geq |h(\mu_g) - h(\mu_f)| \geq h(\mu_g) - h(\mu_f) \\ 
                        &\geq f(\mu_g) - f(\mu_f) + g(\mu_f) - g(\mu_g) \\
                        &\geq \frac{1}{2\eta}||\mu_f - \mu_g||^2 + \frac{1}{2\eta}||\mu_f - \mu_g||^2 &\tag{from Taylor expansion} \\ 
                        &\geq \frac{1}{\eta}||\mu_f - \mu_g||^2
\end{align*}
Dividing both sides with $|| \mu_g - \mu_f ||$ concludes the proof.
\end{proof}

An example of such a utility function in the context of strategic classification (similar to the family of utility functions used in \citep{drsww18}) is presented below. 
\paragraph{Example.} Let $u_t(\alpha, \vr(\alpha), \sigma_t) = \langle \alpha, \vr(\alpha) \rangle - (\vx - \vr(\alpha))^2$. Then, we would like to compute an upper bound on the difference between $\left \|\vr({\alpha}) - \vr\left(\alpha'\right) \right\|$, where $\vr({\alpha}) = \arg \max_{\vz \in \calX; \vx} u_t(\alpha, \vz, \sigma_t)$ and $\vr({\alpha'}) = \arg \max_{\vz \in \calX;\vx} u_t(\alpha', \vz, \sigma_t)$. Following Lemma~\ref{lem:closeness-maxima} we can define functions $f(\vz) = u_t(\alpha, \vz, \sigma_t)$ and $g(\vz) = u_t(\alpha', \vz, \sigma_t)$. Now, observe that function $h(\vz) = f(\vz) - g(\vz)$ is indeed $\|\alpha - \alpha'\|$-Lipschitz (i.e., the Lipschitzness constant depends on the specific actions):  
\begin{equation*}
\left| f(\vy) - g(\vy) - f(\vz) + g(\vz) \right| = \left| \langle \alpha - \alpha', \vy - \vz \rangle \right| \leq \| \alpha - \alpha' \| \cdot \| \vy - \vz \|
\end{equation*}
where the last inequality comes from the Cauchy-Schwartz inequality. Furthermore, observe that both $f(\cdot)$ and $g(\cdot)$ are $\frac{1}{2}$-strongly concave. Therefore, from Lemma~\ref{lem:closeness-maxima} we get that: $$\|\vr({\alpha}) - \vr({\alpha'})\| \leq \frac{\|\alpha - \alpha' \|}{2}$$

\section{Appendix for Section~\ref{sec:grinding-algo}}\label{app:grinding-algo}

\subsection{Notation Reference Tables.}\label{app:notation-tables}

Our model and proof use a lot of notation. For easier reference, we summarize the notation used in our analysis in Tables~\ref{table:notation1} and~\ref{table:notation2}.
\begin{center}
\begin{table}[htbp]
\begin{tabular}{ll}
\toprule
{\bf Variable} & {\bf Description} \\
\midrule \midrule
$d \in \bbN$                        & dimension of the problem \\
$\calA \subseteq [-1, 1]^{d+1}$     & learner's action space \\ 
$\alpha_t \in \calA$                & learner's committed action for round $t$ \\ 
$\calX \subseteq \left([0,1]^d, 1\right)$      & agent's feature vector space \\
$\mathcal{Y} = \{-1, + 1\}$           & labels' space \\ 
$\vx_t \in \calX$                   & agent's feature vector, \emph{as chosen by nature} \\ 
$\sigma_t = (\vx_t, y_t), y_t \in \mathcal{Y}$ & agent's labeled datapoint, \emph{as chosen by nature} \\
$\vr_t(\alpha_t,\sigma_t) \in \calX$ (simplified to $\vr_t(\alpha_t)$)        & agent's \emph{reported} feature vector \\ 
$\hy_t \in \mathcal{Y}$                        & $\vr_t(\alpha_t)$'s label \\ 
$\ell(\alpha_t, \vr_t(\alpha_t),y_t)$ & learner's loss for action $\alpha_t$ against agent's report $\vr_t(\alpha_t)$ \\
$u_t(\alpha_t,\vr_t(\alpha_t), \sigma_t)$   & agent's utility for reporting $\vr_t(\alpha_t)$, when learner commits to $\alpha_t$ \\ 
$R(T)$                                      & learner's \emph{external} regret after $T$ rounds \\
$\calR(T)$                                  & learner's \emph{Stackelberg} regret after $T$ rounds \\ 
$\lambda(A)$                                & Lebesgue measure of measurable space $A$\\ 
\bottomrule
\end{tabular}
\caption{Model Notation Summary}\label{table:notation1}
\end{table}
\end{center}%

\begin{center}
\begin{table}[htbp]
\begin{tabular}{ll}
\toprule
{\bf Variable} & {\bf Description} \\
\midrule \midrule
$\calP_t$                                   & set of active polytopes at round $t$ \\
$\bcalP_t$                                    & set of active point-polytopes at round $t$ \\
$\calD_t$                                   & induced distribution from $2$-step sampling process \\
$\Pr_{t}, \f$                & cdf and pdf of $\calD_t$\\
$\beta_t^u(\alpha_t): \langle \vr_t(\alpha_t), \vw \rangle = 4\sqrt{d} \delta$     & upper boundary hyperplane \\
$\beta_t^l(\alpha_t): \langle \vr_t(\alpha_t),\vw \rangle = -4\sqrt{d} \delta$    & lower boundary hyperplane \\
$H^+\left( \beta_t^u(\alpha) \right)$       & $\alpha' \in H^+\left( \beta_t^u(\alpha) \right)$, if $\langle \vr_t(\alpha), \alpha' \rangle \geq 4\sqrt{d} \delta$ \\
$H^-\left( \beta_t^l(\alpha) \right)$       & $\alpha' \in H^-\left( \beta_t^l(\alpha) \right)$, if $\langle \vr_t(\alpha), \alpha' \rangle \leq -4\sqrt{d} \delta$ \\
$\calP_t^u(\alpha)$                         & upper polytopes set ($p \in \calP_t: p \in H^+\left(\beta_t^u(\alpha)\right)$)\\
$\calP_t^l(\alpha)$                         & lower polytopes set ($p \in \calP_t: p \in H^-\left(\beta_t^l(\alpha)\right))$\\
$\calP_t^m(\alpha)$                         & middle polytopes set ($p \in \calP_t: p \notin \calP_t \setminus \left( \calP_t^l \bigcup \calP_t^l \right)$\\
$\Prin[\alpha], \Prin[p]$ & in-probability for $\alpha$ and $p$ (see Definition~\ref{def:in-oracle})\\
$\up$                                       & polytope ($\notin \bcalP_t$) with smallest Lebesgue measure at round $T$ \\ 
\bottomrule
\end{tabular}
\caption{Notation Summary for Regret Analysis of $\grinder$.}\label{table:notation2}
\end{table}
\end{center}

\subsection{Proof of Theorem~\ref{thm:regr-grind}.}

The proof of Theorem~\ref{thm:regr-grind} follows from a sequence of lemmas and claims presented below. By convention, we call a single point a \emph{point-polytope}, and we denote the set of all point-polytopes by $\bcalP$.

\begin{proposition}\label{cl:1}
The two-stage sampling probability distribution $\calD_t$ is equivalent to a one-stage probability distribution of drawing directly \emph{an action} from density $d\pi_t(\cdot)$.  
\end{proposition}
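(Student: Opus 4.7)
The plan is to unpack both sampling mechanisms at the level of the induced probability measure on $\calA$ and show they agree on every measurable subset $S \subseteq \calA$. Because the polytopes in $\calP_t$ partition $\calA$ (up to measure-zero boundaries), every action $\alpha \in \calA$ belongs to a unique polytope $p(\alpha) \in \calP_t$, so the two-stage description can be rewritten as a single density.

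First, I would compute the two-stage probability directly. For any measurable $S \subseteq \calA$, the law of total probability gives
\begin{equation*}
\Pr\nolimits_{\calD_t}[\alpha_t \in S] \;=\; \sum_{p \in \calP_t} \pi_t(p)\cdot \Pr[\alpha \in S \mid \alpha \sim \texttt{Unif}(p)] \;=\; \sum_{p \in \calP_t} \pi_t(p)\,\frac{\lambda(S \cap p)}{\lambda(p)},
\end{equation*}
using the uniform conditional draw in Step~\ref{step:2stage-sample} of Algorithm~\ref{algo:grinding}. Then I would define the one-stage density
\begin{equation*}
\f(\alpha) \;\eqdef\; \frac{\pi_t(p(\alpha))}{\lambda(p(\alpha))} \;=\; (1-\gamma)\,\frac{q_t(p(\alpha))}{\lambda(p(\alpha))} + \frac{\gamma}{\lambda(\calA)},
\end{equation*}
which is well-defined almost everywhere on $\calA$ (boundaries of polytopes have Lebesgue measure zero), and nonnegative since every summand is. A quick sanity check confirms it integrates to one: $\int_{\calA} \f(\alpha)\,d\alpha = \sum_{p \in \calP_t} \lambda(p)\cdot \pi_t(p)/\lambda(p) = \sum_{p} \pi_t(p) = 1$.

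Next, I would compute the one-stage probability by integrating $\f$ over $S$, decomposing the integral along the partition:
\begin{equation*}
\int_S \f(\alpha)\, d\alpha \;=\; \sum_{p \in \calP_t} \int_{S \cap p} \frac{\pi_t(p)}{\lambda(p)}\, d\alpha \;=\; \sum_{p \in \calP_t} \pi_t(p)\,\frac{\lambda(S \cap p)}{\lambda(p)},
\end{equation*}
which matches the two-stage expression derived above. Since the two probability measures agree on all measurable sets, they coincide as distributions on $\calA$, proving the equivalence.

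The argument is essentially bookkeeping, so I do not expect any real obstacle; the only subtlety is ensuring that the partition property of $\calP_t$ is maintained after every grinding step (so that $p(\alpha)$ is unique up to a null set) and that point-polytopes in $\bcalP_t$ contribute zero Lebesgue mass, so they cause no ambiguity in the density definition. This is already guaranteed by the construction of $\calP_{t+1}$ as the union of upper/lower/middle pieces obtained from disjoint halfspace intersections of polytopes in $\calP_t$.
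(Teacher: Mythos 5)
Your proposal is correct and follows essentially the same route as the paper: both identify the two-stage density at $\alpha \in p$ as $\pi_t(p)/\lambda(p) = (1-\gamma)q_t(p)/\lambda(p) + \gamma/\lambda(\calA)$ and match it to the one-stage density via the relation $q_t(p) = \lambda(p)\, dq_t(\alpha)$. Your version merely phrases the comparison at the level of measures of arbitrary measurable sets rather than directly at the level of densities, which adds rigor but no new idea.
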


\begin{proof}
The one-stage probability distribution that draws an action from $\pi_t$ is equivalent to choosing an action $\alpha \in \calA$ from probability \emph{density} function: $d\pi_t(\alpha) = (1-\gamma)dq_t(\alpha) + \frac{\gamma}{\lambda(\calA)}$. The two-stage probability is: $d\pi_{\calD_t}(\alpha) = \frac{1}{\lambda(p)} \left( (1-\gamma)q_t(p) + \frac{\gamma \lambda(p)}{\lambda(\calA)}\right)$. Since $q_t(p) = \lambda(p)dq_t(\alpha), \forall \alpha \in p$, we get the result. 
\end{proof}

Moving forward we analyze the first and the second moment of the loss $\hell(\alpha,\vr_t(\alpha),y_t)$ for each action $\alpha$, based on the induced probability distribution $\calD_t$, assuming oracle access to $\Prin[\alpha]$. 

\begin{lemma}[First Moment]\label{lem:unbiased}
The estimated loss $\hell(\alpha,\vr_t(\alpha),y_t)$ is an unbiased estimator of the true loss $\ell(\alpha,\vr_t(\alpha),y_t)$, when actions are drawn from the induced probability distribution $\calD_t$.
\end{lemma}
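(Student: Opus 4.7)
The plan is to unfold the definition of $\hat{\ell}$ from Algorithm~\ref{algo:grinding} and show directly that the indicator-over-probability construction produces an importance-weighted estimator whose expectation recovers the true loss. The key observation is that among the three ingredients appearing in $\hat{\ell}(\alpha, \vr_t(\alpha), y_t)$, namely the true loss $\ell(\alpha, \vr_t(\alpha), y_t)$, the normalizing factor $\Prin[\alpha]$, and the indicator $\1\{\alpha \in \calP_{t+1}^u(\alpha_t) \cup \calP_{t+1}^l(\alpha_t)\}$, only the indicator actually depends on the random draw $\alpha_t \sim \calD_t$. The remaining two terms are deterministic given $\sigma_t$: the loss is a function of $\alpha$ and the agent's response to $\alpha$ only, and $\Prin[\alpha]$ is defined as a fixed integral over $\calA$ (Definition~\ref{def:in-oracle}).

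First I would fix an arbitrary $\alpha \in \calA$ and write
\begin{equation*}
\E_{\alpha_t \sim \calD_t}\!\left[\hat{\ell}(\alpha, \vr_t(\alpha), y_t)\right] = \frac{\ell(\alpha, \vr_t(\alpha), y_t)}{\Prin[\alpha]}\cdot \Pr_{\alpha_t \sim \calD_t}\!\left[\alpha \in \calP_{t+1}^u(\alpha_t) \cup \calP_{t+1}^l(\alpha_t)\right],
\end{equation*}
where I have pulled out the constants. The only nontrivial step is identifying the last probability with $\Prin[\alpha]$ itself. For that, I would rewrite the event $\{\alpha \in \calP_{t+1}^u(\alpha_t) \cup \calP_{t+1}^l(\alpha_t)\}$ in terms of the halfspaces that define these polytope sets; by construction $\alpha$ lies in $\calP_{t+1}^u(\alpha_t)$ iff $\alpha \in H^+(\beta_t^u(\alpha_t))$, and analogously for $\calP_{t+1}^l(\alpha_t)$. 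Using Proposition~\ref{cl:1} to replace the two-stage draw by the equivalent density $d\pi_t$, the probability of this event matches exactly the integral appearing in Definition~\ref{def:in-oracle}, yielding $\Prin[\alpha]$.

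Plugging this back in cancels the normalizer and leaves $\ell(\alpha, \vr_t(\alpha), y_t)$, which is the desired unbiasedness. One small subtlety to handle cleanly is the case $\Prin[\alpha]=0$: the estimator is ill-defined there, but the event that $\alpha$ receives a nonzero update also has zero probability, so the expectation is trivially $0$ and such $\alpha$ can be excluded from the argument (or one adopts the convention $0/0 = 0$). The same argument applies verbatim when $\alpha$ is replaced by a polytope $p \in \calP_{t+1}$, since all actions inside $p$ share the same loss and the same value of $\Prin[\cdot]$ by definition. I do not expect any real obstacle here; the content of the lemma is purely the bookkeeping that $\Prin[\alpha]$ was defined precisely so as to make $\hat{\ell}$ importance-weighted in the standard way.
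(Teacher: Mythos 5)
Your proposal is correct and follows essentially the same route as the paper's proof: the paper likewise writes $\E_{\alpha_t \sim \calD_t}[\hell(\alpha,\vr_t(\alpha),y_t)]$ as an integral of the pdf $\f(\alpha')$ against the indicator that $\alpha$ gets updated by $\alpha'$, divided by $\Prin[\alpha]$, and observes that this integral is by definition exactly $\Prin[\alpha]$, so the normalizer cancels. Your added remarks (invoking Proposition~\ref{cl:1} explicitly to justify the one-stage density and handling the degenerate case $\Prin[\alpha]=0$) are consistent with, and slightly more careful than, the paper's one-line computation.
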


\begin{proof}
For all the actions $\alpha \in \calA$, given Proposition~\ref{cl:1}, it holds that: 
\begin{align*}
\E_{\alpha_t \sim \calD_t} \left[ \hell(\alpha,\vr_t(\alpha),y_t) \right] &= \int_{\calA} \f\left(\alpha'\right) \frac{\ell(\alpha,\vr_t(\alpha),y_t)\1\left\{\alpha \in N^{out}(\alpha')\right\}}{\Prin[\alpha]}d\alpha'= \ell(\alpha,\vr_t(\alpha),y_t)
\end{align*}
\end{proof}

\begin{lemma}[Second Moment]\label{lem:sec-mom}
For the second moment of the estimated loss $\hell(\alpha,\vr_t(\alpha),y_t)$ with respect to the induced probability distribution $\calD_t$ it holds that: 
\begin{equation*}
\E_{\alpha_t \sim \calD_t} \left[ \hell(\alpha,\vr_t(\alpha),y_t)^2 \right] = \frac{\ell(\alpha,\vr_t(\alpha),y_t)^2}{\Prin\left[\alpha\right]} \leq \frac{1}{\Prin\left[\alpha\right]}
\end{equation*}
\end{lemma}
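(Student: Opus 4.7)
The plan is to mirror the first moment computation in Lemma~\ref{lem:unbiased}, but square the estimator before integrating. Recall that $\hell(\alpha, \vr_t(\alpha), y_t)$ equals $\ell(\alpha, \vr_t(\alpha), y_t)/\Prin[\alpha]$ on the event that the drawn $\alpha_t$ allows the learner to infer the loss at $\alpha$ (i.e., $\alpha \in N^{out}(\alpha_t)$, meaning $\alpha$ lies either in the upper or in the lower polytope set induced by $\alpha_t$), and equals $0$ otherwise. Hence $\hell(\alpha, \vr_t(\alpha), y_t)^2$ has the same indicator structure, only with the factor $\ell(\alpha, \vr_t(\alpha), y_t)/\Prin[\alpha]$ squared.

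Concretely, I would invoke Proposition~\ref{cl:1} to move from the two-stage sampling distribution to the one-stage density $\f$ on $\calA$, and then write
\begin{equation*}
\E_{\alpha_t \sim \calD_t}\!\left[\hell(\alpha,\vr_t(\alpha),y_t)^2\right] = \int_{\calA} \f(\alpha') \, \frac{\ell(\alpha,\vr_t(\alpha),y_t)^2 \, \1\{\alpha \in N^{out}(\alpha')\}}{\Prin[\alpha]^2} \, d\alpha'.
\end{equation*}
Pulling the deterministic factor $\ell(\alpha,\vr_t(\alpha),y_t)^2/\Prin[\alpha]^2$ outside of the integral, what remains is $\int_\calA \f(\alpha') \, \1\{\alpha \in N^{out}(\alpha')\} \, d\alpha'$, which by the very definition of the in-oracle in Definition~\ref{def:in-oracle} equals $\Prin[\alpha]$. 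This yields the claimed equality $\ell(\alpha,\vr_t(\alpha),y_t)^2/\Prin[\alpha]$.

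For the inequality, I would simply use that $\ell(\cdot,\cdot,\cdot)$ is the binary $0/1$ loss (as defined in Protocol~\ref{protocol}), so $\ell(\alpha,\vr_t(\alpha),y_t) \in \{0,1\}$ and therefore $\ell(\alpha,\vr_t(\alpha),y_t)^2 \le 1$. There is no real obstacle here; the only thing to be careful about is matching the indicator $\1\{\alpha \in N^{out}(\alpha')\}$ in the definition of $\hell$ with the integrand defining $\Prin[\alpha]$, which is immediate since both describe the same event that $\alpha$ falls in either the upper or the lower halfspace determined by the boundary hyperplanes $\beta_t^u(\alpha')$, $\beta_t^l(\alpha')$.
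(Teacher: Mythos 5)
Your proposal is correct and follows the paper's own one-line computation exactly: invoke Proposition~\ref{cl:1} to pass to the one-stage density $\f$, integrate the squared estimator, recognize $\int_{\calA}\f(\alpha')\1\{\alpha\in N^{out}(\alpha')\}\,d\alpha'=\Prin[\alpha]$, and bound the binary loss by $1$. The only difference is that you spell out the cancellation and the final inequality, which the paper leaves implicit.
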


\begin{proof}
For all the actions $\alpha \in \calA$, given Claim~\ref{cl:1}, it holds that: 
\begin{align*}
\E_{\alpha_t \sim \calD_t} \left[ \hell(\alpha,\vr_t(\alpha),y_t)^2 \right] &= \int_{\calA} \f\left(\alpha'\right) \frac{\ell(\alpha,\vr_t(\alpha),y_t)^2\1\left\{\alpha \in N^{out}(\alpha')\right\}}{\Prin[\alpha]^2}d\alpha'= \frac{\ell(\alpha,\vr_t(\alpha),y_t)^2}{\Prin\left[\alpha\right]}\leq \frac{1}{\Prin\left[\alpha\right]}
\end{align*}
\end{proof}

\begin{lemma}\label{lem:var}
Let $\up(t) = \arg\min_{p \in \calP_t \setminus \bcalP_t} \lambda(p)$ be the polytope with the smallest Lebesgue measure (excluding point-polytopes) after $t$ rounds. Then, the following inequality holds:
\begin{equation*}
\E_{\alpha_t \sim \calD_t} \left[\frac{1}{\Prin[\alpha_t]} \right] \leq 4\log\left(\frac{4\lambda\left(\calA\right)\cdot\abs*{\calP_{t, \sigma_t}^u \bigcup \calP_{t, \sigma_t}^l}}{\gamma \lambda\left(\up(t)\right)}\right) + \lambda \left(\calP_{t, \sigma_t}^m \right)
\end{equation*}
\end{lemma}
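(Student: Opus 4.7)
The plan is to follow the two-term decomposition already sketched in the proof of Theorem~\ref{thm:regr-grind}. Using Proposition~\ref{cl:1} to replace the two-stage sampling by the density $\f$, I would write
\begin{equation*}
\E_{\alpha_t\sim\calD_t}\!\left[\tfrac{1}{\Prin[\alpha_t]}\right]
= \int_{\calA}\tfrac{\f(\alpha)}{\Prin[\alpha]}\,d\alpha
= \underbrace{\int_{\bigcup(\calP_{t,\sigma_t}^u\cup\calP_{t,\sigma_t}^l)}\tfrac{\f(\alpha)}{\Prin[\alpha]}\,d\alpha}_{Q_1}
\;+\;\underbrace{\int_{\bigcup\calP_{t,\sigma_t}^m}\tfrac{\f(\alpha)}{\Prin[\alpha]}\,d\alpha}_{Q_2},
\end{equation*}
and handle the two terms separately. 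The bound on $Q_2$ is the easier of the two. Because \grinder's margin $4\sqrt{d}\delta$ is strictly more conservative than the true $2\delta$ boundary that defines $\calP_{t,\sigma_t}^m$, every action $\alpha$ that lies in some $\sigma_t$-induced middle polytope is certain to be an ``updater'' whenever $\grinder$ samples it (i.e.\ $\alpha \in N^{out}(\alpha)$), which forces $\Prin[\alpha]\ge \f(\alpha)$ on this region after one more use of the exploration floor. Plugging this into the integrand collapses $Q_2$ to at most $\lambda(\calP_{t,\sigma_t}^m)$.

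The bulk of the work is bounding $Q_1$. For any $\alpha \in \calP_{t,\sigma_t}^u \cup \calP_{t,\sigma_t}^l$, I would first argue that whenever another action $\alpha'$ from the same set is played, the agent's misreport $\vr_t(\alpha')$ is physically constrained to $\calB_\delta(\vx_t)$, so $\alpha$ necessarily lands in $H^+(\beta_t^u(\alpha'))\cup H^-(\beta_t^l(\alpha'))$. Therefore $\Prin[\alpha]\ge \sum_{p\in\calP_{t,\sigma_t}^u\cup\calP_{t,\sigma_t}^l}\pi_t(p)$; in feedback-graph language, the polytopes in $\calP_{t,\sigma_t}^u\cup\calP_{t,\sigma_t}^l$ form a clique with respect to the observation relation. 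The plan is then to switch from integrating over infinitely many actions to a finite sum over these ``revealing'' polytopes, grouping actions by the polytope that contains them.

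At this point I would invoke the polytope-level analogue of \citep[Lem.~5]{NCBDK15} stated as Lemma~\ref{lem:graph-th}, applied to the clique above. The uniform exploration term guarantees $\pi_t(p)\ge \gamma\lambda(p)/\lambda(\calA)$ for every $p$, and since no polytope in the partition (other than point-polytopes, which contribute nothing to the integral) can be smaller than $\up(t)$, the smallest weight in the sum is at least $\gamma\lambda(\up(t))/\lambda(\calA)$. Combining the clique structure, the explicit exploration floor, and the graph-theoretic lemma yields
\begin{equation*}
Q_1 \;\le\; 4\log\!\left(\frac{4\lambda(\calA)\cdot|\calP_{t,\sigma_t}^u\cup\calP_{t,\sigma_t}^l|}{\gamma\,\lambda(\up(t))}\right),
\end{equation*}
which, added to the $Q_2$ bound above, gives exactly the claimed inequality.

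The main obstacle I anticipate is the transition from the continuous integral $\int f_t(\alpha)/\Prin[\alpha]\,d\alpha$ to a finite, polytope-indexed sum of the form to which the graph-theoretic lemma applies: one has to carefully account for the fact that within a single polytope $p$ the density $\f$ may not be uniform across rounds (because of past partitioning), and to verify that the Lebesgue-measure reweighting in Step~\ref{step:mwu-update} of Algorithm~\ref{algo:grinding} justifies treating the polytope as a single ``node'' with weight proportional to $\lambda(p)$. Handling this carefully is what makes the naive application of \citep[Lem.~5]{NCBDK15} (which would give a vacuous $\log|\calA|=\log\infty$ term) sharpen into the benign $\log(\lambda(\calA)/\lambda(\up(t)))$ dependence.
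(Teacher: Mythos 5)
Your proposal is correct and follows essentially the same route as the paper's proof: the same $Q_1 + Q_2$ decomposition, the bound $Q_2 \le \lambda(\calP_{t,\sigma_t}^m)$ via $\Prin[\alpha] \ge \f(\alpha)$ on the middle region, and the bound on $Q_1$ by passing to the polytope-level clique feedback graph, using the exploration floor $\pi_t(p) \ge \gamma\lambda(p)/\lambda(\calA) \ge \gamma\lambda(\up(t))/\lambda(\calA)$, and invoking Lemma~\ref{lem:graph-th} with independence number $1$. The ``obstacle'' you flag at the end is resolved exactly as you suspect: the paper chains $\Prin[\alpha] \ge \Prin[p_\alpha]$ and uses $\int_p \f(\alpha)\,d\alpha = \pi_t(p)$, so no uniformity of $\f$ within a polytope is needed.
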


\begin{proof}
By definition, we expand the term: $\E_{\alpha \sim \calD_t} \left[ \frac{1}{\Prin[\alpha_t]}\right]$ as follows:

\begin{align*}
\E_{\alpha_t \sim \calD_t} \left[\frac{1}{\Prin[\alpha_t]} \right] &= \int_{\calA} \frac{\f(\alpha)}{\Prin[\alpha]} d\alpha \\
&= \underbrace{\int_{\bigcup \left(\calP_{t, \sigma_t}^u \cup \calP_{t, \sigma_t}^l\right)} \frac{\f(\alpha)}{\Prin[\alpha]} d\alpha}_{Q_1} + \underbrace{\int_{\bigcup\calP_{t, \sigma_t}^m} \frac{\f(\alpha)}{\Prin[\alpha]} d\alpha}_{Q_2} \numberthis{\label{eq:var}}
\end{align*}
where by integrating over $\bigcup \calP$ we denote the integral over all \emph{actions} that belong in some polytope from the set $\calP$. In the right hand side of Equation~\eqref{eq:var}, term $Q_2$ is relatively easier to analyze. Due to the conservative estimates of the \emph{true} middle space (i.e., the actions such that $\dist (\alpha, \vx_t) \leq \delta$), the set of polytopes $\calP_{t,\sigma_t}^m$ contains \emph{all} the actions that actually belong in the $\sigma_t$-induced middle space, plus some other actions for which the agent could not have misreported, due to their $\delta$-boundedness. Now, for all the actions that actually belong in the $\sigma_t$-induced middle space, it holds that they only get information (i.e., get updated) when they are chosen by the algorithm, while for the rest of the actions that have ended up in our middle space, they could have been updated by other actions as well. Thus, it holds that: 
\begin{equation*}
\forall \alpha \in \bigcup \calP_{t,\sigma_t}^m: \Prin[\alpha] \geq \f(\alpha)
\end{equation*}
As a result: 
\begin{equation}\label{eq:rhs2}
Q_2 = \int_{\bigcup \calP_{t,\sigma_t}^m} \frac{\f(\alpha)}{\Prin[\alpha]} d\alpha \leq \int_{\bigcup \calP_{t,\sigma_t}^m} \frac{\f(\alpha)}{\f(\alpha)} d\alpha = \lambda\left(\calP_{t,\sigma_t}^m\right)
\end{equation}
Moving forward, we turn our attention to term $Q_1$. Assume now that an action $\alpha$ belongs in a polytope $p_{\alpha}$. Then, there are (weakly) more actions that can potentially update action $\alpha$, than the whole polytope in which it belongs, $p_{\alpha}$; indeed, in order to update the polytope, one must make sure that every action within it is updateable. As a result, $\Prin[\alpha] \geq \Prin[p_{\alpha}]$. Using this in Equation~\eqref{eq:var} we get that the first term of the RHS of the variance is upper bounded by: 
\begin{equation}\label{eq:var2}
Q_1 \leq \sum_{p \in \calP_{t,\sigma_t}^u \cup \calP_{t,\sigma_t}^l} \int_{p} \frac{\f(\alpha)}{\Prin[p]} d\alpha 
\end{equation}
Further, let $\Prin[p]_{u,l}$ be the part of $\Prin[p]$ that depends only in the updates that stem from actions in either the upper or the lower polytopes sets. As such: $\Prin[p]_{u,l} \leq \Prin[p]$ and the term in Equation~\eqref{eq:var2} can be upper bounded by:
\begin{equation}\label{eq:var3}
Q_1 \leq \sum_{p \in \calP_{t,\sigma_t}^u \cup \calP_{t,\sigma_t}^l} \frac{1}{\Prin[p]_{u,l}} \int_{p} \f(\alpha) d\alpha 
\end{equation}
where we have also used the fact that we gain oracle access to quantity $\Prin\left[p \right]_{u,l}$ and therefore, we treat it as a constant in the integral. Observe now that the term $\int_{p} \f(\alpha) d\alpha$ corresponds to the total probability that the action $\alpha_t$, which is chosen from the induced probability distribution $\calD_t$, belongs to polytope $p$, i.e., it is equal to $\pi_t(p)$. Hence, the upper bound in Equation~\eqref{eq:var3} can be relaxed to: 
\begin{equation}\label{eq:var4}
Q_1 \leq \sum_{p \in \calP_{t,\sigma_t}^u \cup \calP_{t,\sigma_t}^l} \frac{\pi_t(p)}{\Prin[p]_{u,l}} 
\end{equation}
As we have explained before, $\pi_t(p) = 0$, for $p \in \bcalP_t$ and as a result, we can disregard point-polytopes from our consideration for the rest of this proof. We now upper bound this term by using the graph-theoretic lemma of \citet[Lemma~5]{NCBDK15}, which we provide below for completeness. 

\begin{lemma}[{{\cite[Lemma~5]{NCBDK15}}}]\label{lem:graph-th}
Let $G = (V,E)$ be a directed graph with $|V| = K$, in which each node $i \in V$ is assigned a positive weight $w_i$ lower bounded by a positive scalar $\eps \in (0, 1/2)$, i.e., $w_i \geq \eps, \forall i \in V$. If $\sum_{i \in V}w_i \leq 1$ then, denoting by $\alpha^G$ the independence number of $G$ we have that: $$\sum_{i \in V}\frac{w_i}{w_i + \sum_{j \in N^{in}(i)}w_j} \leq 4\alpha^G\frac{4K}{\alpha^G \eps}$$
\end{lemma}

Observe that all the actions within the $\sigma_t$-induced upper and the lower polytopes set form the following feedback graph: each node corresponds to a polytope from one of the sets $\calP_{t,\sigma_t}^u, \calP_{t,\sigma_t}^l$. So the total number of nodes is at most $\abs*{\calP_{t,\sigma_t}^u \cup \calP_{t,\sigma_t}^l}$, where by $|S|$ we denote the cardinality of a set $S$. Each edge $(i,j)$ corresponds to \emph{information passing} from node $i$ to node $j$, i.e., the directed edge $(i,j)$ exists when the loss for actions of polytope $j$ can be computed by just observing the loss for action from the polytope $i$. However, for each action belonging in a polytope among the $\sigma_t$-induced upper and lower polytopes sets, we know that the agent could not possibly misreport, due to him being myopically rational and $\delta$-bounded, and as a result, the loss for all the actions within the upper and the lower polytopes sets can be computed! As a result, the independence number of this feedback graph is $\alpha^G = 1$. Using the fact that each polytope $p$ is chosen with probability at least $\pi_t(p) \geq \gamma \frac{\lambda(p)}{\lambda\left(\calA\right)} \geq \gamma \frac{\lambda(\up(t))}{\lambda\left(\calA\right)}$ we can apply Lemma~\ref{lem:graph-th} for $\eps =\gamma \frac{\lambda(\up(t))}{\lambda\left(\calA\right)}$ and $\alpha^G = 1$ and obtain: 
\begin{equation*}  
Q_1 \leq 4\log\left(\frac{4\lambda\left(\calA\right)\cdot\abs*{\calP_{t,\sigma_t}^u \bigcup \calP_{t,\sigma_t}^l}}{\lambda \left(\up(t)\right) \cdot \gamma}\right)
\end{equation*}
Summing up the upper bounds for $Q_1$ and $Q_2$ we get:
\begin{equation*}  
\E_{\alpha_t \sim \calD_t} \left[\frac{1}{\Prin[\alpha_t]} \right] \leq 4\log\left(\frac{4\lambda\left(\calA\right)\cdot\abs*{\calP_{t,\sigma_t}^u \cup \calP_{t,\sigma_t}^l}}{\lambda \left(\up(t)\right) \cdot \gamma}\right) + \lambda \left(\calP_{t,\sigma_t}^m \right)
\end{equation*}
\end{proof}

\begin{lemma}[Second Order Regret Bound]\label{lem:sec-order}
Let $q_1, \dots, q_T$ be the probability distribution over the polytopes defined by in Step~\ref{step:mwu-update} of Algorithm~\ref{algo:grinding} for the estimated losses $\hell(\alpha,\vr_t(\alpha),y_t), t \in [T]$.  Then, the second order regret bound induced by $\grinder$ is: 
\begin{equation}\label{eq:sec-order-pol}
\sum_{t=1}^T \sum_{p \in \calP_{t+1}} q_t(p) \hell(p,\vr_t(p),y_t) - \sum_{t=1}^T \hell(\alphast, \vr_t(\alphast),y_t) \leq \frac{\eta}{2} \sum_{t=1}^T\sum_{p \in \calP_{t+1}}q_t(p)\hell(p,\vr_t(p),y_t)^2 +\frac{1}{\eta}\log \left( \frac{\lambda\left(\calA \right)}{\lambda(\underline{p})}\right) 
\end{equation}
where $\up$ is the polytope with the smallest Lebesgue measure in the finest partition of space $\calA$: $\up = \arg\min_{p \in \calP_T \setminus \bcalP_T} \lambda(p)$.
\end{lemma}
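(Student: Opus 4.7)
The plan is to carry out the classical potential-function argument for exponential-weights algorithms, but in the polytope setting, where weights carry Lebesgue measure and the partition $\calP_t$ refines over time. Define the potential $W_t = \sum_{p \in \calP_t} w_t(p)$, so $W_1 = \sum_{p \in \calP_1} \lambda(p) = \lambda(\calA)$ and $q_t(p) = w_t(p)/W_t$. I would sandwich $\log(W_{T+1}/W_1)$ between an upper bound coming from the multiplicative update and a lower bound coming from the comparator $\alphast$.

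For the upper bound, the key structural observation about $\grinder$ is that within any polytope that has not yet been split, $\hell(\cdot,\vr_t(\cdot),y_t)$ is constant; and when $p \in \calP_t$ is split into sub-polytopes $p_1,\dots,p_k \in \calP_{t+1}$, each $p_j$ inherits the same cumulative-loss history as $p$ and $\sum_j \lambda(p_j) = \lambda(p)$. Consequently $\sum_j w_{t+1}(p_j) = w_t(p)\exp(-\eta \hell(p,\vr_t(p),y_t))$, so the splitting step is mass-preserving and
\[
\frac{W_{t+1}}{W_t} \;=\; \sum_{p \in \calP_t} q_t(p)\,\exp\bigl(-\eta\,\hell(p,\vr_t(p),y_t)\bigr).
\]
Applying $e^{-x} \le 1 - x + x^2/2$ for $x\ge 0$ (valid since the estimated losses are nonnegative) and then $\log(1+y)\le y$, and summing telescopically over $t=1,\dots,T$, yields
\[
\log\frac{W_{T+1}}{W_1} \;\le\; -\eta\sum_{t=1}^T\sum_{p \in \calP_{t+1}} q_t(p)\,\hell(p,\vr_t(p),y_t) \;+\; \frac{\eta^2}{2}\sum_{t=1}^T\sum_{p \in \calP_{t+1}} q_t(p)\,\hell(p,\vr_t(p),y_t)^2 .
\]

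For the lower bound, let $p^\star \in \calP_{T+1}$ be the polytope of the final partition containing $\alphast$. Because $\alphast \in p^\star$ and $\hell$ is constant on every (unsplit) polytope, the cumulative loss recorded in $w_{T+1}(p^\star)$ equals $\sum_{t=1}^T \hell(\alphast,\vr_t(\alphast),y_t)$. Moreover $\lambda(p^\star) \ge \lambda(\up)$ by the definition of $\up$ as the smallest-measure polytope of $\calP_T\setminus\bcalP_T$. Therefore
\[
W_{T+1} \;\ge\; w_{T+1}(p^\star) \;=\; \lambda(p^\star)\,\exp\!\Bigl(-\eta\sum_{t=1}^T \hell(\alphast,\vr_t(\alphast),y_t)\Bigr) \;\ge\; \lambda(\up)\,\exp\!\Bigl(-\eta\sum_{t=1}^T \hell(\alphast,\vr_t(\alphast),y_t)\Bigr),
\]
so $\log(W_{T+1}/W_1) \ge \log(\lambda(\up)/\lambda(\calA)) - \eta\sum_{t}\hell(\alphast,\vr_t(\alphast),y_t)$. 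Combining the two bounds, rearranging, and dividing by $\eta$ yields exactly~\eqref{eq:sec-order-pol}.

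The main obstacle is the bookkeeping around the refining partition: one must verify carefully that (i)~the per-round multiplicative update $w_{t+1}(p) = \lambda(p)\exp(-\eta\sum_{\tau\le t}\hell(p,\vr_\tau(p),y_\tau))$ composes correctly when $p$ arises from splitting a polytope of $\calP_t$, so that the ratio $W_{t+1}/W_t$ really is the expectation of $\exp(-\eta\hell)$ under $q_t$, and (ii)~the comparator $\alphast$ indeed accumulates the losses $\hell(\alphast,\vr_t(\alphast),y_t)$ through every intermediate polytope it lives in, which follows from the constancy of $\hell$ on each unsplit region. Both facts are structural consequences of \grinder's design (lines~\ref{step:mwu-update} of Algorithm~\ref{algo:grinding} and the loss-estimator constancy noted in the algorithm overview), and once they are in place the remainder is the standard Freund--Schapire calculation.
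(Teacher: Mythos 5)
Your proof is correct and follows essentially the same route as the paper's: the same potential $W_t=\sum_{p\in\calP_t}w_t(p)$, the same sandwich of $\log(W_{T+1}/W_1)$ via $e^{-x}\le 1-x+x^2/2$ and $\log(1+y)\le y$ on one side and the comparator's polytope $p^\star$ with $\lambda(p^\star)\ge\lambda(\up)$ on the other. The paper writes the per-round bound as an integral over $\calA$ with density $q_t(\alpha)$ rather than a sum over polytopes, but given the mass-preservation and constancy-of-$\hell$ facts you identify, the two formulations are identical.
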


\begin{proof}
Let $W_t = \sum_{p \in \calP_t}w_t(p)$. We upper and lower bound the quantity $Q = \sum_{t=1}^T\log \frac{W_{t+1}}{W_t}$. For the lower bound: 
\begin{equation}\label{eq:before-lb}
Q = \sum_{t=1}^T \log \left(\frac{W_{t+1}}{W_t}\right) = \log \left(\frac{W_T}{W_1}\right)
\end{equation}
Observe now that in $t=1$ there only exists one polytope (the whole $[-1,1]^{d+1}$ space), with a total weight of $\lambda(\calA)$ and a probability of $1$. In other words, all the actions within this polytope have the same weight, which is equal to $1$ (uniformly weighted). As a result, $\log W_1 = \log \left(\sum_{p \in \calP_1} \int_{\calA} 1 d\alpha \right)= \log \left(\lambda\left(\calA\right) \right)$. For term $\log W_T$ we have: 
\begin{align*}
\log W_T &= \log \left(\sum_{p \in \calP_T}w_T(p)\right) = \log \left(\int_{\calA} w_T(\alpha)d\alpha \right)\\ 
&= \log \left(\sum_{p \in \calP_T \setminus \bcalP_T}\lambda(p)\exp\left(- \eta\sum_{t=1}^T \hell(p,\vr_t(p),y_t)\right)  +  \int_{\bigcup \bcalP_T} \exp\left(-\eta\sum_{t=1}^T \hell(\alpha,\vr_t(\alpha),y_t)\right) d\alpha\right)\numberthis{\label{eq:prev}}
\end{align*} 
where the last equality is due to the fact that not further grinded polytopes have maintained the \emph{same} estimated loss, $\hell$, for \emph{all} their containing points at each round $t$ and we denote by $\bcalP_T$ the set of point-polytopes contained in $\calP_T$. 

Since the horizon $T$ is finite, set $\bcalP_T$ is essentially a set of points, and it has a Lebesgue measure of $0$. Hence,
\begin{equation*}
\int_{\bigcup \bcalP_T} \exp\left(-\eta \sum_{t=1}^T \hell(\alpha,\vr_t(\alpha),y_t)\right) d\alpha = 0
\end{equation*}

Let $\alphast = \arg\min_{\alpha \in \calA}\sum_{t=1}^T \hell(\alpha,\vr_t(\alpha),y_t)$ (i.e., the best fixed action in hindsight among the \emph{all} actions after $T$ rounds, irrespective of whether it belongs to $\bigcup \bcalP_T$ or $\bigcup \calP_T \setminus \bcalP_T$) and $\up \in \calP_T \setminus \bcalP_T$ be the polytope with the smallest Lebesgue measure in $\calP_T \setminus \bcalP_T$ (i.e., excluding point-polytopes). Then, denoting by $p^* \in \calP_T$ the polytope where $\alphast$ belongs to, among the set of active polytopes $\calP_T$, Equation~\eqref{eq:prev} becomes can be lower bounded as follows: 
\begin{align*}
\log W_T &= \log \left(\sum_{p \in \calP_T \setminus \bcalP_T}\lambda(p)\exp\left(- \eta\sum_{t=1}^T \hell(p,\vr_t(p),y_t)\right) \right)\\
&\geq \log \left(\lambda(\up)\sum_{p \in \calP_T \setminus \bcalP_T}\exp\left(- \eta\sum_{t=1}^T \hell(p,\vr_t(p),y_t)\right) \right) &\tag{$\lambda(p) \geq \lambda(\up), \forall p \in \calP_T \setminus \bcalP_T$} \\
&\geq \log \left(\lambda(\up) \cdot \exp\left(- \eta\sum_{t=1}^T \hell \left(p^*,\vr_t\left(p^*\right),y_t \right)\right) \right) &\tag{$e^{-x} \geq 0, \forall x$} \\
&= \log \left(\lambda\left( \up \right) \cdot \exp \left(- \eta \sum_{t=1}^T\hell(\alphast, \vr_t(\alphast),y_t) \right)\right) = \log \left( \lambda\left(\up \right)\right) - \eta \sum_{t=1}^T \hell\left(\alphast, \vr_t\left(\alphast\right),y_t\right)\numberthis{\label{eq:WT-lb}} 
\end{align*}
As a result: 
\begin{equation}\label{eq:lb}
Q = \log W_T - \log W_1 \geq \log \left(\frac{\lambda \left( \up \right)}{\lambda(\calA) }\right) -  \eta\sum_{t=1}^T \hell(\alphast, \vr_t(\alphast),y_t)
\end{equation} 
We move on to the upper bound of $Q$ now. Upper bounding quantity $\log \frac{W_{t+1}}{W_t}$ we get: 
\begin{align*}
\log &\left( \frac{W_{t+1}}{W_t} \right) = \log \left(\frac{\int_{\calA} w_t(\alpha)\exp\left(-\eta \hell(\alpha,\vr_t(\alpha),y_t) \right)d\alpha}{W_t}\right) \\
&= \log \left(\int_{\calA}q_t(\alpha) \exp \left(-\eta \hell(\alpha, \vr_t(\alpha),y_t) \right) d\alpha \right) \\
&\leq \log \left( \int_{\calA} q_t(\alpha) \left(1 - \eta \hell(\alpha,\vr_t(\alpha),y_t) + \frac{\eta^2}{2} \hell(\alpha, \vr_t(\alpha),y_t)^2 \right) d\alpha \right) \tag{$e^{-x} \leq 1-x+\frac{x^2}{2}, x\in [0,1]$} \\
&\leq \log \left( 1 - \eta \int_{\calA} q_t(\alpha)\hell(\alpha,\vr_t(\alpha),y_t)d\alpha + \frac{\eta^2}{2} \int_{\calA} q_t(\alpha)\hell(\alpha,\vr_t(\alpha),y_t)^2 d\alpha \right) \tag{$\int_{\calA} q_t(\alpha) d\alpha = 1$} \\ 
&\leq - \eta \int_{\calA} q_t(\alpha)\hell(\alpha,\vr_t(\alpha),y_t)d\alpha + \frac{\eta^2}{2} \int_{\calA} q_t(\alpha)\hell(\alpha,\vr_t(\alpha),y_t)^2d\alpha \tag{$\log(1-x) \leq x, x \leq 0$}
\end{align*}
Summing up for the $T$ rounds the latter becomes: 
\begin{equation}\label{eq:up-bound}
\sum_{t=1}^T \log \left( \frac{W_{t+1}}{W_t} \right) \leq - \sum_{t =1}^T \eta \int_{\calA} q_t(\alpha)\hell(\alpha,\vr_t(\alpha),y_t)d\alpha + \sum_{t=1}^T\frac{\eta^2}{2} \int_{\calA}q_t(\alpha)\hell(\alpha,\vr_t(\alpha),y_t)^2d\alpha
\end{equation}
Combining the upper and lower bounds of Equations~\eqref{eq:lb} and~\eqref{eq:up-bound} we get that: 
\begin{equation*}
\sum_{t=1}^T \int_{\calA} q_t(\alpha) \hell(\alpha,\vr_t(\alpha),y_t)d\alpha - \sum_{t=1}^T \hell(\alphast, \vr_t(\alphast),y_t) \leq \frac{\eta}{2}\sum_{t=1}^T  \int_{\calA} q_t(\alpha)\hell(\alpha,\vr_t(\alpha),y_t)^2d\alpha +\frac{1}{\eta}\log \left( \frac{\lambda\left(\calA\right)}{\lambda\left(\up \right)}\right) 
\end{equation*}
\end{proof}

We are now ready for the proof of Theorem~\ref{thm:regr-grind}.

\begin{proof}[Proof of Theorem~\ref{thm:regr-grind}]
By taking the expectation with respect to distribution $\calD_t$ in Lemma~\ref{lem:sec-order} we get that: 
\begin{align*}
\sum_{t=1}^T \int_{\calA} q_t(\alpha) \E_{\calD_t} \left[\hell(\alpha,\vr_t(\alpha),y_t) \right]d\alpha &- \sum_{t=1}^T \E_{\calD_t} \left[\hell(\alphast, \vr_t(\alphast),y_t)\right] \leq \\
&\leq \frac{\eta}{2}\sum_{t=1}^T \int_{\calA} q_t(\alpha)\E_{\calD_t} \left[\hell(\alpha,\vr_t(\alpha),y_t)^2\right]d\alpha + \frac{1}{\eta}\log \left( \frac{\lambda\left(\calA \right)}{\lambda\left( \up \right)} \right)
\end{align*}
Combining Lemmas~\ref{lem:unbiased}, \ref{lem:sec-mom} with the latter we get: 
\begin{align*}
\sum_{t=1}^T \int_{\calA} &q_t(\alpha) \ell(\alpha,\vr_t(\alpha),y_t)d\alpha - \sum_{t=1}^T \ell(\alphast, \vr_t(\alphast),y_t) \\
&\leq \sum_{t=1}^T \frac{\eta}{2}\int_{\calA} \frac{q_t(\alpha) }{\Prin [\alpha]} d\alpha +\frac{1}{\eta}\log \left( \frac{\lambda\left(\calA \right)}{\lambda\left( \up \right)} \right) \\
&\leq \sum_{t=1}^T \eta \int_{\calA} \frac{\pi_t(\alpha) }{\Prin [\alpha]} d\alpha +\frac{1}{\eta}\log \left( \frac{\lambda\left(\calA \right)}{\lambda\left( \up \right)} \right) &\tag{$\pi_t(\alpha) \geq (1 - \gamma) q_t(\alpha)$ and $\gamma \leq \frac{1}{2}$} \\
&\leq \sum_{t=1}^T \eta \left( 4\log \left( \frac{4\lambda\left( \calA \right) \abs*{\calP_{t,\sigma_t}^u \bigcup \calP_{t,\sigma_t}^l}}{\gamma\cdot \lambda\left(\up(t)\right)}\right) + \lambda\left(\calP_{t,\sigma_t}^m \right)\right)+ \frac{1}{\eta}\log \left( \frac{\lambda\left(\calA \right)}{\lambda\left( \up \right)} \right) &\tag{Lemma~\ref{lem:var}}
\end{align*}
Using the fact that $\int_{\calA} \pi_t(\alpha)d\alpha \leq \int_{\calA} q_t(\alpha)d\alpha + \gamma$, the latter becomes:  
\begin{equation*}
\calR(T) \leq \gamma T + \eta \sum_{t=1}^T \left( 4\log \left( \frac{4\lambda\left( \calA \right) \abs*{\calP_{t,\sigma_t}^u \bigcup \calP_{t,\sigma_t}^l}}{\gamma \cdot \lambda\left(\up(t)\right)}\right) + \lambda\left(\calP_{t,\sigma_t}^m \right)\right)+ \frac{1}{\eta}\log \left( \frac{\lambda\left(\calA \right)}{\lambda\left( \up \right)} \right) 
\end{equation*}
Setting $\gamma = \eta$:
\begin{equation*}
\calR(T) \leq \eta \sum_{t=1}^T \left(1 + 4\log \left( \frac{4\lambda\left( \calA \right) \abs*{\calP_{t,\sigma_t}^u \bigcup \calP_{t,\sigma_t}^l}}{\eta \cdot \lambda\left(\up(t)\right)}\right) + \lambda\left(\calP_{t,\sigma_t}^m \right)  \right) +\frac{1}{\eta} \log \left( \frac{\lambda\left(\calA \right)}{\lambda\left( \up \right)} \right)
\end{equation*}
which can be relaxed to: 
\begin{align*}
&\calR(T) \leq \eta \sum_{t=1}^T \left(1 + 4\log \left( \frac{4\lambda\left( \calA \right) \abs*{\calP_{t,\sigma_t}^u \bigcup \calP_{t,\sigma_t}^l}}{\lambda\left(\up(t)\right)}\right) + \lambda\left(\calP_{t,\sigma_t}^m \right)  \right) +\frac{1}{\eta} \log \left( \frac{\lambda\left(\calA \right)}{\lambda\left( \up \right)} \right)\\
&\leq \eta \sum_{t=1}^T \left(1 + 4\log \left( \frac{4\lambda\left( \calA \right) \abs*{\calP_{t,\sigma_t}^u \bigcup \calP_{t,\sigma_t}^l} T}{\lambda\left(\up\right)}\right) + \lambda\left(\calP_{t,\sigma_t}^m \right)  \right) +\frac{1}{\eta} \log \left( \frac{\lambda\left(\calA \right)}{\lambda\left( \up \right)} \right) &\tag{$\lambda(\up(t)) \geq \lambda(\up)$} \\
&\leq \eta \cdot \max_{t \in [T]}\left\{1 + 4\log \left( \frac{4\lambda\left( \calA \right) \abs*{\calP_{t,\sigma_t}^u \bigcup \calP_{t,\sigma_t}^l} T}{\lambda\left(\up\right)}\right) + \lambda\left(\calP_{t,\sigma_t}^m \right)  \right\}\cdot T +\frac{1}{\eta} \log \left( \frac{\lambda\left(\calA \right)}{\lambda\left( \up \right)} \right) 
\end{align*}
Tuning $\eta$ to be $$\eta = \sqrt{\frac{\log \left( \frac{\lambda \left(\calA \right)}{\lambda\left(\up\right)}\right)}{\max_{t \in [T]}\left\{1 + 4\log \left( \frac{4\lambda\left( \calA \right) \abs*{\calP_{t,\sigma_t}^u \bigcup \calP_{t,\sigma_t}^l} T}{\lambda\left(\up\right)}\right) + \lambda\left(\calP_{t,\sigma_t}^m \right)  \right\}\cdot T}}$$ we get that the Stackelberg regret is upper bounded by: 
\begin{equation*}
\calR(T) \leq \calO \left( \sqrt{\max_{t \in [T]}\left\{\lambda \left(\calP_{t,\sigma_t}^m \right) + 4\log \left( \frac{4\lambda\left( \calA \right) \abs*{\calP_{t,\sigma_t}^u \bigcup \calP_{t,\sigma_t}^l}T}{\lambda(\up)} \right) +1 \right\} \cdot \log \left(\frac{\lambda\left(\calA\right)}{\lambda\left( \up \right)}\right) \cdot T }\right)
\end{equation*} 
Since the actions that belong in $\calP_{t,\sigma_t}^m$ are a subset of all the actions in $\calA$, then $\lambda \left(\calP_{t,\sigma_t}^m \right) \leq \lambda(\calA) = 1$. The set of all polytopes is upper bounded by $\frac{\lambda(\calA)}{\lambda(\up)}$ and hence, $\abs*{\calP_{t,\sigma_t}^u \bigcup \calP_{t,\sigma_t}^l} \leq \frac{\lambda(\calA)}{\lambda(\up)}$. Hence, for the Stackelberg regret we have: 
\begin{align*}
\calR(T) &\leq \calO \left(\sqrt{\log \left( \frac{\lambda(\calA)}{\lambda(\up)} T \right) \cdot \log \left(\frac{\lambda(\calA)}{\lambda(\up)} \right) T} \right) \\
&\leq \calO \left(\sqrt{\left(\lambda(\calA) + 1 + 4 \log \left( \frac{4\lambda(\calA)}{\lambda(\up)} \cdot \frac{\lambda(\calA)}{\lambda(\up)} \cdot T\right) \right) \cdot \log \left(\frac{\lambda(\calA)}{\lambda(\up)} \right) T }\right) \\ 
&\leq \calO \left(\sqrt{\left(\lambda(\calA) + 1 + 8 \log \left( \frac{2\lambda(\calA)}{\lambda(\up)} \cdot T\right) \right) \cdot \log \left(\frac{\lambda(\calA)}{\lambda(\up)} \right) T }\right) \\ 
&\leq \calO \left(\sqrt{\log \left( \frac{\lambda(\calA)}{\lambda(\up)} T\right) \cdot \log \left(\frac{\lambda(\calA)}{\lambda(\up)} \right) T }\right)  
\end{align*}
where the $\calO(\cdot)$ notation hides constants with respect to the horizon $T$.
\end{proof}

\subsection{Remaining Proofs}

\begin{lemma}\label{lem:apx-oracle}
For $\eps \leq 1/2$, we call {\upshape $\tPr_t[\alpha_t]$} an $\eps$-approximation oracle to $\Prin[\alpha_t]$, if $|\tPr_t [\alpha_t] - \Prin[\alpha_t] | \leq \eps \Prin[\alpha_t]$. Then, $\grinder$ run with oracle $\tPr_t[\cdot]$ instead of $\Prin[\alpha_t]$ achieves Stackelberg regret $\calR(T) \leq \calO(\sqrt{ T \log (T \lambda(\calA)/\lambda(\up))\cdot \log (\lambda(\calA)/\lambda(\up))}) + 2 \eps T$.   
\end{lemma}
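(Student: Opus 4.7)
The plan is to re-run the proof of Theorem~\ref{thm:regr-grind} with the perturbed estimator $\tell(\alpha,\vr_t(\alpha),y_t) = \ell(\alpha,\vr_t(\alpha),y_t)/\tPr_t[\alpha]$ in place of $\hell$, and then pay a controlled additive price when converting the bound on the $\tell$-regret into a bound on the true Stackelberg regret. The hypothesis $|\tPr_t[\alpha]-\Prin[\alpha]|\le \eps \Prin[\alpha]$ with $\eps\le 1/2$ gives the two-sided multiplicative sandwich $(1-\eps)\Prin[\alpha]\le \tPr_t[\alpha]\le (1+\eps)\Prin[\alpha]$, which is the only fact I will use about the oracle.

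First I would redo the first- and second-moment computations analogous to Lemmas~\ref{lem:unbiased} and~\ref{lem:sec-mom}. Computing the expectation under $\calD_t$ exactly as in Lemma~\ref{lem:unbiased} but with $\tPr_t[\alpha]$ in the denominator yields
\begin{equation*}
\E_{\alpha_t\sim\calD_t}[\tell(\alpha,\vr_t(\alpha),y_t)] \;=\; \ell(\alpha,\vr_t(\alpha),y_t)\cdot\frac{\Prin[\alpha]}{\tPr_t[\alpha]},
\end{equation*}
so the estimator is biased by a multiplicative factor in $[1/(1+\eps),1/(1-\eps)]$, and in particular $|\E[\tell(\alpha)]-\ell(\alpha)|\le \tfrac{\eps}{1-\eps}\le 2\eps$ for $\eps\le 1/2$, since $\ell\in[0,1]$. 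The same computation applied to $\tell^2$ gives $\E[\tell^2]=\ell^2\Prin[\alpha]/\tPr_t[\alpha]^2 \le 2/\Prin[\alpha]$, so the variance bound of Lemma~\ref{lem:var} is degraded only by a constant factor of $2$.

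Next I would apply the second-order regret bound (Lemma~\ref{lem:sec-order}) to the sequence $\tell$: the proof of that lemma only uses that the estimated losses are nonnegative and bounded so that $\exp(-\eta\tell)\le 1-\eta\tell+\tfrac{\eta^2}{2}\tell^2$ — both properties survive replacing $\Prin$ by $\tPr_t$ (adjust $\eta$ by a constant if needed). Combining with the modified second-moment bound and tuning $\eta=\gamma$ as in the original proof yields
\begin{equation*}
\sum_{t=1}^T \E\!\left[\sum_{p\in\calP_{t+1}} q_t(p)\,\tell(p,\vr_t(p),y_t)\right] - \sum_{t=1}^T \E\!\left[\tell(\alphast,\vr_t(\alphast),y_t)\right] \;\le\; \calO\!\left(\sqrt{T\log\tfrac{T\lambda(\calA)}{\lambda(\up)}\log\tfrac{\lambda(\calA)}{\lambda(\up)}}\right).
\end{equation*}

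Finally I would pass from $\tell$ back to $\ell$ using the bias bound from the first step. The left-hand cumulative expected loss on $\alpha_t$ equals $\sum_t \E[q_t(p)\tell]=\sum_t \E[\ell(\alpha_t,\vr_t(\alpha_t),y_t)\Prin[\alpha_t]/\tPr_t[\alpha_t]]$, which is within $2\eps T$ of $\sum_t \E[\ell(\alpha_t,\vr_t(\alpha_t),y_t)]$, and similarly the term for $\alphast$ shifts by at most $2\eps T$. Adding these two biases gives the stated additive correction $2\eps T$. The main obstacle is the bookkeeping of the bias: one has to verify that the multiplicative bias on \emph{both} sides of the inequality aligns correctly (so that the error accumulates to $2\eps T$ rather than blowing up multiplicatively with the $\sqrt{T\log\cdot}$ main term), which follows because the two-sided sandwich lets us absorb the factor into an additive $\eps$-per-round error uniformly in $\alpha$.
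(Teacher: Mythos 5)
Your proposal follows essentially the same route as the paper's proof: sandwich the biased first moment between $\ell/(1+\eps)$ and $\ell/(1-\eps)$, absorb the $(1-\eps)^{-2}$ factor in the second moment into the $\calO(\cdot)$ (for $\eps\le 1/2$ the constant is $4$, not $2$, but this is immaterial), observe that Lemmas~\ref{lem:var} and~\ref{lem:sec-order} go through unchanged, and convert the multiplicative bias into an additive $\calO(\eps T)$ term at the end. The only slip is in your final accounting: if the played-action term and the benchmark term each shift by up to $2\eps T$, adding the two shifts yields $4\eps T$, not the claimed $2\eps T$; the paper obtains exactly $2\eps T$ by multiplying the entire inequality by $(1-\eps)$, which makes the benchmark coefficient exactly $1$ so that only the played-action term carries bias $\frac{2\eps}{1+\eps}\,T \leq 2\eps T$.
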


\begin{proof}[Proof of Lemma~\ref{lem:apx-oracle}]
We start by computing how the first moment of estimator $\hell$ changes once you reweigh with $\tPr_t[\cdot]$ rather than $\Prin[\cdot]$: 
\begin{align*}
\E_{\alpha_t \sim \calD_t} \left[ \hell(\alpha,\vr_t(\alpha),y_t) \right] &= \int_{\calA} \f\left(\alpha'\right) \frac{\ell(\alpha,\vr_t(\alpha),y_t)\1\left\{\alpha \in N^{out}(\alpha')\right\}}{\tPr_t[\alpha]}d\alpha' \\ 
&= \ell(\alpha,\vr_t(\alpha),y_t) \cdot \frac{\Prin[\alpha]}{\tPr_t[\alpha]} \numberthis{\label{eq:bef-approx}}
\end{align*}
Since $\tPr_t[\alpha] \geq (1 - \eps) \Prin[\alpha]$, then from Equation~\eqref{eq:bef-approx} we have that: 
\begin{equation}\label{eq:1mom-apx-leq}
\E_{\alpha_t \sim \calD_t} \left[ \hell(\alpha,\vr_t(\alpha),y_t) \right]\leq \frac{\ell(\alpha,\vr_t(\alpha),y_t)}{1- \eps}
\end{equation}
Additionally, since $\tPr_t[\alpha] \leq (1 + \eps) \Prin [\alpha]$, then from Equation~\eqref{eq:bef-approx} we have that: 
\begin{equation}\label{eq:1mom-apx-geq}
\E_{\alpha_t \sim \calD_t} \left[ \hell(\alpha,\vr_t(\alpha),y_t) \right]\geq \frac{\ell(\alpha,\vr_t(\alpha),y_t)}{1 + \eps}
\end{equation}
We turn our attention to the second moment now, for which we will only need an upper bound.
\begin{align*}
\E_{\alpha_t \sim \calD_t} \left[ \hell(\alpha,\vr_t(\alpha),y_t)^2 \right] &= \int_{\calA} \f\left(\alpha'\right) \frac{\ell(\alpha,\vr_t(\alpha),y_t)^2\1\left\{\alpha \in N^{out}(\alpha')\right\}}{\tPr_t[\alpha]^2}d\alpha' = \frac{\ell(\alpha,\vr_t(\alpha),y_t)^2 \Prin\left[\alpha\right]}{\tPr_t\left[\alpha\right]^2} \\
&\leq \frac{1}{(1-\eps)^2 \Prin\left[\alpha\right]} \numberthis{\label{eq:2mom-apx-leq}}
\end{align*}
Lemma~\ref{lem:var} still holds without any change, as it is not affected by the exact definition of $\hell(\cdot)$, and so does Lemma~\ref{lem:sec-order}. Taking expectations in Lemma~\ref{lem:sec-order} we obtain the following: 
\begin{align*}
\sum_{t=1}^T \sum_{p \in \calP_{t+1}} q_t(p) \E \left[ \hell(p,\vr_t(p),y_t)\right] &- \sum_{t=1}^T \E \left[\hell(\alphast, \vr_t(\alphast),y_t)\right] \\
&\leq \frac{\eta}{2} \sum_{t=1}^T\sum_{p \in \calP_{t+1}}q_t(p)\E \left[ \hell(p,\vr_t(p),y_t)^2\right] +\frac{1}{\eta}\log \left( \frac{\lambda\left(\calA \right)}{\lambda(\underline{p})}\right) 
\end{align*}  
Applying Equations~\eqref{eq:1mom-apx-leq},~\eqref{eq:1mom-apx-geq} and~\eqref{eq:2mom-apx-leq} on the latter we obtain: 
\begin{align*}
\frac{1}{1 + \eps} \sum_{t=1}^T \int_{\calA} &q_t(\alpha) \ell(\alpha,\vr_t(\alpha),y_t)d\alpha - \frac{1}{1-\eps}\sum_{t=1}^T \ell(\alphast, \vr_t(\alphast),y_t) \\
&\leq \frac{\eta}{2} \frac{1}{(1-\eps)^2} \sum_{t=1}^T \int_\calA \frac{q_t(\alpha) d\alpha}{\Prin[\alpha]} +\frac{1}{\eta}\log \left( \frac{\lambda\left(\calA \right)}{\lambda(\underline{p})}\right)
\end{align*}
In the latter, applying Lemma~\ref{lem:var}, multiplying both sides by $1 - \eps$ and using the fact that $\eps \leq 1/2$ we obtain the result. 
\end{proof}

\begin{lemma}\label{lem:runtime}
Provided access to algorithms for computing the volume of a polytope and to an in-probability oracle, $\grinder$ has runtime complexity $\calO (T^d )$. 
\end{lemma}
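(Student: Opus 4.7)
The plan is to bound the total number of polytopes that $\grinder$ ever maintains across $T$ rounds, and then argue that, given constant-time black-box access to polytope volume and to the in-oracle, the per-polytope bookkeeping is constant, so the runtime is driven entirely by the size of the arrangement.

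First I would observe that at each round $t \in [T]$, $\grinder$ introduces at most two boundary hyperplanes, $\beta_t^u(\alpha_t)$ and $\beta_t^l(\alpha_t)$, into the partition of $\calA$. They share the same normal $\vr_t(\alpha_t)$ and differ only in their intercepts $\pm 4\sqrt{d}\delta$, so they are parallel. Moreover, these are the only hyperplanes that ever touch the partition: $\calP_{t+1}$ is obtained from $\calP_t$ by intersecting each $p\in\calP_t$ with the three slabs defined by $\beta_t^u(\alpha_t)$ and $\beta_t^l(\alpha_t)$. Consequently, $\bigcup_{t\le T}\calP_t$ is the full-dimensional cell complex of an arrangement of at most $2T$ hyperplanes in the $d$-dimensional hyperplane space.

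Next I would invoke the classical hyperplane-arrangement bound (Zaslavsky): an arrangement of $n$ hyperplanes in a $d$-dimensional ambient space has at most $\sum_{k=0}^{d}\binom{n}{k} = \calO(n^{d})$ full-dimensional cells. Plugging in $n = 2T$ yields $|\calP_T| = \calO(T^d)$, which bounds the number of polytopes that ever need to be tracked.

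Finally I would bound the per-round work. At round $t$, the algorithm: (i) computes $\lambda(p)$ and $\Prin[p]$ for each polytope, each being a single black-box call; (ii) samples a polytope and then an action; (iii) splits polytopes against the two new boundary hyperplanes; and (iv) performs the multiplicative weight update of Step~\ref{step:mwu-update}. Because all actions inside a given polytope receive the same estimated loss history, updating $w_{t+1}(p)$ reduces to scaling $\lambda(p)$ by the exponential factor, which is constant-time per polytope. Hence every step costs $\calO(|\calP_t|)$, and summing over rounds against the global bound $|\calP_t|=\calO(T^d)$ gives total runtime $\calO(T^d)$.

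The one subtlety I expect will need justification is that splitting does not blow up the constant hidden in the arrangement bound: rather than counting $\sum_{t\le T}|\calP_t|$ naively as $T \cdot \calO(T^d)$, I would argue amortized that each polytope, once created, is touched $\calO(1)$ times per round only through its scalar weight update, and is \emph{newly split} at most $\calO(T)$ times over its lifetime, so the total number of polytope-creation events is itself $\calO(T^d)$. This, together with the $\calO(1)$ cost of each oracle call, yields the claimed $\calO(T^d)$ bound.
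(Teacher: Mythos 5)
Your proposal takes essentially the same route as the paper: the paper's proof consists exactly of observing that at most $2$ boundary hyperplanes are added per round, so after $T$ rounds one has an arrangement of $2T$ hyperplanes in general position whose number of cells is $O\bigl(\sum_{i=0}^{d}\tbinom{2T}{i}\bigr)=O(T^d/d!)$, and then identifying the runtime with this count of activated polytopes given black-box volume and in-probability oracles. Your per-round accounting goes beyond what the paper writes down, but note that your amortization does not actually close the gap you raise: Step~\ref{step:mwu-update} re-weights \emph{every} polytope in $\calP_{t+1}$ at \emph{every} round, so the naive operation count is $\sum_{t\le T}|\calP_t|=\calO(T^{d+1})$, and ``touched $\calO(1)$ times per round'' over a lifetime of up to $T$ rounds is $\calO(T)$ touches per polytope, not $\calO(1)$. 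The paper sidesteps this entirely by declaring the complexity to be ``dependent solely on the number of polytopes that get activated,'' so your argument is no less rigorous than the paper's, but the honest conclusion of your accounting would be $\calO(T^{d+1})$ unless the weight updates are performed lazily (e.g., only when a polytope is sampled or split).
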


\begin{proof}[Proof of Lemma~\ref{lem:runtime}]
With access to algorithms that compute the volume of a polytope and to an in-probability oracle, the complexity of $\grinder$ is dependent solely on the number of polytopes that get activated in the worst case. The latter depends on the number of new boundary hyperplanes that we introduce in the action space $\calA$ at each round.

If the sequence of real feature vectors $\{\vx_t\}_{t=1}^T$ is chosen adversarially, the number of new hyperplanes added in each round in $\calA$ is $2$. So, \emph{in the worst case}, after $T$ rounds we have $2T$ hyperplanes in general position in a $d$-dimensional space, which from \citet{Z75,S04} are: \[ \left|\calP_t\right| = O\left(\sum_{i=0}^d {2T \choose i} \right) = O\left(\frac{T^d}{d!}\right)\]
\end{proof}

\section{Appendix for Section~\ref{sec:lb}}\label{app:lb}

\begin{lemma}\label{lem:sqrt}
Fix a $\vr = \vx = (u)^d$, where by $(u)^d$ we denote the $d$-dimensional vector with $u \in [1/4,3/4]$ in every dimension. There exists a utility model for the agents, and a pair of adversarial environments $U$ and $L$ such that $\vr_t(\alpha) = \vx_t = \vx, \forall \alpha \in \calA, \forall t \in [T]$, and the sequence of $y_1, \dots, y_T$ is i.i.d. conditional on the choice of the adversary, such that: 
\begin{equation*}
\max_{\nu \in \{U,L\}} \min_{\alphast \in \calA} \E_{\nu} \left[\sum_{t \in [T]} \ell(\alpha_t, \vr_t(\alpha_t),y_t) - \sum_{t \in [T]} \ell(\alphast,\vr_t(\alphast),y_t) \right] \geq \frac{1}{9\sqrt{2}}\sqrt{T}
\end{equation*}
\end{lemma}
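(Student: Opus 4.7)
The plan is to reduce the question to a standard two-armed stochastic bandit lower bound by forcing the agent to be truthful and exploiting that, for a fixed $\vx$, the binary loss partitions $\calA$ into only two equivalence classes. I take the agent's utility to be $u_t(\alpha,\vz,\sigma_t)=-\|\vz-\vx_t\|$, whose unique maximizer over the $\delta$-ball around $\vx_t$ is $\vx_t$ itself, forcing $\vr_t(\alpha)=\vx_t$ for every $\alpha$; this agent is \dBMR\ vacuously. Nature picks $\vx_t=\vx$ at every round, and the two environments differ only in the i.i.d.\ label distribution: $\mathbb{P}_U[y_t=+1]=1/2+\eps$ under $U$ and $\mathbb{P}_L[y_t=+1]=1/2-\eps$ under $L$, for a parameter $\eps>0$ tuned later. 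Because $\vx$ has strictly positive entries in its first $d$ coordinates, $\calA$ splits into $\calA^{+}=\{\alpha:\langle\alpha,\vx\rangle\geq 0\}$ and $\calA^{-}=\{\alpha:\langle\alpha,\vx\rangle<0\}$, and $\ell(\alpha,\vx,y_t)$ depends on $\alpha$ only through which side of this partition it lies in. The best fixed action in expectation lies in $\calA^{+}$ under $U$ and in $\calA^{-}$ under $L$, each with per-round expected loss $1/2-\eps$.

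Next I would carry out a Le Cam--Pinsker argument. Let $N_U$ denote the expected number of rounds the algorithm plays an action of $\calA^{-}$ under $U$, and $N_L$ the expected number it plays an action of $\calA^{+}$ under $L$. A direct calculation shows that the pseudo-regrets equal $2\eps\,N_U$ and $2\eps\,N_L$ respectively, so it suffices to lower bound $(N_U+N_L)/2$. Writing $\mathbb{Q}_U,\mathbb{Q}_L$ for the joint distributions of the algorithm's observation trace, the chain rule for KL gives $D_{\mathrm{KL}}(\mathbb{Q}_U\,\|\,\mathbb{Q}_L)\leq T\cdot D_{\mathrm{KL}}(\mathrm{Bern}(1/2+\eps)\,\|\,\mathrm{Bern}(1/2-\eps))\leq 4\eps^{2}T/(1-4\eps^{2})$. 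Pinsker then yields $\mathrm{TV}(\mathbb{Q}_U,\mathbb{Q}_L)\leq \eps\sqrt{2T/(1-4\eps^{2})}$. Applying the standard bound $|\E_U S-\E_L S|\leq T\cdot \mathrm{TV}$ to the $[0,T]$-valued statistic ``number of rounds spent in $\calA^{-}$'' gives $N_U+N_L\geq T\bigl(1-\mathrm{TV}(\mathbb{Q}_U,\mathbb{Q}_L)\bigr)$, so the maximum of the two regrets is at least $\eps T\bigl(1-\eps\sqrt{2T/(1-4\eps^{2})}\bigr)$.

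Finally I would tune $\eps$: choosing $\eps=\Theta(1/\sqrt{T})$ balances the $\eps T$ gain against the $\eps^{2}\sqrt{T}$ penalty, and the specific value $\eps=1/(3\sqrt{2T})$ (together with $1-4\eps^{2}\geq 8/9$ for $T$ not too small) yields the claimed lower bound of $\frac{1}{9\sqrt{2}}\sqrt{T}$. The main obstacle is not conceptual---the $\sqrt{T}$ rate follows immediately from Le Cam---but rather constant-chasing: one must use the sharp KL bound for $\mathrm{Bern}(1/2\pm\eps)$, retain the full $(1-\mathrm{TV})$ factor rather than replacing it by a crude constant fraction, and handle the boundary $\langle\alpha,\vx\rangle=0$ carefully so that the identification $N_U+N_L\geq T(1-\mathrm{TV})$ is clean. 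Using Bretagnolle--Huber in place of Pinsker is an equally valid alternative that sometimes produces tidier constants and should also reach the stated bound.
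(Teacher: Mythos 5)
Your proposal is correct and follows essentially the same route as the paper: truthful agents, two label-flipping environments $U,L$, reduction to distinguishing $\mathrm{Bern}(1/2\pm\eps)$ label streams via a KL/Pinsker argument, and tuning $\eps=\Theta(1/\sqrt{T})$; the only cosmetic differences are that the paper compares each of $U,L$ to an intermediate fair-coin environment rather than to each other, and partitions rounds by $\langle\alpha_t,\vx\rangle\gtrless\pm\sqrt{d}\,\delta$ rather than by sign. One small slip: $D_{\mathrm{KL}}\bigl(\mathrm{Bern}(1/2+\eps)\,\|\,\mathrm{Bern}(1/2-\eps)\bigr)=2\eps\log\frac{1+2\eps}{1-2\eps}\approx 8\eps^{2}$, not $4\eps^{2}/(1-4\eps^{2})$; this costs only a constant factor, and with the correct bound your argument still clears the stated $\frac{1}{9\sqrt{2}}\sqrt{T}$ (after retuning $\eps$ it gives roughly $\sqrt{T}/8$).
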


\begin{proof}
We are going to show this for the case where the agents $\forall t \in [T]$ are \emph{truthful}, i.e., they decide to report $\vr_t(\alpha) = \vx_t, \forall \alpha \in \calA, \forall t \in [T]$. Of course, the learner does not know (and cannot infer) that, so fix a $\delta > 0$ for the $\delta$-boundedness of the agents' utility function. We will prove the lemma only for deterministic strategies for the learner. As is customary, the claim for general strategies can be concluded by averaging over the learner's internal randomness and Fubini's theorem.

Fix an $\eps > 0$, and a scalar $u \in [1/4, 3/4]$, and define the adversarial environments as follows: $U$ is such that $y_t = +1$ with probability $1/2 + \eps$ and $y_t = -1$ with probability $1/2-\eps$, and $L$ is such that $y_t = -1$ with probability $1/2 + \eps$ and $y_t = +1$ with probability $1/2-\eps$. This means that under $U$, the majority of times the label is $+1$, and under $L$, the majority of times the label is $-1$. As a result, under $U$, \emph{any} action $\alpha$ such that $\langle \alpha, \vx \rangle \geq 2\delta$ is \emph{optimal} and under $L$, \emph{any} action $\alpha$ such that $\langle \alpha, \vx \rangle \leq - 2\delta$ is \emph{optimal}.

Take a sequence of actions $\alpha_1, \dots, \alpha_T$ and let $T_{\geq \delta}$ denote the \emph{number} of rounds for which $\langle \alpha_t, \vr \rangle \geq \sqrt{d} \delta$, and $T_{\leq -\delta}$ the number of rounds for which $\langle \alpha_t, \vr \rangle \leq -\sqrt{d}\delta$. Since $T_{\leq -\delta} + T_{\geq \delta} \leq T$ we get that: 
\begin{align*}
\E_{U} \left[ \calR(T) \right] &\geq \E_U \left[ \calR(T_{\leq -\delta})\right] \\
&\geq \sum_{t \in \left[ T_{\leq -\delta}\right]} \left[1 \cdot \left(\frac{1}{2}+ \eps \right) - 1 \cdot \left(\frac{1}{2} - \eps \right)\right] \\ 
&\geq 2\eps \E_U\left[ T_{\leq -\delta}\right] &\numberthis{\label{eq:first}}
\end{align*}
where the first inequality is due to the fact that $\ell(\alpha_t, \vx,y_t) = 0 = \ell(\alpha_U^*,\vx,y_t), \forall t \in \left[T_{\geq \delta} \right]$ and any optimal action $\alpha_U^*$ under $U$ as we reasoned before. The second inequality uses the following two facts: first, that $\ell(\alpha_U^*, \vx,y_t) = 1, \forall t \in \left[T_{\leq -\delta}\right]$, i.e., the best fixed action in hindsight when one encounters adversarial environment $U$ is an action that estimates the label of $\vx$ to be $1$. Second, that when playing against environment $U$, a learner incurs loss of $1$ every time that she predicted the label of $\vx$ to be $-1$ (which happens in at least all $T_{\leq -\delta}$ rounds), and the actual label was $1$ (which happens with probability $1/2 + \eps$). Similarly, we also see that 
\begin{equation}\label{eq:sec}
\E_L \left[ \calR(T)\right] \geq 2\eps \E_L \left[ T_{\geq \delta}\right]
\end{equation}
Let $\Pr_U, \Pr_L$ the distributions of $T_{\leq -\delta}, T_{\geq \delta}$ for adversarial environments $U,L$ respectively, and let $\Pr_m$ be the distribution of rounds when $y_t = +1$ with probability $1/2$. From Pinsker's inequality, and denoting by $\kl(p,q)$ the KL-divergence between distributions $p,q$, we have the following: 
\begin{equation}\label{eq:pinsker-U} 
\E_{U}\left[ T_{\leq - \delta} \right] \geq \E_{m} \left[T_{\leq -\delta} \right] - T \sqrt{\frac{\kl(\Pr_U, \Pr_m)}{2}} 
\end{equation}
and 
\begin{equation}\label{eq:pinsker-L} 
\E_{L}\left[ T_{\geq \delta} \right] \geq \E_{m} \left[T_{\geq \delta} \right] - T \sqrt{\frac{\kl(\Pr_U, \Pr_m)}{2}} 
\end{equation}
Then, from the data processing inequality for the KL-divergence we get:
\begin{equation}\label{eq:data-process-U}
\kl\left(\Pr_U, \Pr_m \right) \leq T \kl\left(\Bern\left( \frac{1}{2} + \eps \right), \Bern\left( \frac{1}{2} \right) \right) \leq 4T \eps^2  
\end{equation}
and 
\begin{equation}\label{eq:data-process-L}
\kl\left(\Pr_L, \Pr_m \right) \leq T \kl\left(\Bern\left( \frac{1}{2} + \eps \right), \Bern\left( \frac{1}{2} \right) \right) \leq 4T \eps^2  
\end{equation}
Plugging in Equations~\eqref{eq:data-process-U} and~\eqref{eq:data-process-L} in Equations~\eqref{eq:pinsker-U} and~\eqref{eq:pinsker-L} we get:
\begin{equation*}
\E_{U}\left[ T_{\leq - \delta} \right] \geq \E_{m} \left[T_{\leq -\delta} \right] - T \eps \sqrt{2T}
\end{equation*}
and 
\begin{equation*}
\E_{L}\left[ T_{\geq \delta} \right] \geq \E_{m} \left[T_{\geq \delta} \right] - T \eps \sqrt{2T}
\end{equation*}
Finally, averaging Equations~\eqref{eq:first} and~\eqref{eq:sec} and using the latter two Equations we get:
\begin{equation}
\max_{\nu \in \{U, L\}} \E_\nu \left[ \calR(T) \right) \geq \frac{\E_U \left[ \calR(T) \right] + \E_L \left[ \calR(T) \right]}{2} \geq \eps \left(T - 2\eps T \sqrt{2T} \right)
\end{equation}
Tuning $\eps = \frac{1}{3\sqrt{2T}}$ gives the result. 
\end{proof}

\begin{proof}[Proof of Theorem~\ref{thm:lb}]
We now assume without loss of generality that $2^\kappa = \lambda(\calA)/\lambda(\tp)$ for some constant $\kappa$. Let $\Phi$, such that $\Phi = \log \left( \frac{\lambda(\calA)}{\lambda(\tp)}\right)$, be a set of \emph{phases} created from $\frac{T}{\Phi}$ consecutive rounds in $T$. We create the following problem instance described for clarity in $2$ dimensions, and using truthful\footnote{This way we establish the creation of the $\sigma$-t induced polytopes from the way that we construct the sequence of datapoints.} agents.

First focus on phase $\phi = 1$ with feature vector $\vx_t = \vu = (1/2, 1/2)$, and specify the adversarial environments $U$ and $L$ exactly as in Lemma~\ref{lem:sqrt}. Then, after $T/\Phi$ rounds one of the two adversarial environments must have caused regret of at least $\sqrt{\frac{T}{162\Phi}}$ (Lemma~\ref{lem:sqrt}). If that environment was $U$, then it means that the majority of the labels for $\vu$ is $+1$ and hence, the best-fixed action in hindsight is $\alpha_U^*$ such that $\langle \alpha_U^*, \vu \rangle \geq 2\delta$. So fix the next phase's feature vector to be $\vx_t = \vu = (1/2, 5/8)$. Otherwise, if that environment was $L$, then it means that the majority of the labels for $\vu$ is $-1$ and hence, the best-fixed action in hindsight is $\alpha_L^*$ such that $\langle \alpha_L^*, \vu \rangle \geq 2\delta$ and you should fix the next feature vector to be $\vx_t = \vu = (1/2, 3/8)$. The reason for making this seemingly arbitrary choice is that we need to guarantee that one of the best-fixed actions for all previous phases, has survived and is still in the active set of actions in the current phase.

The general pattern that we follow is the following. At phase $\phi \in [\Phi]$, choose feature vector $\vx_\phi = \left(\frac{1}{2}, \frac{1}{4} \left(1 + \kappa_\phi \cdot 2^\phi \right) \right)$, where $\kappa_\phi$ is a phase-specific constant defined as follows. If at phase $\phi$, the adversarial environment incurring Stackelberg regret $\sqrt{\frac{T}{162\Phi}}$ was environment $U$, then, $\kappa_{\phi + 1} = 2 \kappa_\phi + 1$, else $\kappa_{\phi + 1} = 2\kappa_\phi - 1$. This is enough to establish that at \emph{any} phase $\phi$, there exists an action that would have been the best-fixed for \emph{all} previous phases despite which sequence of adversarial environments $U, L$ occurred. Another way to view this is similar to the polytope partitioning outlined in Figure~\ref{fig:grinding_2d}; presenting these feature vectors, we can guarantee that the polytope that held the best-fixed action so far, is still active. As a result, the regret for all $\Phi$ phases is equal to the sum of regrets of each phase. Additionally, the Lebesgue measure of the smallest polytope is $\up(\phi)$ and is a non-increasing function of the phases, even if it is announced to the learner that $\delta = 0$. As a result,
\begin{equation*}
\E \left[ \sum_{t \in [T]} \ell(\alpha_t, \vr_t(\alpha_t),y_t) \right] - \min_{\alphast \in \calA} \E \left[ \sum_{t \in [T]} \ell(\alphast, \vr_t(\alphast),y_t)\right] \geq \log \left(\frac{\lambda(\calA)}{\lambda(\tp)}\right)\frac{1}{9\sqrt{2}}\sqrt{\frac{T}{ \log \left( \frac{\lambda(\calA)}{\lambda(\tp)}\right)}}
\end{equation*}
\end{proof}

\section{Appendix for Section~\ref{sec:sims}}\label{app:simulations}

\subsection{Implementing $\grinder$ for Continuous Action Spaces}
In order to implement $\grinder$, we used the \texttt{polytope} library\footnote{\url{https://github.com/tulip-control/polytope/tree/master/requirements}}, which is part of the TuLiP python package. Other than some rounding-error fixes, we did not intervene with the core methods of the package. 

In order to implement the $2$-stage action draw method, we first chose a polytope (according to the probability function prescribed by $\grinder$) and then, by using \emph{rejection sampling} from the bounding box around the polytope, we chose the action associated with it. Note that this is equivalent to the theoretical $2$-stage draw.

In order to speed up our algorithm's performance, we also used the heuristic of bounding the allowable volume of any polytope to be greater than or equal to $0.01$, but in all the simulations that we tried, we saw comparable regret results even without the heuristic.  

\subsection{Logistic Regression Oracle}\label{app:oracle1}

In this subsection, we will outline our implementation of the logistic regression algorithm on the agents' past data, which serves as an estimate of the in-probability for each action. For ease of exposition, we provide the description of the oracle for the case of a predefined action set, and subsequently, we outline the way it generalizes to the continuous implementation.   

Before we embark on this, allow us first to observe that we already have a very crude (but potentially useful) lower bound for \emph{every} action $j \in \calA$. Indeed, each action \emph{always} updates itself, and actions that belong in the upper and lower polytope sets are always updated by all actions within these sets. The latter is due to the fact that for any hyperplane chosen within these sets, there is no possible manipulation from the perspective of the agent. We denote this crude lower bound for each action $j \in \calA$ by $c^j$. 

Labels are defined as $l_i^j = 0$ if action $j$ was \emph{not} updated at round $i$\footnote{In other words, action was at a distance less than $2\delta$ from the best-response of the round.}, and $1$ otherwise. As a first step, this oracle computes for each action $j \in \calA$ the probability that each action from $\calA$ updates $j$, by using a logistic regression\footnote{Technically, we run a different logistic regression for every action in $\calA$.} with feature vectors the set $H_{1:t}$, and $L_{1:t}^j$ as the labels. Let $p_i^j, i \in \calA$ correspond to the output probabilities, i.e., $p_i^j$ encodes the probability that action $j$ will be updated by action $i$. The in-probability of action $j$ is ultimately defined as: 
\begin{equation*}
\Pr^{in}[j] = \max \left\{ \sum_{i \in \calA} p_i^j \pi_t[i], c^j\right\}
\end{equation*} 

At a high-level, it is not hard to see how this can generalize to the continuous grinding case; instead of actions, one now uses whole \emph{polytopes}. The implementation, however, becomes significantly messier, as we need to propagate the history of past data for each polytope to its grinded sub-polytopes.

\subsection{Different Utility Function and Distribution of Datapoints}

The utility function that we assume for the agents at this subsection, is similar to the one studied by \citet{drsww18}, specifically: 
\begin{equation}\label{eq:ut2}
u_t(\alpha_t, \vr_t(\alpha_t),y_t) = \delta \cdot \langle \alpha_t, \vr_t(\alpha_t) \rangle - \|\vx_t - \vr_t(\alpha_t)\|_2 
\end{equation}
for values of $\delta = 0.05, 0.10, 0.15, 0.3, 0.5$. Similarly to the paper's main body, we run \grinder against \texttt{EXP3} for a horizon $T=1000$, where each round was repeated for $30$ repetitions. 

Fig.~\ref{fig:linear-distr1} presents the results for the case where the $+1$ labeled points are drawn as $\vx_t \sim (\calN(0.7,0.3), \calN(0.7,0.3))$ and the $-1$ labeled points are drawn from $\vx_t \sim (\calN(0.4, 0.3), \calN(0.4,0.3))$. The performance of \grinder compared to \texttt{EXP3} is similar to the one that we saw in Sec.~\ref{sec:sims} for the case of the different utility function. \grinder outperforms \texttt{EXP3}, and its performance degrades as the power of the agent (i.e., $\delta$) increases. We also see that in this case, the regression oracles are performing slightly worse that the regression oracles for the case of the utility function analyzed in Sec.~\ref{sec:sims}.

\begin{figure}[t!]
\centering
\begin{subfigure}{0.33\textwidth}
    \includegraphics[scale=0.07]{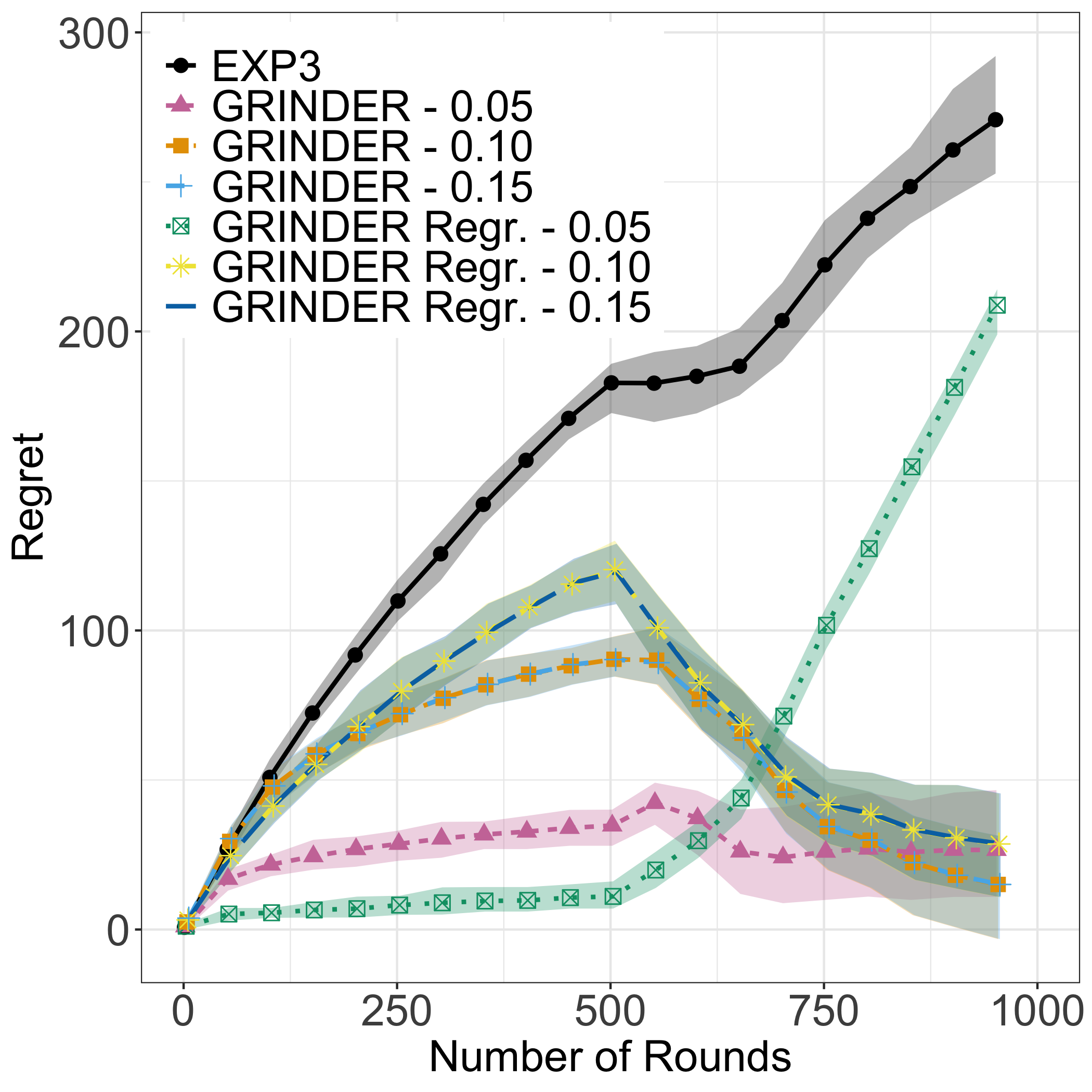}
\end{subfigure}%
\begin{subfigure}{0.33\textwidth}
    \centering
    \includegraphics[scale=0.07]{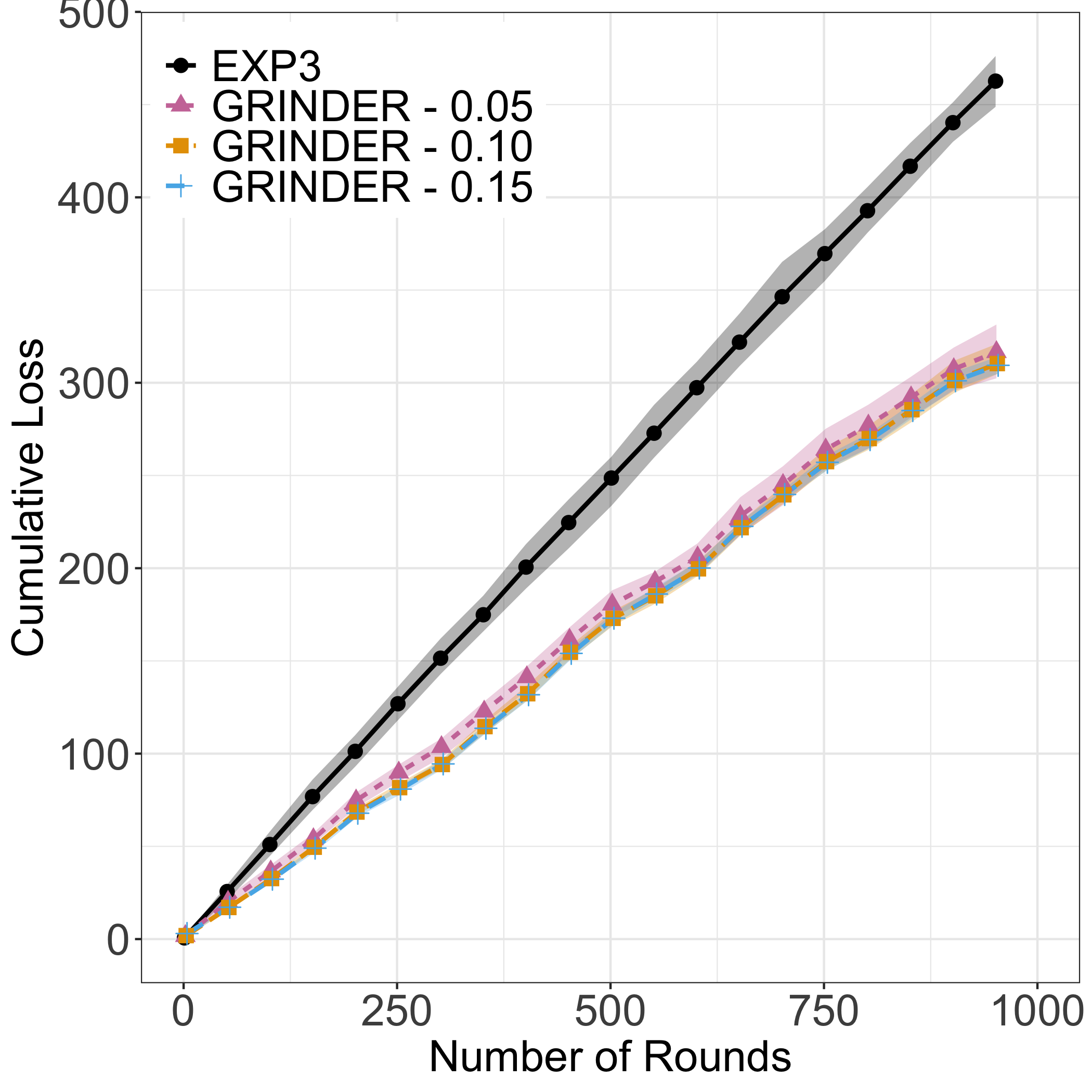}
\end{subfigure}%
\begin{subfigure}{0.33\textwidth}
    \centering
    \includegraphics[scale=0.07]{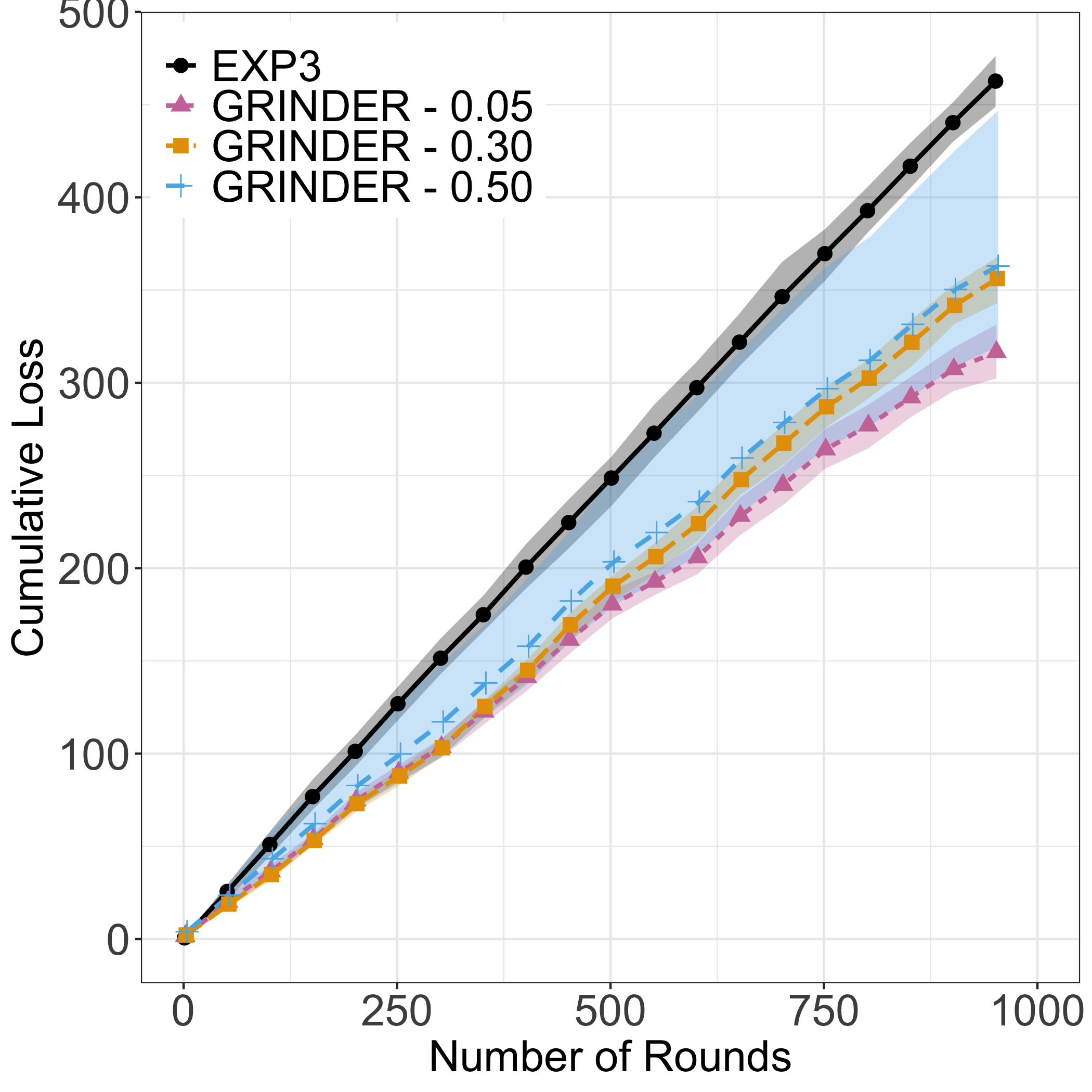}
\end{subfigure}%
\caption{\grinder vs. \texttt{EXP3} for utility function from Eq.~\eqref{eq:ut2}. From left to right: discrete $\calA$ (accurate and regression oracle), continuous $\calA$ with $\delta = 0.05, 0.10, 0.15$ and continuous $\calA$ with $\delta = 0.05, 0.3, 0.5$. Solid lines correspond to average regret/loss, and opaque bands correspond to $10$th and $90$th percentile.}\label{fig:linear-distr1}
\end{figure}

\begin{figure}[htbp]
\centering
\begin{subfigure}{0.33\textwidth}
    \includegraphics[scale=0.07]{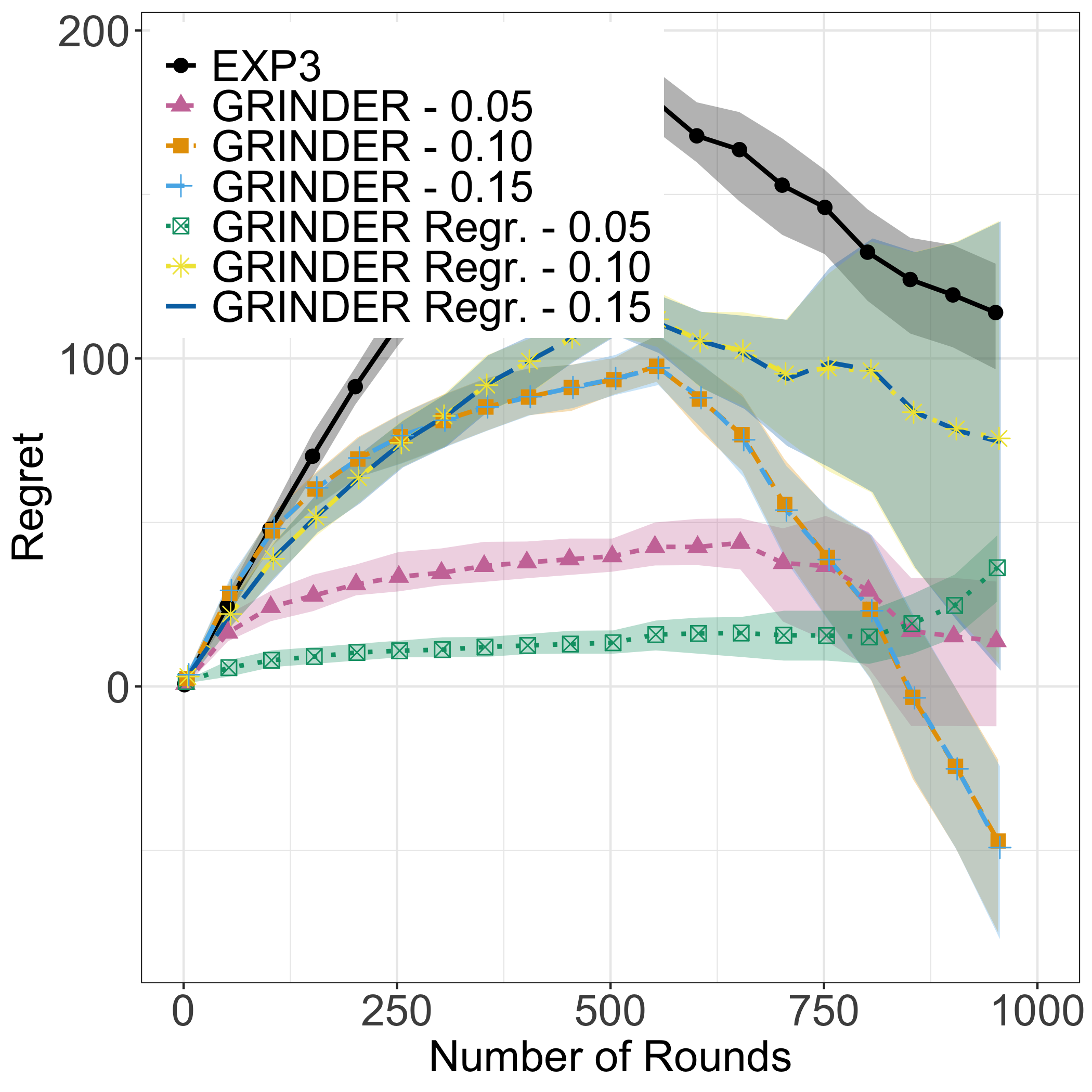}
\end{subfigure}%
\begin{subfigure}{0.33\textwidth}
    \centering
    \includegraphics[scale=0.07]{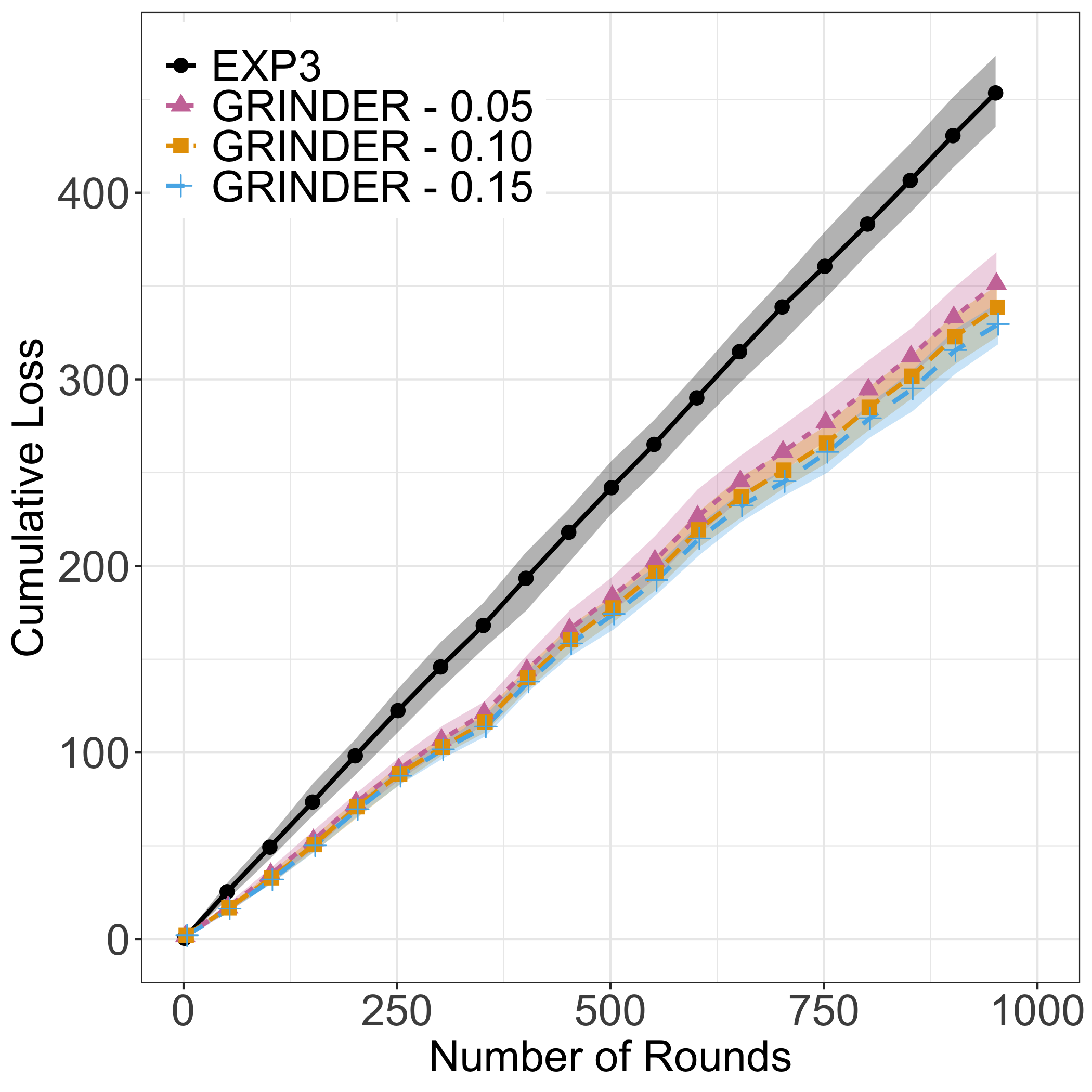}
\end{subfigure}%
\begin{subfigure}{0.33\textwidth}
    \centering
    \includegraphics[scale=0.07]{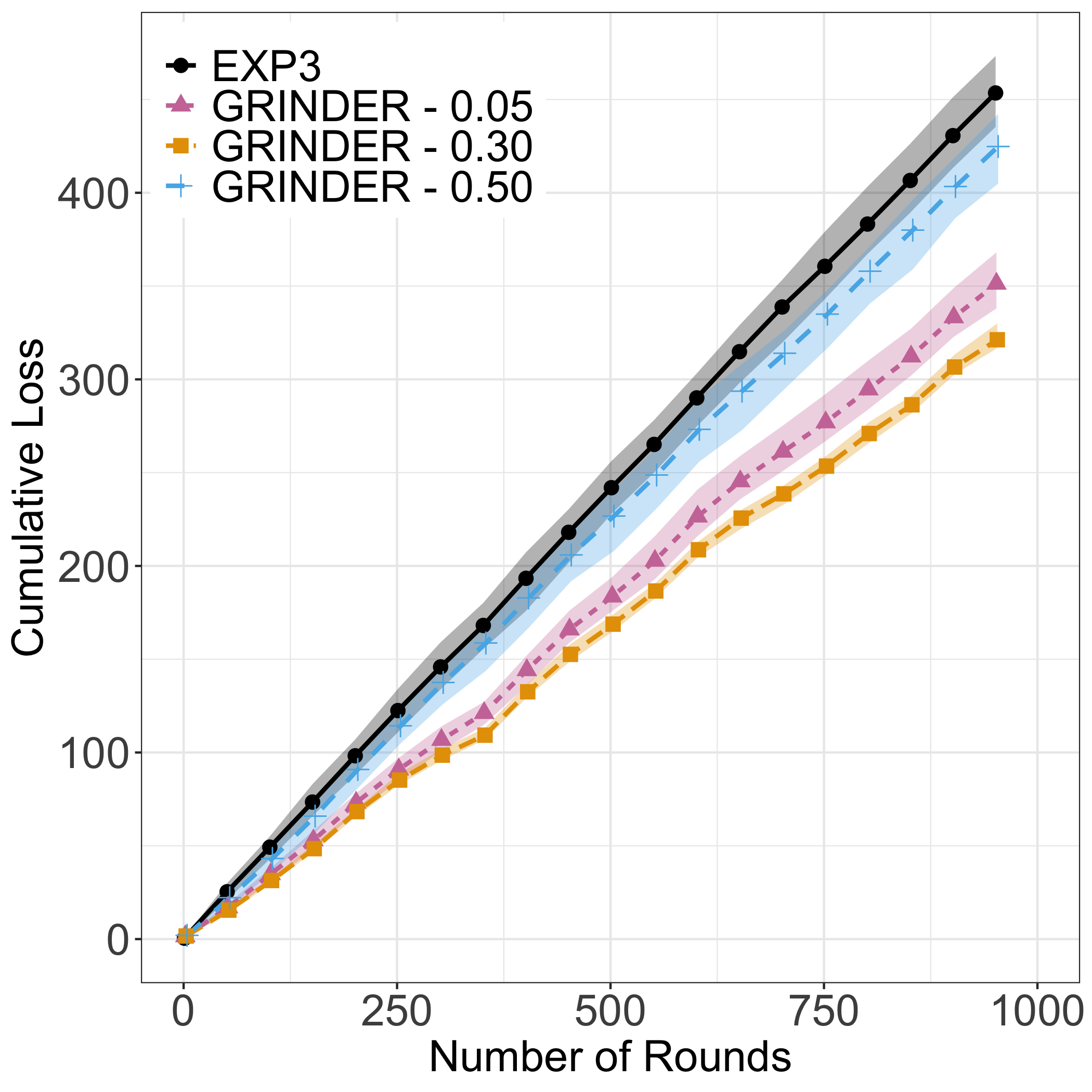}
\end{subfigure}%
\caption{\grinder vs. \texttt{EXP3} for ``harder'' distribution of labels. From left to right: discrete $\calA$ (accurate and regression oracle), continuous $\calA$ with $\delta = 0.05, 0.10, 0.15$ and continuous $\calA$ with $\delta = 0.05, 0.3, 0.5$. Solid lines correspond to average regret/loss, and opaque bands correspond to $10$th and $90$th percentile.}\label{fig:linear-distr2}
\end{figure}

Finally, in Fig.~\ref{fig:linear-distr2} we present the results of our simulations of running \grinder against \texttt{EXP3}, when the agents' utility function is defined by Equation~\eqref{eq:ut2}, and the distribution of labeled points is the following: the $+1$ labeled points are drawn as $\vx_t \sim (\calN(0.6,0.4), \calN(0.4,0.6))$ and the $-1$ labeled points are drawn from $\vx_t \sim (\calN(0.4, 0.6), \calN(0.6,0.4))$. We note that while \grinder still outperforms \texttt{EXP3} its performance has become worse than what we saw in Fig.~\eqref{fig:linear-distr1}. This is due to the fact that in this new distribution of points creates much higher overlap of labels and there are fewer points for which a perfect linear classifier exists. This is also exhibited by the fact that \texttt{EXP3}'s performance is getting better in the horizon of $T$ rounds compared to any single fixed action.

\end{document}